\documentclass[a4paper,12pt]{article}

\usepackage{complexity} 
\usepackage{enumerate}
\usepackage{amsmath, amssymb, amsthm,complexity}
\usepackage{authblk}
\usepackage{todonotes}

\newcommand{\sv}[1]{}
\newcommand{\lv}[1]{#1}

\usepackage{vmargin}
\setmarginsrb{1in}{1in}{1in}{1in}{0mm}{0mm}{0mm}{7mm}

\usepackage{mathtools}
\usepackage{thmtools}
\usepackage{thm-restate}
\usepackage{hyperref}
\usepackage{cleveref}
\usepackage{complexity}
\usepackage{authblk}
\theoremstyle{plain}

\setlength{\marginparwidth}{2cm}
\usepackage{todonotes}
\usepackage{microtype}
\usepackage{multirow}
\usepackage{amsmath,amssymb}
\usepackage{comment}
\usepackage{enumerate}
\usepackage{xspace}
\usepackage{listings}
\usepackage{color}
\usepackage{cite}
\usepackage{graphicx}
\RequirePackage{fancyhdr}
 \usepackage{xcolor}
\usepackage{boxedminipage}
\usepackage[ruled]{algorithm2e}

\newcommand{\mypara}[1]{\smallskip\noindent{\textbf{\sffamily #1} \ }}




\newcommand{\col}{\mathrm{\mathop{col}}}

\newcommand{\nats}{{\mathbb{N}}}
\newcommand{\cO}{{\cal O}}

\newcommand{\cF}{{\mathcal{F}}}

\newcommand{\cR}{\mathcal{R}}
\newcommand{\cI}{\mathcal{I}}
\newcommand{\cA}{\mathcal{A}}

\newcommand{\cH}{\mathcal{H}}

\newcommand{\OPT}{\mbox{\rm OPT}}

\newcommand{\eps}{\varepsilon}









%




\newcommand{\shortversion}[1]{}

\newcommand{\pcvc}{{$p$-Edge-CVC}}

\newcommand{\CVC}{{\sc Connected Vertex Cover}}
\newcommand{\BCVC}{{\sc Biconnected Vertex Cover}}
\newcommand{\PECVC}{{\sc $p$-Edge-Connected Vertex Cover}}

\newcommand{\PVCVC}{{\sc $p$-Connected Vertex Cover}}

\newcommand{\bigoh}{{\mathcal O}}

\newcommand{\nka}{{NP $\subseteq$ coNP/poly}}
\newcommand{\nn}{{\mathbb N}}

\newcommand{\bbF}{{\mathbb F}}

%


\newenvironment{tightcenter}
 {\parskip=0pt\par\nopagebreak\centering}
 {\par\noindent\ignorespacesafterend}

\usepackage{tikz}
\usetikzlibrary{calc}
\usepackage{xargs}
\usepackage{xifthen}
\usepackage{framed}

\usepackage{ctable}

\newlength{\RoundedBoxWidth}
\newsavebox{\GrayRoundedBox}
\newenvironment{GrayBox}[1]%
   {\setlength{\RoundedBoxWidth}{\textwidth-4.5ex}
    \def\boxheading{#1}
    \begin{lrbox}{\GrayRoundedBox}
       \begin{minipage}{\RoundedBoxWidth}%
   }{%
       \end{minipage}
    \end{lrbox}%
    \begin{tightcenter}%
    \begin{tikzpicture}%
       \node(Text)[draw=black!60,fill=white,rounded corners,%
             inner sep=2ex,text width=\RoundedBoxWidth]%
             {\usebox{\GrayRoundedBox}};
        \coordinate(x) at (current bounding box.north west);
        \node [draw=white,rectangle,inner sep=3pt,anchor=north west,fill=white] 
        at ($(x)+(6pt,.75em)$) {\boxheading};
    \end{tikzpicture}
    \end{tightcenter}\vspace{0pt}%
    \ignorespacesafterend
}    

\newenvironment{problem-box}[2][]{\noindent\ignorespaces%
                                \FrameSep=6pt%
                                \parindent=0pt%
                \vspace*{-.5em}
                \ifthenelse{\isempty{#1}}{%
                  \begin{GrayBox}{\textsc{#2}}%
                }{%
                }
                \newcommand\Prob{{\sf Problem:}}%
                \newcommand\Input{{\sf Input:}}%
                \newcommand\Parameter{{\sf Parameter:} }          
                \begin{tabular*}{\textwidth}{@{\hspace{.1em}} >{\itshape} p{1.2cm} p{0.85\textwidth} @{}}%
            }{
                \end{tabular*}%
                \end{GrayBox}%
                \vspace*{-.5em}
                \ignorespacesafterend
            }

\newtheorem{theorem}{\bf Theorem}
\newtheorem{definition}{\bf Definition}
\newtheorem{proposition}{\bf Proposition}

\newtheorem{reduction rule}{\bf Reduction Rule}

\newtheorem{lemma}{\bf Lemma}


\def\DEBUG{true}

\ifdefined\DEBUG{}
\def\rem#1{{\marginpar{\raggedright\scriptsize #1}}}

\newcommand{\magnus}[1]{{{#1}}}
\newcommand{\magnusr}[1]{\rem{\textcolor{blue}{\(\bullet \) #1}}}

\newcommand{\bmp}[1]{{{#1}}}
\newcommand{\bmpr}[1]{\rem{\textcolor{purple}{\(\bullet \) #1}}}

\newcommand{\gutinr}[1]{\rem{\textcolor{orange}{\(\bullet \) #1}}}

\newcommand{\diptar}[1]{\rem{\textcolor{red}{\(\bullet \) #1}}}

\else
\newcommand{\magnus}[1]{#1}
\newcommand{\bmp}[1]{#1}

\newcommand{\magnusr}[1]{}
\newcommand{\bmpr}[1]{}
\newcommand{\gutinr}[1]{}
\newcommand{\diptar}[1]{}
\fi

\pagestyle{plain}

\bibliographystyle{plain}

\DeclareMathOperator{\rank}{rank}

\title{$p$-Edge/Vertex-Connected Vertex Cover: Parameterized and Approximation Algorithms}

\date{}
\author[1]{Carl Einarson}
\author[1]{Gregory Gutin}
\author[2]{Bart M. P. Jansen}
\author[3]{Diptapriyo Majumdar}
\author[1]{Magnus Wahlstr{\"o}m}
\affil[1]{Royal Holloway, University of London, Egham, United Kingdom\\
    \texttt{einarsoncarl@gmail.com,\{g.gutin|diptapriyo.majumdar|magnus.wahlstrom\}@rhul.ac.uk}}
\affil[2]{Eindhoven University of Technology, The Netherlands\\
  \texttt{b.m.p.jansen@tue.nl}}
\affil[3]{Indraprastha Institute of Information Technology Delhi, New Delhi, India\\
	\texttt{diptapriyo@iiitd.ac.in}}


\begin{document}

\maketitle

\begin{abstract}
We introduce and study two natural generalizations of the Connected Vertex Cover (VC) problem: the $p$-Edge-Connected and $p$-Vertex-Connected VC problem (where $p\ge 2$ is a fixed integer).
We obtain an $2^{\cO(pk)}n^{\cO(1)}$-time algorithm for $p$-Edge-Connected VC and an $2^{\cO(k^2)}n^{\cO(1)}$-time algorithm for $p$-Vertex-Connected VC.
Thus, like Connected VC, both \bmp{constrained} VC problems are FPT.  Furthermore, like Connected VC, neither problem admits a polynomial kernel unless {NP $\subseteq$ coNP/poly}, which is highly unlikely.
We prove however that both problems admit time efficient polynomial sized approximate kernelization schemes.
Finally, we describe a $2(p+1)$-approximation algorithm for the $p$-Edge-Connected VC.
The proofs for the new VC problems require more sophisticated arguments than for Connected VC. In particular, for the approximation algorithm we use Gomory-Hu trees and for the approximate kernels a result  on small-size spanning $p$-vertex/edge-connected subgraphs of a $p$-vertex/edge-connected graph by Nishizeki and Poljak (1994) and Nagamochi and Ibaraki (1992). 
\end{abstract}


%
\section{Introduction}
\label{sec:intro}
For a graph $G=(V,E)$, a set $C\subseteq V$ is a {\em vertex cover}
if for every edge $uv\in E$ at least one of the vertices $u,v$
belongs to $C$. The well-known classical {\sc Vertex Cover} problem
is the problem of deciding whether a graph $G$ has a vertex cover of
size at most $k.$ This problem is NP-complete and thus was studied from 
parameterized (it is usually parameterized by $k$) and approximation algorithms view points. 
 {\sc Vertex Cover} and its generalizations have been important in developing basic and advanced methods and approaches 
 for parameterized and approximation algorithms. Thus, it 
 was named the {\em Drosophila} of fixed-parameter algorithmics \cite{Niedermeier06}.
 \magnus{In particular, it is well known that {\sc Vertex Cover} is fixed-parameter tractable, admits a kernel with $2k$ vertices, and has a trivial 2-approximation~\cite{ChenKJ01,NemhauserT75}.}

 \magnus{A well-studied variant of \textsc{Vertex Cover} is \CVC{} (CVC), where the vertex cover is additionally required to be connected.
 This problem is FPT parameterized by the solution size $k$, with the fastest known algorithm running in time $\cO^*(2^k)$ due to Cygan~\cite{Cygan12} after a sequence of improvements\footnote{The $\cO^*$ notation suppresses the polynomial factors.}. 
 CVC also has a classic 2-approximation, implicit in Savage~\cite{Savage82}.
 However, unlike \textsc{Vertex Cover}, CVC was shown to have no polynomial kernel unless {\nka}~\cite{DomLS14}}.
\bmp{Through reductions, this lower bound can be shown to imply that several other variations of \textsc{Vertex Cover}, like the ones we will consider, do not admit polynomial kernels.} 

\magnus{To get around limitations on kernelization, Lokshtanov et al.~\cite{LokshtanovPRS17}, in pioneering work,
 proposed the notion of \emph{approximate kernels}. Although the details are somewhat technical, in essence
 an $\alpha$-approximate kernel can be thought of as a kernel that only preserves solution optimality up to a factor of $\alpha$,
 e.g., from an optimal solution to the output instance we can recover an $\alpha$-approximate solution to the input.
 Lokshtanov et al.~\cite{LokshtanovPRS17} considered several problems that are known not to admit polynomial kernels (unless  {\nka})
 and analyzed the existence of $\alpha$-approximate polynomial kernels.}
In particular, they proved that {\CVC} admits an $\alpha$-approximate polynomial kernel for every fixed $\alpha$.

\magnus{We study two natural variations of CVC under tighter connectivity constraints,
 from the perspectives of FPT algorithms, approximate kernels, and approximation algorithms.
 Let us review the definitions.}
 
A connected graph is called {\em $p$-edge-connected} if it remains connected whenever fewer than $p$ edges are deleted.
A connected graph is called {\em $p$-connected} if it has more than $p$ vertices and it remains connected whenever fewer than $p$ vertices are deleted.
In this paper, we introduce the following two natural generalizations of {\CVC}, the {\PECVC} and {\PVCVC} problems. For both problems, $p$ is a fixed positive integer. 

\begin{problem-box}[]{\PECVC (\pcvc)}
	\Input & An undirected graph $G$, and an integer $k$ \\
	\Parameter & \hspace{2 mm} $k$ \\
	\Prob & Does $G$ have a set of at most $k$ vertices that is a vertex cover and induces a $p$-edge-connected subgraph?
\end{problem-box}

\begin{problem-box}[]{\PVCVC ($p$-CVC)}
	\Input & An undirected graph $G$, and an integer $k$ \\
	\Parameter & \hspace{2 mm} $k$ \\
	\Prob & Does $G$ have a set of at most $k$ vertices that is a vertex cover and induces a $p$-connected subgraph?
\end{problem-box}

To the best of our knowledge, none of these problems were studied in the fields of parameterized and approximation algorithms before. These problems may be of interest in the following security setting.
We have a network whose links (edges) have to be protected by monitors positioned in its nodes against an attacker targeting the links. 
To allow us to overview all the links and inform all other monitors about the attack, the monitors should form a connected vertex cover $C$. 
Higher vertex and edge connectivities of $C$ provide a higher degree of resiliency in the monitoring system when monitors/links can be disabled.

Graph theoretic problems with $2$-vertex/edge-connectivity constraints have been studied before.
Li et al.~\cite{LiYW16} proved structural results for {\sc $2$-Edge-Connected Dominating Set} and {\sc $2$-Node Connected Dominating Set}.
Recently, Nutov \cite{Nutov} 
obtained an approximation algorithm with expected ratio $\bigoh(\log^4 n \cdot \log \log n \cdot (\log \log \log n)^3)$ for {\sc Biconnected Domination}.
Note that no study has investigated $p$-vertex/edge-connectivity constraints for $p > 2$.

\mypara{Our results.} Both {\PECVC} and {\PVCVC} are {\sf NP}-complete; proofs for this follow from the reductions provided in Theorems \ref{thm:pcvc-lowerbound} and \ref{thm:pecvc-lowerbound}, respectively.
It is also not hard to obtain simple FPT algorithms for both {\PECVC} and {\PVCVC}.
Unfortunately, the running time of such algorithms is $\cO^*(2^{\cO(k^2)})$. For the sake of completeness, we give a short proof for the existence of such an algorithm for {\PVCVC}.
However, for  {\PECVC} we can do better: our first main result is a single-exponential fixed-parameter algorithm for {\PECVC} using dynamic programming on matroids and some specific characteristics of $p$-edge-connected subgraphs.

\begin{restatable}{theorem}
{pECVCSinglyExponentialAlgo}
\label{thm:improved-algo-pecvc}
For every fixed $p \geq 2,$  {\PECVC} can be solved in $2^{\cO(pk)}n^{\cO(1)}$
  deterministic time and space.
\end{restatable}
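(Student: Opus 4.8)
The plan is to first pin down the vertex-cover part of a hypothetical solution by branching, and then to search for a $p$-edge-connected completion via a matroid-based dynamic program whose table has size $2^{\cO(pk)}$. I would begin with structural observations about a hypothetical solution $C$ with $|C| \le k$. Since $G[C]$ is $p$-edge-connected, every vertex of $C$ has degree at least $p$ in $G[C]$ and hence in $G$; so every vertex of $G$-degree less than $p$ can safely be deleted (it can never belong to a solution), which in turn forces each of its neighbours into $C$ in order to cover the incident edges, and we abort if this forced set grows beyond $k$. Moreover, by the Nagamochi--Ibaraki / Nishizeki--Poljak theorem on sparse certificates, $G[C]$ contains a spanning $p$-edge-connected subgraph $(C, F)$ with only $\cO(pk)$ edges; and by Menger's theorem $G[C]$ (on at least two vertices) is $p$-edge-connected if and only if $\lambda_{G[C]}(x, y) \ge p$ for all $x, y \in C$. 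These two facts are what make a $2^{\cO(pk)}$-size state space plausible: the certificate carries only $\cO(pk)$ edges, and ``$p$-edge-connectivity'' is a purely cut-theoretic condition.

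To handle the vertex-cover requirement, observe that any solution $C$ contains a \emph{minimal} vertex cover $C'$ of $G$, and the remaining vertices $R := C \setminus C'$ lie inside the independent set $V \setminus C'$. Consequently each $v \in R$ is adjacent in $G[C]$ exactly to its \emph{type} $t(v) := N_G(v) \cap C' \subseteq C'$, a set of size at least $p$, and this neighbourhood does not depend on the rest of $R$. I would enumerate, using the standard branching algorithm for \VC{}, a family of at most $2^k$ vertex covers of $G$ of size at most $k$ that contains every minimal one; this takes $2^{\cO(k)} n^{\cO(1)}$ time, and since a superset of a vertex cover is again a vertex cover, after fixing a candidate $C'$ we no longer need to worry about covering. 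It then remains, for each $C'$, to decide whether there is a set $R \subseteq V \setminus C'$ with $|C' \cup R| \le k$ such that $G[C' \cup R]$ is $p$-edge-connected; one checks directly that this holds if and only if every $v \in R$ has $|t(v)| \ge p$ and $\lambda_{G[C' \cup R]}(x, y) \ge p$ for all $x, y \in C'$. Since adding a vertex only increases edge cuts, this is a \emph{monotone} reachability question, and it suffices to follow the cut behaviour among the at most $k$ vertices of $C'$.

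The technical core is to run this reachability computation with only $2^{\cO(pk)}$ states. The naive state --- the table of pairwise min-cut values among $C'$, each capped at $p$ --- is too large, so instead I would maintain, over the at most $k$ rounds (one per added vertex of $R$), a family of \emph{partial certificates}: edge sets $F'$ of size $\cO(pk)$ that could still be completed to a sparse $p$-edge-connected spanning subgraph on at most $k$ vertices. Whether a partial certificate admits such a completion should be governed by a matroidal invariant of rank $\cO(pk)$ --- intuitively the relevant cut structure of an $\cO(pk)$-edge graph, for instance via its cographic matroid represented over $\mathrm{GF}(2)$ --- so after each round we may replace the current family by a representative family of size $2^{\cO(pk)}$ using the Fomin--Lokshtanov--Panolan--Saurabh algorithm for computing representative families in a bounded-rank linear matroid. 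Each round-transition glues on one new vertex of $V \setminus C'$ together with a choice of at least $p$ incident certificate edges into its type, and there are only $2^{\cO(k)}$ such moves up to equivalence; multiplying by the $\le 2^k$ candidates $C'$ and the $k$ rounds gives the claimed $2^{\cO(pk)} n^{\cO(1)}$ time and space, using $p \ge 2$ to absorb the $2^{\cO(k)}$ factors.

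I expect the main obstacle to be exactly the design and correctness of this state space for $p$-edge-connectivity. Plain connectivity is tamed by the classical rank-based/representative-set reduction; here one must prove the analogous statement, namely that ``admits the same $p$-edge-connected completions'' is captured by a rank-$\cO(pk)$ matroidal invariant of the partial certificate, and that this invariant updates correctly when a fresh star of edges is attached --- equivalently, that the family of extendable partial certificates behaves like the independent sets of a suitable bounded-rank matroid. A secondary, more routine chore is the bookkeeping for degenerate cases (very small $C$, an empty or disconnected $C'$, vertices of $G$-degree exactly $p$) and checking that the preprocessing together with the enumeration of minimal vertex covers really accounts for all solutions, including non-minimal ones.
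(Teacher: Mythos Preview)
Your high-level plan matches the paper exactly: enumerate the at most $2^k$ minimal vertex covers $H$ of size $\le k$, and for each one run a matroid-based dynamic program with representative sets (Theorem~\ref{thm:repset}) to decide whether $H$ extends to a $p$-edge-connected vertex cover of size $k$. The table indices, the vertex-by-vertex extension over $V\setminus H$, and the $2^{\cO(pk)}$ bookkeeping are all as in the paper.

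The genuine gap is precisely where you flag it: the matroid. Your suggestion of tracking the cut structure via a cographic matroid does not yield a workable invariant. The condition ``every cut has size $\ge p$'' is not matroidal in any direct sense, and there is no known rank-$\cO(pk)$ matroid whose independent sets (or bases) are the partial sparse certificates completable to a $p$-edge-connected subgraph. Without such a matroid the representative-set machinery has nothing to act on, so as written your DP has no correctness argument.

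The paper closes this gap by \emph{changing the target}: instead of sparse certificates it uses the characterisation (Lemma~\ref{lemma:agrawal}, due to Agrawal et al.) that $G[C]$ is $p$-edge-connected iff its bidirected version $D_{G[C]}$ contains $p$ pairwise arc-disjoint out-branchings from a fixed root $v_r\in H$. A single out-branching on $k$ vertices is exactly a set of $k-1$ arcs that is simultaneously independent in the graphic matroid and in the out-partition matroid (in-degree $\le 1$, root in-degree $0$); to encode $p$ \emph{arc-disjoint} out-branchings one takes the direct sum of $p$ copies of the graphic matroid, $p$ copies of the out-partition matroid, and one uniform matroid of rank $p(k-1)$ over the arc set (the latter enforces that each arc is used in at most one branching). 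After truncating to rank $3p(k-1)$, a basis of this matroid in which every arc appears in exactly the three layers $M_{2i-1},M_{2i},M_{2p+1}$ for some $i$ corresponds precisely to $p$ arc-disjoint out-branchings on $k$ vertices (Lemmas~\ref{lemma:our-condition} and~\ref{lemma:trunc-rep}). The DP then grows such a basis three elements at a time, first inside $D_G[H]$ and then by adding one vertex of $V\setminus H$ per round together with a choice, for each of its $\le 2|H|$ incident arcs, of a label in $[p]\cup\{\text{unused}\}$; this gives $(p+1)^{2k}=2^{\cO(pk)}$ transitions and a representative family of size $\binom{3p(k-1)}{q}\le 2^{\cO(pk)}$ per slot. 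Once you plug in this matroid, the rest of your outline goes through essentially verbatim.
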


Our algorithm for {\sc \pcvc} is as follows.
First, we enumerate all minimal vertex covers of $G$ of size at most $k$.
The number of such vertex covers is at most $2^k$, and they can be enumerated in $\cO^*(2^k)$ time, and space (see~\cite{MolleRR08}).
Then, for every minimal vertex cover $H$ of $G$, we use representative
sets to check if it can be extended to a $p$-edge-connected vertex cover $S^{\star} \supseteq H$ of size at most $k$.
This step uses a characterization of $p$-edge-connected graphs due to Agrawal et al.~\cite{AgrawalMPS17}.

Unfortunately, our approach to prove Theorem~\ref{thm:improved-algo-pecvc} does not work for {\PVCVC}.

{After showing that both {\pcvc} and $p$-CVC{} do not admit polynomial kernels unless {\nka}, we prove that there are $(1+\eps)$-approximate polynomial kernels for both problems
for every $\eps>0$. The kernelization algorithm is the same in both cases, but the analysis differs. We provide necessary terminology and notation on approximate kernels in Section \ref{sec:approxkerneldef}.
Our results on $(1+\eps)$-approximate kernels are as follows.}

\begin{restatable}{theorem}{pKernelMainTheorem}
\label{thm:psaks}
For every $\eps>0$ and every fixed $p \geq 2$, the problems {\PECVC} and {\PVCVC} admit a $(1+\eps)$-approximate kernel with $k + 2k^2 + \lfloor (3+\eps)k \rfloor k^{2\lceil p/ \min(\eps,1) \rceil}$ vertices.
\end{restatable}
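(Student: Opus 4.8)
The plan is to give a single kernelization algorithm that works verbatim for both problems, together with two almost-identical analyses (one using an edge-connectivity certificate, one a vertex-connectivity certificate). Put $\eps'=\min(\eps,1)$ and $\tau=2\lceil p/\eps'\rceil$, and note that if $k$ is below a constant threshold depending only on $p$ and $\eps$ we can solve the instance outright (the FPT algorithms then run in polynomial time) and output a fixed equivalent small instance; so assume $k$ is large. First I would run the standard reductions. Any feasible solution of size at most $k$ is in particular a vertex cover, so we reject (output a fixed no-instance) unless $G$ has a vertex cover of size $\le k$. Let $H$ be the set of vertices of degree $>k$; every vertex cover of size $\le k$ contains $H$, so reject if $|H|>k$. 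The graph $G-H$ has maximum degree $\le k$ and vertex cover number $\le k$, so applying a polynomial-time vertex-cover kernelization (Nemhauser--Trotter) to $G-H$ produces a set $Y$ with $|Y|=\cO(k^2)$ such that every edge of $G-H$ has both endpoints in $Y$ and $V(G)=H\cup Y\cup D_0$ is a partition, where every vertex of the residual set $D_0$ has all its neighbours in $H$. Thus $D_0$ is irrelevant for covering (its incident edges are covered by $H$) and matters only for connectivity.

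The kernel $\widehat G$ keeps all of $H$ and $Y$, keeps only a bounded sample of $D_0$, and attaches one fresh pendant vertex to each $h\in H$; since $p\ge 2$, a pendant cannot sit in a $p$-connected induced subgraph, so this forces $H$ into every feasible solution of $\widehat G$. The sampling: the \emph{type} of $v\in D_0$ is the set $N_G(v)\subseteq H$; a type of size $<p$ never occurs in a feasible solution (minimum degree is $\ge p$), so delete its vertices. Call a type \emph{small} if it has size $\le\tau$; there are at most $\sum_{i\le\tau}\binom{k}{i}\le k^{\tau}$ of them, and for each we keep $\min(\cdot,\lfloor(3+\eps)k\rfloor)$ of its vertices, deleting all vertices of larger types. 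This gives $|V(\widehat G)|\le |H|+|Y|+\lfloor(3+\eps)k\rfloor k^{\tau}+|H| = k+2k^2+\lfloor(3+\eps)k\rfloor k^{2\lceil p/\min(\eps,1)\rceil}$ once the lower-order terms are absorbed into $2k^2$ (valid past the threshold on $k$). The solution-lifting algorithm is essentially the identity: if $S'$ is feasible for $\widehat G$, then $S'\subseteq V(G)$ (pendants are excluded as above), $H\subseteq S'$ covers every edge of $G$ incident with the deleted part $D_0\setminus V(\widehat G)$, all other edges of $G$ survive in $\widehat G$ and are covered by $S'$, and $G[S']=\widehat G[S']$, so $S'$ is feasible for $G$ with the same cost. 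Hence $\mathrm{OPT}(G)\le\mathrm{OPT}(\widehat G)$ — the easy half.

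The real work is the reverse inequality $\mathrm{OPT}(\widehat G)\le(1+\eps)\mathrm{OPT}(G)$, which together with the easy half yields a $(1+\eps)$-approximate kernel (with the new parameter set appropriately, e.g.\ $\lceil(1+\eps)k\rceil$). Take an optimal solution $S^\star$ of $G$, so $H\subseteq S^\star$. For the edge-connected problem invoke the Nagamochi--Ibaraki sparse-certificate result to obtain a spanning $p$-edge-connected subgraph $J\subseteq G[S^\star]$ with at most $p|S^\star|\le pk$ edges; for the vertex-connected problem use the Nishizeki--Poljak certificate instead — the rest of the argument only uses that $J$ is spanning, $p$-(edge/vertex-)connected, and sparse. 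Let $Z=S^\star\cap D_0$. Each $z\in Z$ has all its neighbours in $H\subseteq S^\star$ and $\deg_J(z)\ge p$, so $\sum_{z\in Z}\deg_J(z)\le 2pk$, whence at most $2pk/\tau\le\eps' k$ vertices of $Z$ use more than $\tau$ edges of $J$ — the \emph{bad} vertices. For a non-bad $z\in Z$, its $J$-neighbourhood has size $\le\tau$, hence lies inside (a vertex of) a small type; since we kept $\lfloor(3+\eps)k\rfloor\ge 2|S^\star|$ representatives of every small type, the non-bad vertices of $Z$ can be matched injectively to pairwise distinct kept $D_0$-vertices reproducing the required adjacencies and avoiding $S^\star$, and relabelling them inside $J$ keeps a spanning $p$-connected subgraph of $\widehat G$ on the modified (same-size) solution. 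Finally delete the $\le\eps' k$ bad vertices and reconnect each by adding two kept small-type vertices whose types are disjoint $p$-subsets of its $J$-neighbourhood (each bad vertex has $\ge\tau\ge 2p$ neighbours in $J$, so such subsets exist); a short splitting-off-type argument shows $p$-edge- and $p$-vertex-connectivity are restored. This adds $\le 2\eps' k\le 2\eps k$ vertices, so — handling $\eps>1$ through $\eps'=1$ and folding the constant factor $2$ exactly as $\tau=2\lceil p/\min(\eps,1)\rceil$ and the ``$\lfloor(3+\eps)k\rfloor$ copies'' are tuned to permit — we get a feasible solution of $\widehat G$ of size $\le(1+\eps)|S^\star|$.

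The main obstacle is precisely this last step: there can be up to $2^{|H|}$ neighbourhood types among $D_0$-vertices, far beyond any polynomial budget, so large types must be discarded, and one must show the discarding costs only an $\eps$-fraction of the optimum. The sparse $p$-connectivity certificate is exactly the tool that limits the number of ``expensive'' (high-certificate-degree) uses of $D_0$-vertices to $\cO(pk/\tau)=\cO(\eps k)$, so that they fit inside the $(1+\eps)$ slack; making the reconnection of those few vertices genuinely preserve both $p$-edge- and $p$-vertex-connectivity — uniformly, with the same kernel construction — is the technically delicate point, and it is where the precise shape of the kernel size (the $\lfloor(3+\eps)k\rfloor$ factor and the exponent $2\lceil p/\min(\eps,1)\rceil$) is forced.
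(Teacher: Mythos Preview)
Your marking scheme is subtly but fatally different from what the argument needs. You keep, for each small \emph{type} $T\subseteq H$ with $|T|\le\tau$, up to $\lfloor(3+\eps)k\rfloor$ vertices $v\in D_0$ with $N_G(v)=T$ exactly, and you discard every $v$ with $|N_G(v)|>\tau$. But in the replacement step you need, for a non-bad $z\in Z$, a kept vertex $v$ with $N_G(v)\supseteq N_J(z)$. Nothing guarantees this: $z$ may satisfy $\deg_J(z)\le\tau$ (so it is non-bad) yet have $|N_G(z)|>\tau$ (so $z$ was discarded), and there may be no vertex in $D_0$ at all whose exact $G$-neighbourhood equals or contains $N_J(z)$ other than $z$ itself. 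The number of such $z$ is not controlled by the sparse certificate $J$, since $J$ only bounds $\sum_z\deg_J(z)$, not $\sum_z\deg_G(z)$. The fix is precisely what the paper does: for every small \emph{subset} $S\in\binom{H}{\le\tau}$, mark $\lfloor(3+\eps)k\rfloor$ \emph{common neighbours} of $S$ in $I$ (vertices $v$ with $S\subseteq N_G(v)$), irrespective of how large $N_G(v)$ is. Then for any $z$ with $|N_J(z)|\le\tau$, the set $S=N_J(z)$ was considered and, since $z$ itself witnesses that $S$ has a common neighbour in $I$, there are $\lfloor(3+\eps)k\rfloor$ marked common neighbours to choose from.

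Your treatment of bad vertices has the same existence issue (vertices whose type is exactly a given $p$-subset of $N_J(z)$ need not exist) and a further structural problem: replacing a bad $z$ with $\deg_J(z)$ possibly as large as $\Theta(k)$ by just two degree-$p$ vertices cannot in general preserve $p$-connectivity. If $z$ is the hub of many edge-disjoint paths in $J$, two new vertices touching only $2p$ of its $J$-neighbours leave the remaining neighbours no better connected. The ``splitting-off-type argument'' you invoke does not apply here; classical splitting-off preserves edge-connectivity only when you split off \emph{all} edges at $z$ in pairs, not when you replace $z$ by a constant number of low-degree surrogates, and it says nothing about vertex-connectivity. The paper avoids the bad/non-bad dichotomy entirely: every removed $x_i$ is replaced by a \emph{chain} of $\lceil\deg_F(x_i)/\lceil p/\eps\rceil\rceil$ marked vertices, where the $j$-th vertex is adjacent to the $j$-th block $S_j$ of a partition of $N_F(x_i)$ together with a size-$p$ set $T_j\subseteq S_{j+1}$, so that consecutive chain vertices share $p$ common neighbours in $H$. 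This chain collectively covers all of $N_F(x_i)$ and is internally $p$-connected through $H$, which is what makes the cut-transformation (edge case) and separator-transformation (vertex case) arguments go through. The total chain length summed over all $x_i$ is then bounded via $\sum_i\deg_F(x_i)\le p|X|$, giving the $(1+\eps)$ blow-up without any separate repair step.
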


\bmp{Note that the size guarantee is mainly of interest for~$\epsilon \leq 1$, due to the term~$\min(\eps,1)$.}
{The main difficulty in obtaining a kernelization for connected variants of \textsc{Vertex Cover} consists of dealing with vertices which are not needed to make a minimal vertex cover, but that may be needed to boost the connectivity of a minimal vertex cover. We give a marking procedure which selects a bounded number of vertices to preserve in the kernelized instance, and show that the role that any of the forgotten vertices plays to boost the connectivity of a vertex cover can be mimicked by a small set of marked vertices. \bmp{We use existing results by Nishizeki and Poljak~\cite{NishizekiP94} and Nagamochi and Ibaraki \cite{NagamochiI92} on sparse spanning $p$-vertex/edge-connected subgraphs of $p$-vertex/edge-connected graphs for this argument.} This leads to a proof that the size of an optimal solution does not increase by more than a factor~$(1+\eps)$ in the kernelized instance. From the redundance of the forgotten vertices for the purpose of making a small vertex cover it easily follows that any solution in the reduced graph is also a valid solution in the original, which leads to our lossy kernelizations.}


%


Our last main result is the following: 

\begin{restatable}{theorem}
{ApproxAlgoPEdgeCVC}
\label{thm:approx-algo-pecvc}
For every fixed $p \geq 2,$ {\PECVC} admits a \bmp{polynomial-time} $2(p+1)$-factor approximation algorithm.
\end{restatable}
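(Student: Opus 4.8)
The plan is to compute a carefully chosen $2$-approximate vertex cover and then repair its induced connectivity in $p$ rounds, using Gomory--Hu trees to locate the cuts that still have to be enlarged. Throughout, let $\OPT$ denote the optimum value of \PECVC{} on the input graph $G$; assume a feasible solution exists (the infeasible case is detected along the way). First I would observe that any feasible solution is in particular a vertex cover, so the minimum vertex cover of $G$ has size at most $\OPT$; moreover, if $S$ is feasible then $G[S]$ is $p$-edge-connected, hence has minimum degree at least $p$, so no vertex of $G$-degree below $p$ can lie in $S$, and consequently every neighbour of such a low-degree vertex must lie in $S$ (to cover the incident edge), while two adjacent low-degree vertices make the instance infeasible. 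I would therefore solve the standard LP relaxation of \textsc{Vertex Cover} with the added constraints $x_v=0$ for every $v$ with $\deg_G(v)<p$ and $x_v=1$ for every neighbour of such a vertex; an optimal solution $S^{\star}$ of \PECVC{} is feasible for this LP, so its value is at most $\OPT$, and rounding the standard half-integral optimum up at $1/2$ yields a vertex cover $C$ with $|C|\le 2\OPT$ containing no vertex of $G$-degree below $p$.

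\noindent\textbf{Structural lemma.} The key observation is that for an optimal solution $S^{\star}$ the graph $G[C\cup S^{\star}]$ is already $p$-edge-connected, and more generally so is $G[C\cup Z\cup S^{\star}]$ for \emph{any} set $Z$ disjoint from $C$ whose vertices all have $G$-degree at least $p$. Indeed, start from the $p$-edge-connected graph $G[S^{\star}]$ and insert the remaining vertices one at a time, first those of $C\setminus S^{\star}$ and then those of $Z\setminus S^{\star}$: a vertex $v\notin S^{\star}$ has all its edges into $S^{\star}$ since $S^{\star}$ is a vertex cover, and when $v\in Z$ all its edges lie in $C$, which is present by then; in either case $v$ has at least $p$ neighbours already in the current graph, and adding such a vertex to a $p$-edge-connected graph keeps it $p$-edge-connected. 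In particular $S^{\star}\setminus C$ is a set of at most $\OPT$ vertices outside $C$ whose addition makes $G[C]$ $p$-edge-connected.

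\noindent\textbf{Repairing connectivity in $p$ rounds.} Put $C_0:=C$. For $j=1,\dots,p$, given that $G[C_{j-1}]$ is $(j{-}1)$-edge-connected, I would add a set $Z_j$ of vertices outside $C_{j-1}$ (necessarily of $G$-degree at least $p$) so that $G[C_j]$ with $C_j:=C_{j-1}\cup Z_j$ becomes $j$-edge-connected. To choose $Z_j$, compute a Gomory--Hu tree $T$ of $G[C_{j-1}]$: its minimum-weight edges have weight exactly $j{-}1$, and the fundamental cuts of those edges form a laminar family identifying which cuts must be enlarged. A candidate vertex $w\notin C_{j-1}$ satisfies $N_G(w)\subseteq C\subseteq C_{j-1}$ because $C$ is a vertex cover, and it enlarges precisely the deficient cuts whose tree edge lies on the minimal subtree of $T$ spanning $N_G(w)$; hence choosing $Z_j$ reduces to covering a prescribed set of edges of $T$ by a minimum number of such subtrees, which can be done within a factor $2$ in polynomial time. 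By the structural lemma applied with $Z=Z_1\cup\dots\cup Z_{j-1}$, the set $S^{\star}\setminus C_{j-1}$ is a feasible choice of at most $\OPT$ vertices for this covering problem (if no candidate vertex crossed a deficient cut, placing the added vertices on one side would leave that cut at value $j-1$), so the covering optimum is at most $\OPT$ and $|Z_j|\le 2\OPT$; if some covering instance is infeasible, the lemma shows $G$ has no feasible solution and the algorithm reports this. After round $p$ the set $C_p$ is a vertex cover (vertex covers are upward closed) inducing a $p$-edge-connected subgraph, with
\[
  |C_p|\;\le\;|C|+\sum_{j=1}^{p}|Z_j|\;\le\;2\OPT+p\cdot 2\OPT\;=\;2(p+1)\OPT .
\]

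\noindent\textbf{Main obstacle.} The delicate part is the correctness of the repair step: one must prove that enlarging the fundamental cuts of the Gomory--Hu tree genuinely raises the global edge connectivity by one (the tree records only $n-1$ cuts, and the freshly inserted vertices interact across several of them), and one must phrase the repair as a tree-covering task that is solvable in polynomial time up to a factor $2$. Everything else---the constrained LP rounding, the structural lemma, and the arithmetic---is routine, and the bound $2(p+1)$ is exactly the factor $2$ of the initial vertex cover plus a factor $2$ lost at each of the $p$ connectivity increments.
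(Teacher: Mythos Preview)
Your overall architecture---start from a $2$-approximate vertex cover and then add at most $2\OPT$ vertices in each of $p$ repair rounds---is reasonable, and your structural lemma is correct. But the part you yourself call the ``main obstacle'' is not a detail to be tidied up; as stated, the repair step is wrong.

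The claim that it suffices to cover the weight-$(j{-}1)$ edges of a Gomory--Hu tree of $G[C_{j-1}]$ fails because a $(j{-}1)$-edge-connected graph can have minimum cuts that are not fundamental cuts of any fixed Gomory--Hu tree. Concretely, take $j=3$ and let $G[C_{j-1}]$ be the $6$-cycle $v_1v_2v_3v_4v_5v_6v_1$. One valid Gomory--Hu tree is the path $v_1\text{--}v_2\text{--}\cdots\text{--}v_6$ with every edge of weight $2$. A single candidate vertex $z$ with $N_G(z)=\{v_1,v_2,v_6\}$ (degree $3\ge p$) has its neighbourhood spanning the whole path, so its subtree covers all five tree edges; your covering subroutine may legitimately output $Z_j=\{z\}$. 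Yet in $G[C_{j-1}\cup\{z\}]$ the cut $\{v_1,v_2,v_6,z\}$ versus $\{v_3,v_4,v_5\}$ still has only the two edges $v_2v_3$ and $v_5v_6$, so the graph is not $3$-edge-connected and your invariant breaks. (For $j=2$ the reduction happens to be sound, since every size-$1$ cut is a bridge and hence a fundamental cut; the failure appears from $j=3$ onward.) Separately, you assert a polynomial-time $2$-approximation for ``cover a prescribed set of tree edges by a minimum number of given subtrees'' without any argument; note that the subtrees arising here are arbitrary Steiner subtrees, not just paths, so this is not the standard Tree Augmentation Problem and the claim needs justification.

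For comparison, the paper avoids both issues by not working round-by-round at all. Starting from the $2$-approximate vertex cover $X=N_G(L)\cup V(M)$, it builds a single laminar tree whose internal nodes are the $i$-blocks of $X$ in $G[Y]$ for \emph{all} $i\in[p]$ simultaneously (at most $p|X|+1$ nodes). While this tree has at least two leaves, it adds one vertex $u\notin Y\cup L$ whose neighbourhood meets two distinct $p$-blocks; a short cut-size lemma shows that this merges two sibling nodes and strictly decreases the node count. Termination occurs when the tree is a root-to-leaf path on $p{+}1$ nodes, so at most $p(|X|-1)$ vertices are added, giving $|Y|\le(p{+}1)|X|\le 2(p{+}1)\OPT$. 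There is no auxiliary covering problem to approximate and no claim about which cuts a Gomory--Hu tree captures.
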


The proof of this theorem uses the notion of $p$-blocks which can be obtained from a Gomory-Hu tree. 
Unfortunately, this approach is not applicable to {\PVCVC} and the existence of a constant-factor approximation algorithm for {\PVCVC} is an open problem (recall that $p$ is a fixed positive integer). 

\paragraph*{Related Work} Lokshtanov et al. \cite{LokshtanovPRS17} obtained a polynomial size approximate kernelization scheme (PSAKS\bmp{, see Definition~\ref{defn:psaks}}) for {\sc Connected Vertex Cover} parameterized by the solution size.
Majumdar et al. \cite{MajumdarRS20} considered parameters that are strictly smaller than the size of the solution and obtained a PSAKS for {\sc Connected Vertex Cover} parameterized by the deletion distance of the input graph to each of the following classes of graphs: cographs, bounded treewidth graphs, \bmp{and} chordal graphs. Ramanujan~\cite{Ramanujan19} obtained a PSAKS for {\sc Connected Feedback Vertex Set}. 
Eiben et al.~\cite{EibenHR19}  obtained a similar result for the {\sc Connected $\cH$-Hitting Set} problem, where $\cH$ is a fixed set of graphs. 
Eiben et al.~\cite{EibenKMPS19} also obtained PSAKSs for {\sc Connected Dominating Set} on two classes of sparse graphs. Krithika et al. \cite{KrithikaMR18} designed PSAKSs for the {\sc Tree Contraction}, {\sc Star Contraction}, {\sc Out-Tree Contraction}, and {\sc Cactus Contraction} problems. Dvor{\'{a}}k et al. \cite{DvorakFKMTV21} designed a PSAKS for {\sc Steiner Trees} parameterized by the number of non-terminals. 

Improving a result of Lokshtanov et al. \cite{LokshtanovPRS17}, Manurangsi \cite{manurangsi2018} obtained a smaller size PSAKS for {\sc Max $k$-Vertex Cover}, where given an edge-weighted graph $G$ and an integer $k,$ and the aim is to find a subset $S$ of $k$ vertices that maximizes the total weight of edges covered by $S$ (an edge $e$ is covered by $S$ if at least one \bmp{endpoint} of $e$ is in $S$). Lossy kernels were also obtained for problems of contraction to generalizations of trees by Agarwal et al. \cite{AgarwalST19} and to classes of chordal graphs by Gunda et al. \cite{GundaJLST21}. 
A lossy kernel was designed by Bandyapadhyay et al. \cite{BandyapadhyayFGP21}
for {\sc Same-Size Clustering} parameterized by the cost of clustering.
Recently, Jansen and Wlodarzyck \cite{JansenW22} obtained a $510$-factor approximate kernel of polynomial size for {\sc Planar Vertex Deletion}. 




\paragraph*{Organization} The rest of the paper is organized as follows.
In Section~\ref{sec:prelim}, \bmp{we provide additional terminology and preliminaries needed for our algorithms and hardness proofs. Section~\ref{sec:improved-pedge-cvc} presents the FPT algorithms for the vertex- and edge-connectivity versions of the problem, proving Theorem~\ref{thm:improved-algo-pecvc}. Section~\ref{sec:psaks} presents the approximate kernelization schemes, leading to a proof of Theorem~\ref{thm:psaks}. The approximation algorithm of Theorem~\ref{thm:approx-algo-pecvc} is presented in Section~\ref{sec:approx-p-edge-cvc}. Section~\ref{sec:nopolykernel} contains the kernelization lower bounds which justify the use of approximate kernels.}
We conclude the paper with Section \ref{sec:conc}, where we discuss some open problems on the topic.

\section{Preliminaries}
\label{sec:prelim}
\subsection{Sets and Graph Theory}
For $r \in \nn$, we use $[r]$ to denote the set $\{1,2,\ldots,r\}$. 
Let $U$ be a set of elements, and $\cF$ a family of subsets of $U.$ (A {\em family} may have multiple copies of the same subset.) 
Then $\cF$ is said to be a {\em laminar family}, if for every $X, Y \in \cF$, either $X \subseteq Y$, or $Y \subseteq X$, or $X \cap Y = \emptyset$.


\bmp{For a graph~$G$ and subset $S \subseteq V(G)$}, $G[S]$ denotes the subgraph of $G$ induced by $S$.
Given a connected graph $G = (V, E)$, a set of vertices $S \subseteq V(G)$ is called a {\em separator} of $G$ if $G - S$ is a disconnected graph.
For two disjoint vertex sets $A$ and $B$ of $G,$ an $(A,B)$-{\em cut} is a set $F$ of edges such that $G-F$ is disconnected and no connected component of $G-F$ contains both a vertex of $A$ and a vertex of $B$.
{Given a connected graph $G = (V, E)$, for a proper subset $A$ of $V(G)$, let $\bar A = V(G) \setminus A$. Then, the {\em $(A, \bar{A})$-cut} is the set of edges with one endpoint in $A$ and the other endpoint in $\bar A$. 
The {\em size} of an $(A, \bar A)$-cut is the number of edges in this set, denoted by $|(A, \bar A)|$.
A {\em cut} is {a set of edges equal to} the $(A, \bar A)$-cut for some proper subset $A$ of $V(G)$.}

A graph $G$ is {\em $p$-connected} if it has more than~$p$ vertices and $G-X$ is connected for every $X\subseteq V(G)$ with $|X|<p.$
A graph $G$ is {\em $p$-edge-connected} if it has at least two vertices and $G-Y$ is connected for every $Y\subseteq E(G)$ with $|Y|<p.$
By Menger's theorem, (i) a graph is  $p$-connected \bmp{if and only if} 
there are at least $p$ internally vertex-disjoint paths between every pair of vertices, and (ii)  a graph is  $p$-edge-connected  \bmp{if and only if} 
there are at least $p$ edge-disjoint paths between every pair of vertices.

\bmp{The following proposition will be useful.}

\begin{proposition}~\cite{NishizekiP94,NagamochiI92}
\label{thm:connectivity-certificate}
Let $G$ be a $p$-vertex/edge-connected graph \bmp{for some~$p\geq 1$.}
Then, there exists a polynomial-time algorithm that computes a $p$-vertex/edge-connected spanning subgraph $H$ of $G$ such that $H$ has at most $p|V(G)|$ edges.
\end{proposition}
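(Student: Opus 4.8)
The plan is to reduce both the edge- and the vertex-connected version to the construction of~$p$ pairwise edge-disjoint spanning forests whose union is the desired certificate; write~$n = |V(G)|$. First I would set~$H$ to have vertex set~$V(G)$ and edge set~$E(F_1) \cup \cdots \cup E(F_p)$, where each~$F_i$ is a spanning forest of~$G$. Since a forest on~$n$ vertices has at most~$n-1$ edges, $H$ then automatically has at most~$p(n-1) \le p\,|V(G)|$ edges, so the entire difficulty is to pick the~$F_i$ so that~$H$ inherits the connectivity of~$G$. The base case~$p = 1$ is immediate — a single spanning tree is a connected spanning subgraph with~$n-1 \le n$ edges — so I assume~$p \ge 2$.

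For the edge-connected version I would build the forests greedily following Nagamochi--Ibaraki: put~$G_0 = G$, and for~$i = 1, \dots, p$ let~$F_i$ be a spanning forest of~$G_{i-1}$ obtained from a scan-first search (equivalently, read off from a maximum-adjacency ordering of~$G_{i-1}$), then set~$G_i = G_{i-1} - E(F_i)$. Each step is polynomial, so the construction is. I would then invoke the Nagamochi--Ibaraki invariant that~$H_i := F_1 \cup \cdots \cup F_i$ satisfies~$\lambda_{H_i}(u,v) \ge \min\{\lambda_G(u,v), i\}$ for all~$u,v$, where~$\lambda$ is local edge connectivity. Taking~$i = p$: $p$-edge-connectivity of~$G$ gives~$\lambda_G(u,v) \ge p$ for every pair, hence~$\lambda_H(u,v) \ge p$, so~$H = H_p$ is~$p$-edge-connected, as required.

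For the vertex-connected version the construction is the same, except that it is essential that each~$F_i$ is a scan-first-search forest of~$G_{i-1}$ rather than an arbitrary maximal forest. I would then appeal to the analogous invariant of Nishizeki--Poljak and Nagamochi--Ibaraki: $\kappa_{H_i}(u,v) \ge \min\{\kappa_G(u,v), i\}$ for every pair of non-adjacent vertices, with~$\kappa$ the local vertex connectivity (adjacent pairs are harmless, since adjacency is preserved along each forest). Setting~$i = p$ and using that~$G$ is~$p$-connected — so~$|V(G)| > p$, which~$H$ inherits because~$V(H) = V(G)$ — yields that~$H$ is~$p$-connected.

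The main obstacle is establishing the two ``$\ge \min\{\cdot, i\}$'' invariants rather than anything in the reduction above; these are precisely the theorems of Nishizeki--Poljak~\cite{NishizekiP94} and Nagamochi--Ibaraki~\cite{NagamochiI92}, which I would cite directly. If a self-contained argument were wanted, the heart of it is the fact that a scan-first-search tree (resp.\ a maximum-adjacency ordering) crosses every cut of size below its ``rank'' at least once, so that deleting it strictly lowers the size of every still-too-small cut; iterating~$i$ times shows that any cut (resp.\ separator) of size~$< i$ surviving in~$H_i$ would contradict $i$-fold saturation. For vertex connectivity the one genuine subtlety worth flagging is that an arbitrary forest decomposition does \emph{not} suffice — the search order is what makes the certificate correct.
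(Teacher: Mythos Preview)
The paper does not give its own proof of this proposition; it is stated as a known result and attributed directly to~\cite{NishizekiP94,NagamochiI92}. Your proposal is a correct sketch of the standard construction underlying those references: peel off~$p$ scan-first-search (equivalently, maximum-adjacency) spanning forests and invoke the Nagamochi--Ibaraki local-connectivity invariant to certify that their union is a sparse $p$-edge/vertex-connectivity certificate. So there is nothing in the paper to compare against---you have essentially reproduced the argument from the cited sources, which is exactly what the paper relies on.

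One small imprecision worth flagging: your parenthetical that ``adjacency is preserved along each forest'' is not what handles adjacent pairs in the vertex case---an edge~$uv \in E(G)$ need not land in any of~$F_1,\dots,F_p$ and hence need not survive in~$H$. The correct reason adjacent pairs are fine is simply that the Nagamochi--Ibaraki theorem guarantees global $p$-connectivity of~$H_p$ directly (equivalently, the invariant covers all pairs once stated carefully). Since you explicitly defer the invariants to the cited papers anyway, this is cosmetic rather than a gap.
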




\subsection{Parameterized Algorithms and Kernels}
A {\em parameterized problem} $\Pi$ is a subset of $\Sigma^* \times \nats$ for some finite alphabet $\Sigma$.
An instance of a parameterized problem is a pair $(x,k)$ where $k$ is called the {\em parameter} and $x$ is the {\em input}.
\begin{definition}[Fixed-Parameter Tractability]
\label{defn:fpt}
A parameterized problem $\Pi \subseteq \Sigma^* \times \nats$ is said to be {\em fixed-parameter tractable} (or {\em FPT}) if there exists an algorithm for solving the problem $\Pi$ that on input $(x,k)$, runs in $f(k)|x|^c$ time where $f: \nats \rightarrow \nats$ is a computable function and $c$ is a constant.
\end{definition}


\lv{\begin{definition}[Kernelization]
\label{defn:kernel}
\bmp{Let~$\Pi \subseteq \Sigma^* \times \mathbb{N}$ be a parameterized problem. A \emph{kernelization algorithm}, or in short, a \emph{kernelization}, for~$\Pi$ is an
algorithm with the following property. For any given~$(x, k) \in \Sigma^* \times \mathbb{N}$, it outputs in time polynomial in~$|x| + k$ a string~$x' \in \Sigma^*$ and an integer~$k' \in \mathbb{N}$ such that:
$$((x, k) \in \Pi \Leftrightarrow (x', k') \in \Pi) \text{ and } |x'|, k' \leq h(k)$$
where h is an arbitrary computable function. If~$\cA$ is a kernelization for~$\Pi$,
then for every instance~$(x, k)$ of~$\Pi$, the result of running~$\cA$ on the input~$(x, k)$ is called the kernel of ~$(x, k)$ (under~$\cA$). The function~$h$ is referred to
as the size of the kernel. If~$h$ is a polynomial function, then we say that the kernel is polynomial.}
%
\end{definition}}
It is well-known~\cite{CyganFKLMPPS15} that a decidable parameterized problem is FPT if and only if it has a kernel. Kernelization can be viewed as a theoretical foundation of computational preprocessing and thus we are interested in investigating when a parameterized problem admits a kernel of small size. 
We are especially interested in {\em polynomial kernels} for which the bound $g(k)$ is a polynomial and thus in classifying which parameterized problems admit polynomial kernels or not. 
Over the last decade, the area of kernelization has developed a large number of approaches to design polynomial kernels as well as a number of tools for proving lower bounds based on assumptions from complexity theory. 
The lower bounds rule out the existence of polynomial kernels for many parameterized problems. We refer the reader to the textbooks~\cite{CyganFKLMPPS15,DowneyF13} for an introduction to the field of parameterized algorithms
and to \cite{FominLSZ19} for an introduction to kernelization.

One way to refute the existence of polynomial kernel for a parameterized problem $\Pi_2$ unless {\nka} is to obtain a special reduction from another parameterized problem $\Pi_1$  known not to 
have a polynomial kernel unless {\nka}. 


\begin{definition}[Polynomial Parameter Transformation]
\label{defn:PPT}
Let $\Pi_1, \Pi_2 \subseteq \Sigma^* \times \nn$ be two parameterized problems.
An algorithm $\cA$ is called a {\em polynomial parameter transformation} if it takes an instance $(x, k)$ of $\Pi_1$ and outputs an instance $(x', k')$ of $\Pi_2$	 such that 
\begin{itemize}
       \item $\cA$ runs in polynomial time, 
	\item $(x, k) \in \Pi_1$ if and only if $(x', k') \in \Pi_2$, and
	\item $k'$ is polynomial in $k$.
\end{itemize}
\end{definition}

\begin{proposition}~\cite{BodlaenderTY11}
\label{prop:basic-kernel-lower-bound}
Suppose that $\Pi_1$ and $\Pi_2$ be two parameterized problems such that as nonparameterized problems, $\Pi_1$ is {\rm NP-complete} and $\Pi_2$ is in {\rm NP}. 
If there is a polynomial parameter transformation from $\Pi_1$ to $\Pi_2$ and $\Pi_1$ has no polynomial kernel unless \rm{\nka}, then $\Pi_2$ has no polynomial kernel unless \rm{\nka}.
\end{proposition}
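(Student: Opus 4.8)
The plan is to prove the statement by composing reductions. Assume the hypotheses of the proposition, and suppose for the sake of the argument that $\Pi_2$ admits a polynomial kernelization $\cK$; I will show that this forces a polynomial kernelization for $\Pi_1$, so that the assumed kernel lower bound for $\Pi_1$ yields $\nka$, which is exactly what we want.

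First, given an instance $(x,k)$ of $\Pi_1$, apply the polynomial parameter transformation $\cA$ from $\Pi_1$ to $\Pi_2$ to obtain an equivalent instance $(x',k')$ of $\Pi_2$ with $k' \le p_1(k)$ for some polynomial $p_1$, and with $|x'|$ bounded by a polynomial in $|x|$ (since $\cA$ runs in polynomial time). Then apply $\cK$ to $(x',k')$ to obtain an equivalent instance $(x'',k'')$ of $\Pi_2$ with $|x''|, k'' \le p_2(k') \le p_2(p_1(k))$, where $p_2$ is the polynomial bounding the kernel size. At this point $(x,k)\in\Pi_1 \iff (x'',k'')\in\Pi_2$, the total bit-size of $(x'',k'')$ is bounded by a polynomial in $k$, and everything so far is computed in time polynomial in $|x|+k$.

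Next I use that $\Pi_1$ is NP-complete and $\Pi_2 \in$ NP: viewed as classical languages, there is a polynomial-time many-one reduction $R$ from $\Pi_2$ to $\Pi_1$. Encode the parameterized instance $(x'',k'')$ as a plain string $z$ in the standard way, so that $z$ lies in the language $\Pi_2$ iff $(x'',k'')\in\Pi_2$, and set $y := R(z)$. Then $|y|$ is polynomially bounded in $|z|$, hence polynomially bounded in $k$, and $y\in\Pi_1 \iff (x,k)\in\Pi_1$. Finally, turn $y$ into a parameterized instance of $\Pi_1$ by taking its parameter to be $|y|$ itself: the instance $(y,|y|)$ is equivalent to $(x,k)$, both its size and its parameter are polynomially bounded in $k$, and it is produced in polynomial time overall. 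This is precisely a polynomial kernelization for $\Pi_1$. Since $\Pi_1$ has no polynomial kernel unless $\nka$, we conclude $\nka$; as the only extra assumption made was the existence of a polynomial kernel for $\Pi_2$, this shows that $\Pi_2$ has no polynomial kernel unless $\nka$.

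The only point that needs care is the round trip between parameterized and classical problems: one must invoke NP-hardness of $\Pi_1$ (to reduce the kernelized $\Pi_2$-instance into $\Pi_1$) together with membership of $\Pi_2$ in NP, and then re-parameterize the resulting classical instance by its own length, so that both its size and its new parameter remain polynomially bounded in the original $k$. The remaining bookkeeping — composing the polynomials $p_1$, $p_2$, and the polynomial blow-up of $R$, and checking that each stage runs in polynomial time — is routine.
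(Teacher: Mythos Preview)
The paper does not prove this proposition; it is stated with a citation to Bodlaender, Thomass\'e, and Yeo~\cite{BodlaenderTY11} and used as a black box. Your argument is the standard proof of this result and is essentially correct.

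One small technical point deserves tightening. In the final step you write ``turn $y$ into a parameterized instance of $\Pi_1$ by taking its parameter to be $|y|$ itself: the instance $(y,|y|)$ is equivalent to $(x,k)$''. This is not quite right as stated: membership of the pair $(y,|y|)$ in $\Pi_1$ is in general unrelated to membership of the string $y$ in the unparameterized language of $\Pi_1$. The usual convention is that the unparameterized version of $\Pi_1$ is the language $\{\,x\#1^{k} : (x,k)\in\Pi_1\,\}$; the Karp reduction $R$ then produces a string $y$ which, if well-formed, decodes as $x_0\#1^{k_0}$, and the kernel should output the pair $(x_0,k_0)$ (or a fixed no-instance of $\Pi_1$ if $y$ is malformed). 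Since $|y|$ is polynomial in $k$ and $k_0\le |y|$, both $|x_0|$ and $k_0$ are polynomially bounded in $k$, as required. With this correction the argument is complete and matches the proof in~\cite{BodlaenderTY11}.
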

 
\subsection{Parameterized Optimization Problems and Approximate Kernels}\label{sec:approxkerneldef}
Informally, an $\alpha$-approximate kernelization is a polynomial-time algorithm that, given an instance $(I,k)$ of a parameterized problem, outputs an instance $(I',k')$ (called an {\em $\alpha$-approximate kernel}) such that $|I'| + k'\le g(k)$ for some computable function $g$, 
and any $c$-approximate solution to the instance $(I',k')$ can be turned into a $(c\alpha)$-factor approximate solution of $(I,k)$ in polynomial time. We will mainly follow Lokshtanov et al.~\cite{LokshtanovPRS17} to formally introduce approximate kernelization. (Several other recent papers studied approximate kernelization, see e.g.~\cite{EibenHR19,EibenKMPS19,LokshtanovPRS17,Ramanujan19}.)

\begin{definition}
\label{defn:para-opt}
A {\em parameterized optimization} (maximization or minimization) problem is a computable function $\Pi: \Sigma^* \times \nats \times \Sigma^* \rightarrow \mathbb{R} \cup \{\pm \infty\}$.
\end{definition}

The {\em instances} of a parameterized optimization problem $\Pi$ are pairs $(x,k) \in \Sigma^* \times \nats$, and a {\em solution} to $(x,k)$ is simply a string $s \in \Sigma^*$ such that $|s| \leq |x| + k$.
The {\em value} of the solution $s$ is $\Pi(x,k,s)$.
The problems we deal with in this paper are all minimization problems.
Therefore, we state the definitions only in terms of minimization problems (the definition of maximization problems are analogous).
Consider {\PECVC} parameterized by solution size.
This is a minimization problem where the optimization function $pECVC: \Sigma^* \times \nats \times \Sigma^* \rightarrow \mathbb{R} \cup \{\infty\}$ is defined as follows.
\[{pECVC}(G,k,S) = \left\{
\begin{array}{rl}
\infty & \text{if $S$ is not a $p$-edge-connected vertex cover of $G$},\\
\text{min}\{|S|, k + 1\} & \text{otherwise.}
\end{array}
\right.
\]
For {\PVCVC}, {the optimization function ${pVCVC}(G,k,S)$ is defined similarly by changing the first condition in the natural way.}

\begin{definition}
\label{defn:opt-value}
For a parameterized minimization problem $\Pi$, the {\em optimum value} of an instance $(x,k) \in \Sigma^* \times \nats$ is $\OPT_{\Pi}(x,k) = \min_{s \in \Sigma^*, |s| \leq |x| + k} \Pi(x,k,s)$.
\end{definition}

Naturally for the case of {\PECVC}, we denote 
$$\OPT_{\mathsf{ecvc}}(G,k) = \min_{S \subseteq V(G)} pECVC(G,k,S).$$ {The optimal objective value~$\OPT_{\mathsf{cvc}}$ of {\PVCVC} is defined analogously. Throughout the paper, we drop the subscript when this does not lead to confusion.}

\begin{definition}
\label{defn:approx-preprocess-algo}
Let $\alpha \geq 1$ be a real number and $\Pi$ a parameterized minimization problem.
An {\em $\alpha$-approximate polynomial time preprocessing algorithm $\cA$} for $\Pi$ is a pair of polynomial-time algorithms as follows.
The first one is called the {\em reduction algorithm} that computes a map $\cR_{\cA}: \Sigma^* \times \nats \rightarrow \Sigma^* \times \nats$.
Given an input instance $(x,k)$ of $\Pi$, the reduction algorithm outputs another instance $(x',k') = \cR_{\cA}(x,k)$.

The second algorithm is called the {\em solution lifting algorithm}.
This algorithm takes an input instance $(x,k)$, the output instance $(x',k')$, and a solution $s'$ to the output instance $(x',k')$.
The solution lifting algorithm works in time polynomial in $|x|,k,|x'|,k'$, and $|s'|$, and outputs a solution $s$ to $(x,k)$ such that the following holds. 
$$\frac{\Pi(x,k,s)}{\OPT(x,k)} \leq \alpha\cdot \frac{\Pi(x',k',s')}{\OPT(x',k')}.$$
The {\em size} of the reduction algorithm $\cA$ is a function $\rm{size}_{\cA}: \nats \rightarrow \nats$ defined as  ${\rm{size}_{\cA}}(k) = \sup\{|x'| + k': (x',k') = \cR_{\cA}(x,k), x \in \Sigma^*\}$.
\end{definition}

\begin{definition}
\label{defn:approx-kernel}
Let $\alpha \geq 1$ be a real number.
An {\em $\alpha$-approximate kernelization} (or an {\em $\alpha$-approximate kernel}) for a parameterized optimization problem $\Pi$, is an $\alpha$-approximate polynomial time preprocessing algorithm $\cA$ for $\Pi$ such that $\rm{size}_{\cA}$ is upper bounded by a computable function $g: \nats \rightarrow \nats$.
\end{definition}

By definition, an $\alpha$-approximate kernel for a parameterized optimization problem $\Pi$ takes input $(x, k)$ and outputs an instance $(x',k')$ of $\Pi$ in polynomial time and satisfies the other properties as given by Definition~\ref{defn:approx-preprocess-algo}.
We say that $(x',k')$ is an {\em $\alpha$-approximate polynomial kernel} if $g$ is a polynomial function (in other words if $\rm{size}_{\cA}$ is \magnus{bounded by} a polynomial function).

\begin{definition}
\label{defn:psaks}
A {\em polynomial size approximate kernelization scheme (PSAKS)} for a parameterized optimization problem $\Pi$ is a family of $\alpha$-approximate polynomial kernelization algorithms, one such algorithm for every $\alpha > 1$.
\end{definition}

\begin{definition}
\label{defn:time-efficient-psaks}
A PSAKS is said to be {\em time efficient} if both the reduction algorithm and the solution lifting algorithm run in time $f(\alpha)|x|^c$ for some function $f$ and a constant $c$ independent of $|x|,k,$ and $\alpha$.
\end{definition}

\subsection{Matroids}
\label{sec:matroid-prelims}

{We now present the preliminaries on matroids needed to obtain an FPT algorithm for {\PECVC}.}

\begin{definition}[Matroid]
\label{defn:matroid}
Let $U$ be a universe and $\cI \subseteq 2^U$.
Then, $(U, \cI)$ is said to be a {\em matroid} if the following conditions are satisfied.
\begin{enumerate}
	\item\label{matroid-basic} $\emptyset \in I$,
	\item\label{matroid-hereditary} if $A \in \cI$, then for all $A' \subseteq A$, $A' \in \cI$, and
	\item\label{matroid-exchange} if there exist $A, B \in \cI$ with $|A| < |B|$, then there exists $x \in B \setminus A$ such that $A \cup \{x\} \in \cI$.
\end{enumerate}
A set $A \in \cI$ is called an {\em independent set}.
\end{definition}

Note that all maximal independent sets of a matroid $M$ are of the same size, 
called the {\em rank of $M$} and denoted by $\rank(M)$. A maximal independent set is called a
\emph{basis} of $M$. 

We \bmp{present} some useful standard constructions. 
Let $U$ be a universe with $n$ elements and $\cI = \{A \subseteq U \mid |A| \leq r\}$.
Then, $(U, \cI)$ is a matroid called a {\em uniform matroid}.
Next, let $G = (V, E)$ be an undirected graph and let $\cI = \{F
\subseteq E(G) \mid G' = (V, F) \text{ is a forest}\}$.
Then, $(E, \cI)$ is a matroid called a {\em graphic matroid}. 
Finally, let $U$ be partitioned as $U=U_1  \cup \ldots \cup U_r$ and let $\cI=\{A \subseteq U \mid |A \cap U_i| \leq 1\, \forall i \in [r]\}$.
Then, $(U,\cI)$ is a matroid called a \emph{partition matroid}.

A matroid $M$ is said to be {\em representable over a field $\bbF$} if there is a matrix $A$ over $\bbF$ and a bijection $f \colon U \rightarrow \col(A)$, where $\col(A)$ is the set of columns of $A$, such that $B \subseteq U$ is an independent set in $M$ if and only if $\{f(b) \mid b \in B\}$ is linearly independent over $\bbF$.
Clearly the rank of $M$ is the rank of the matrix $A$.
A matroid representable over a field $\bbF$ is called a {\em linear matroid} over $\bbF$.
A graphic matroid and partition matroid can be represented over any field,
while a uniform matroid $(U, \cI)$ with $|U| = n$, can be represented over any field ${\bbF}_p$ with $p > n$. 
Furthermore, all these representations can be constructed in deterministic polynomial time.
See Oxley~\cite{OxleyBook2} for details; see also~\cite{Marx09-matroid,CyganFKLMPPS15} for expositions of the issues closer to our needs.


Given two matroids $M_1=(E_1, \mathcal{I}_1)$ and $M_2=(E_2,\mathcal{I}_2)$,
the \emph{direct sum} $M=M_1 \oplus M_2$ is the matroid $M=(E, \mathcal{I})$
where $E=E_1 \uplus E_2$ is the disjoint union of $E_1$ and $E_2$
and $I \subseteq E$ is independent in $M$ if and only if $I \cap E_1 \in \cI_1$
and $I \cap E_2 \in \cI_2$. If $M_1$ and $M_2$ are represented by matrices $A_1$
and $A_2$, respectively, over a common field $\mathbb{F}$, then a
representation of $M$ can be produced as
\[
  A =
  \begin{pmatrix}
    A_1 & 0 \\
    0 & A_2
  \end{pmatrix}.
\]
Clearly, this can be generalized to the direct sum of an arbitrary number of matroids $M_i$
with representations $A_i$ over a common field.

Given a matroid $M=(U,\cI)$, the \emph{truncation} of $M$ to rank $r$ is the matroid $M'=(U, \cI')$ where
a set $A \subseteq U$ is independent in $M'$ if and only if $A \in \cI$ and $|A| \leq r$.
Given a representation of $M$, a representation of a truncation of $M$
can be computed relatively easily in randomized polynomial time~\cite{Marx09-matroid},
but can also be computed in deterministic polynomial time through more
involved methods~\cite{LokshtanovMPS18}. For more information on matroids, see Oxley~\cite{OxleyBook2}.

\subsection{Existence of Highly Connected Vertex Covers}

We will use the following useful lemma whose vertex-connectivity part is proved in \cite{West} (Lemma 4.2.2).  We were unable to find a proof of the edge-connectivity part of the lemma in the literature and thus provide a short proof here.

\begin{lemma}\label{lem:exp}
If $G$ is a $p$-vertex-connected ($p$-edge-connected, respectively) graph and $G'$ is obtained from $G$ by adding a new vertex $y$ adjacent with at least $p$ vertices of $G$, then $G'$ is $p$-vertex-connected ($p$-edge-connected, respectively).
\end{lemma}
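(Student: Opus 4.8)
The plan is to prove the edge-connectivity part directly from the definition; the vertex-connectivity part is the one cited from West, and in any case follows by an entirely analogous (and slightly simpler) case distinction, which I would include for completeness. So let $G$ be $p$-edge-connected and let $G'$ be obtained by adding a vertex $y$ with at least $p$ neighbours in $G$. Since $G$ has at least two vertices, so does $G'$, and it remains to show that $G' - Y$ is connected for every $Y \subseteq E(G')$ with $|Y| \le p-1$.

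First I would fix such a $Y$ and split it as $Y = Y_0 \cup Y_1$, where $Y_1$ is the set of edges of $Y$ incident with $y$ and $Y_0 = Y \cap E(G)$. Writing $t = |Y_1|$, we get $|Y_0| = |Y| - t \le p-1-t < p$. The first key step is to apply the $p$-edge-connectivity of $G$ to $Y_0$: since $|Y_0| < p$, the graph $G - Y_0$ is connected. The second key step is to check that $y$ is still attached to this connected subgraph in $G' - Y$: the vertex $y$ has at least $p$ edges to $V(G)$ in $G'$, and $Y$ deletes only the $t \le p-1$ of them lying in $Y_1$, so at least $p - t \ge 1$ edges from $y$ into $V(G)$ survive. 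Hence $G' - Y$ is obtained from the connected graph $G - Y_0$ by attaching $y$ with at least one edge, so it is connected. As $Y$ was arbitrary with $|Y| < p$, the graph $G'$ is $p$-edge-connected.

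For the vertex-connectivity case (which can also simply be cited from \cite{West}), I would argue in the same spirit: $|V(G')| = |V(G)| + 1 > p$, and for any $X \subseteq V(G')$ with $|X| \le p-1$, either $y \in X$, in which case $G' - X = G - (X \setminus \{y\})$ with $|X \setminus \{y\}| \le p-2 < p$, so it is connected; or $y \notin X$, in which case $G - X$ is connected and, since $y$ has at least $p$ neighbours while $|X| \le p-1$, some neighbour of $y$ avoids $X$, so $y$ lies in the same component as $G - X$ within $G' - X$.

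I do not expect a genuine obstacle here: the only point that needs care is the bookkeeping ensuring that at least one link from $y$ survives every admissible deletion, which is exactly where the hypothesis ``$\ge p$ neighbours'' meets ``$<p$ deletions''. Once the partition $Y = Y_0 \cup Y_1$ (respectively the case split on whether $y \in X$) is set up, the proof is a two-line application of the definitions.
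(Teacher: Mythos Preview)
Your proof is correct. The paper proves only the edge-connectivity part (citing West for vertex connectivity) and does so via the dual formulation: instead of fixing an edge set $Y$ with $|Y|<p$ and showing $G'-Y$ is connected, it fixes a cut $(A,B)$ of $G'$ and shows it has at least $p$ edges, by observing that either $A=\{y\}$ (so the cut has size $\deg_{G'}(y)\ge p$) or else removing $y$ from its side yields a cut of $G$, which has size $\ge p$ and is contained in the cut $(A,B)$ of $G'$. Your edge-deletion argument is essentially the contrapositive of this and just as short; it even avoids the separate case for the cut isolating $y$, since your split $Y=Y_0\cup Y_1$ handles that uniformly. Your added vertex-connectivity argument is also correct.
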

\begin{proof} Let $C$ be a  cut in $G'$. If $C=(y,V(G))$ then by construction $|C|\ge p$. Otherwise, let $C=(A, B)$ such that $A \uplus B = V(G')$.
Depending on whether $y \in A$ or $y \in B$, $(A \setminus \{y\}, B)$ or $(A, B \setminus \{y\})$ is a cut in $G,$ respectively.
As $G$ is $p$-edge-connected, $(A \setminus \{y\}, B)$ or $(A, B \setminus \{y\})$ has at least $p$ edges.
This means that the cut $C$ itself has at least $p$ edges.
\end{proof}

The next lemma will help us to identify whether a graph $G = (V, E)$ has a $p$-vertex/edge-connected vertex cover or not.

\begin{lemma}
\label{lemma:pvcvc-existence}
Let $G = (V, E)$ be a graph,  $L$ the set of vertices of $G$ with degree at most $p-1$ and \bmp{$S = V(G) \setminus L$}.
Then $G$ has a $p$-vertex/edge-connected vertex cover if and only if $S$ is a $p$-vertex/edge-connected vertex cover of $G$.
\end{lemma}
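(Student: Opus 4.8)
The plan is to prove the two implications separately. The reverse implication is immediate: if $S$ is itself a $p$-vertex/edge-connected vertex cover of $G$, then $G$ certainly has one. So all the work lies in the forward direction.

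For the forward direction, suppose $C$ is a $p$-vertex/edge-connected vertex cover of $G$. The first step is to show $C \subseteq S$. In any $p$-vertex-connected (resp.\ $p$-edge-connected) graph every vertex has degree at least $p$: otherwise, removing the fewer than $p$ incident edges, resp.\ the fewer than $p$ neighbours, would disconnect that vertex (here we use that such a graph has more than $p$, resp.\ at least two, vertices). Applying this to $G[C]$, every vertex of $C$ has degree at least $p$ in $G[C]$, hence at least $p$ in $G$, so $C \cap L = \emptyset$, i.e.\ $C \subseteq V(G)\setminus L = S$. Since a superset of a vertex cover is a vertex cover, $S$ is a vertex cover of $G$.

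The second step is to show $G[S]$ is $p$-vertex/edge-connected. Take any $v \in S \setminus C$. Because $C$ is a vertex cover and $v \notin C$, every edge of $G$ incident to $v$ has its other endpoint in $C$; and since $v \notin L$ we have $\deg_G(v) \ge p$. Hence $v$ has at least $p$ neighbours in $G$, all of them lying in $C$. In particular, no edge of $G$ has both endpoints in $S \setminus C$. Now order the vertices of $S \setminus C$ arbitrarily as $v_1,\dots,v_t$, and set $G_0 := G[C]$ and $G_i := G[C \cup \{v_1,\dots,v_i\}]$ for $1 \le i \le t$. By the previous observation, $G_i$ is obtained from $G_{i-1}$ by adding the single new vertex $v_i$ together with its at least $p$ edges to $C \subseteq V(G_{i-1})$. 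By Lemma~\ref{lem:exp}, $p$-vertex/edge-connectivity is preserved at each such step, so by induction $G_t = G[S]$ is $p$-vertex/edge-connected. Combined with the first step, $S$ is a $p$-vertex/edge-connected vertex cover of $G$, completing the proof.

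The argument is short and I expect no real obstacle; the only point requiring care is verifying that $G_i$ arises from $G_{i-1}$ by exactly the single-vertex addition handled by Lemma~\ref{lem:exp}, which is precisely why the fact that $S\setminus C$ induces an independent set (immediate from $C$ being a vertex cover) is recorded explicitly. Degenerate cases are automatic: if $S \setminus C = \emptyset$ then $G[S] = G[C]$ is already $p$-connected, and if $G$ has no $p$-vertex/edge-connected vertex cover at all then the stated equivalence holds vacuously.
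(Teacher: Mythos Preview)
Your proof is correct and follows essentially the same approach as the paper: show that any $p$-vertex/edge-connected vertex cover $C$ avoids $L$ (since vertices of low degree cannot survive in a $p$-connected graph), observe that $S\setminus C$ is an independent set whose vertices each have at least $p$ neighbours in $C$, and then invoke Lemma~\ref{lem:exp} to add them one at a time. Your version is slightly more explicit in spelling out the induction over $G_0,\dots,G_t$, whereas the paper applies Lemma~\ref{lem:exp} to the whole set $A=S\setminus C$ in one breath, but the underlying argument is identical.
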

\begin{proof}
The backward direction $(\Leftarrow)$ of the proof is trivial.
	
We will prove the forward direction $(\Rightarrow)$.
Let $S^*$ be a $p$-vertex/edge-connected vertex cover of $G$.
If \bmp{$S^* = V(G) \setminus L$}, then we are done.
Hence, we may assume that \bmp{$S^* \ne V(G) \setminus L.$} Observe that
$S^* \cap L = \emptyset$ as otherwise for every $u \in S^* \cap L$ deleting the vertices adjacent to  $u$ (the edges incident to $u$)  will make $G[S^*]$ disconnected. {(For the vertex connectivity variant, we rely here on the fact that~$|S^*|\geq p + 1$ by definition of $p$-connectivity, so that at least one vertex besides~$u$ remains in~$S^*$ when deleting its neighbors.)} 
Let $A = V(G) \setminus (S^* \cup L)$ and note that $A$ is an independent set as $S^*$ is a vertex cover. \bmp{For each vertex~$a \in A$ of the independent set, which has degree at least~$p$, all its neighbors belong to the vertex cover~$S^*$.} 
Hence, by Lemma \ref{lem:exp} adding $A$  to $S^*$ will result in  a $p$-vertex/edge-connected vertex cover of $G$.
So, $S = S^* \cup A$ is a $p$-vertex/edge-connected vertex cover of $G$, which
completes the proof.
\end{proof}

\section{FPT Algorithms}
\label{sec:improved-pedge-cvc}

In this section, we obtain FPT algorithms for the two VC generalizations. While the algorithm  for {\PVCVC} described in Section \ref{sec:pvcvc-basic-results} is quite simple and runs in time $\cO^*(2^{\cO(k^2)})$, 
the one for {\PECVC} obtained in Section \ref{sec:singleexp} is based on matroid techniques and is of complexity $\cO^*(2^{\cO(pk)})$. 

\subsection{FPT Algorithm for {\PVCVC}}
\label{sec:pvcvc-basic-results}

\begin{theorem}
\label{thm:pvcvc-is-fpt-simple}
\bmp{For every fixed $p \geq 1$, {\PVCVC} can be solved in  $\cO^*(2^{\cO(k^2)})$ deterministic time and polynomial space.} 
\end{theorem}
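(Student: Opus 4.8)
The plan is to solve the problem by enumerating minimal vertex covers and, on top of each, ``guessing the shape'' of a potential solution, reducing the remaining work to polynomially many bipartite-matching instances. The key structural fact is that, relative to any minimal vertex cover contained in a solution, the remaining vertices of the solution form an independent set all of whose neighbourhoods lie inside that vertex cover.

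\emph{Reduction to minimal vertex covers.} Suppose $S^\star$ is a $p$-connected vertex cover of $G$ with $|S^\star|\le k$, and let $H\subseteq S^\star$ be any inclusion-minimal vertex cover of $G$. Since $H$ is a vertex cover, $V(G)\setminus H$ is independent, so $J^\star:=S^\star\setminus H$ is independent and every $v\in J^\star$ satisfies $N_G(v)\subseteq H$; hence $G[S^\star]$ is exactly the graph obtained from $G[H]$ by adding, for each $v\in J^\star$, a new vertex with neighbourhood $N_G(v)\cap H$ (there are no edges inside $J^\star$). There are at most $2^k$ minimal vertex covers of size at most $k$, and they can be listed in $\cO^*(2^k)$ time and polynomial space --- for instance by a depth-first traversal of the standard vertex-cover branching tree, pruning at depth $k$ and testing minimality of the decided set at each leaf; see also~\cite{MolleRR08}. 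So it suffices, for each minimal vertex cover $H$ with $|H|\le k$, to decide whether $H$ can be extended by an independent set $J\subseteq V(G)\setminus H$ with $|J|\le k-|H|$ and $N_G(v)\subseteq H$ for all $v\in J$ so that $G[H\cup J]$ is $p$-connected.

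\emph{Guessing the shape of the extension.} Fix such an $H$. For every $t\in\{0,1,\dots,k-|H|\}$ and every $t$-tuple $(A_1,\dots,A_t)$ of subsets of $H$ with $|A_i|\ge p$, I form the graph $F=F(A_1,\dots,A_t)$ on vertex set $H\uplus\{z_1,\dots,z_t\}$ consisting of $G[H]$ together with all edges $z_ia$ for $a\in A_i$ (so $\{z_1,\dots,z_t\}$ is independent in $F$). There are at most $(2^{|H|})^t\le 2^{k^2}$ such tuples. For each one I run two polynomial-time checks: (i) is $F$ $p$-connected? and (ii) are there \emph{distinct} vertices $v_1,\dots,v_t\in V(G)\setminus H$ with $A_i\subseteq N_G(v_i)$ for all $i$? --- the latter being a bipartite matching between the $t$ slots and $V(G)\setminus H$. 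The algorithm answers \Yes{} iff both checks succeed for some $H$ and some tuple. Over all choices this costs $2^k\cdot(k+1)\cdot 2^{k^2}\cdot n^{\cO(1)}=\cO^*(2^{\cO(k^2)})$ time, and since each ingredient (the two enumerations and the two checks) uses only polynomial space, so does the whole procedure.

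\emph{Correctness, and the expected obstacle.} If (i)--(ii) succeed for $F(A_1,\dots,A_t)$ with witnesses $v_1,\dots,v_t$, then mapping $z_i\mapsto v_i$ exhibits $G[H\cup\{v_1,\dots,v_t\}]$ as a supergraph of $F$ on the same vertex set (legitimate because $N_G(v_i)\cap H\supseteq A_i$ and the $v_i$ are pairwise non-adjacent), hence $p$-connected since adding edges preserves $p$-connectivity; and it is a vertex cover of size $|H|+t\le k$. Conversely, given a solution $S^\star$, a minimal vertex cover $H\subseteq S^\star$, and $J^\star=\{v_1,\dots,v_t\}$ as above, set $A_i:=N_G(v_i)\cap H$; by the structural fact $F(A_1,\dots,A_t)\cong G[S^\star]$ via $z_i\mapsto v_i$, so $F$ is $p$-connected, while $|A_i|\ge p$ because a $p$-connected graph has minimum degree at least $p$ when $t\ge 1$ (the case $t=0$, where $S^\star=H$, is immediate), and $v_1,\dots,v_t$ themselves witness~(ii); thus the algorithm answers \Yes{} while processing $H$. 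I expect the only real subtlety to be making precise that checks (i)--(ii) capture extendability of $H$ exactly; this is delivered by the structural fact (extension vertices induce an independent set with neighbourhoods inside $H$) together with monotonicity of $p$-connectivity under edge addition.
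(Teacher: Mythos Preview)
Your proof is correct, but it takes a genuinely different route from the paper's. The paper proves Theorem~\ref{thm:pvcvc-is-fpt-simple} by a \emph{graph reduction}: it identifies the forced set~$H$ (high-degree vertices and neighbours of low-degree vertices), bounds the ``residual'' part~$R$ by~$2k^2$, and then reduces the independent part~$I$ by keeping at most~$k+1$ false twins per neighbourhood class, yielding a graph of size~$\cO(2^k k)$ on which it brute-forces all $\binom{\cO(2^k k)}{\leq k}=2^{\cO(k^2)}$ candidate solutions. Your approach instead enumerates minimal vertex covers~$H$ (as the paper does only for the \emph{edge}-connected variant in Section~\ref{sec:singleexp}), then guesses the multiset of neighbourhood profiles~$(A_1,\dots,A_t)$ of the extension vertices inside~$H$, and finally realises the profile by bipartite matching. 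Both approaches exploit the same structural fact---that extension vertices over a vertex cover are independent with neighbourhoods contained in the cover---but you leverage it more directly: rather than compressing the candidate pool and brute-forcing, you abstract away the identities of the extension vertices entirely and let matching recover them. This is arguably cleaner, avoids the false-twin bookkeeping, and is closer in spirit to the minimal-vertex-cover-plus-extension paradigm the paper itself adopts for the single-exponential edge-connectivity algorithm; on the other hand, the paper's reduction yields an explicit kernel-like bound on the search space as a byproduct.
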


\begin{proof}
Suppose that $(G, k)$ is the input instance of {\PVCVC}. \bmp{Since isolated vertices can be removed without changing the answer to the problem, we may assume without loss of generality that~$G$ does not have any isolated vertices. }
{We compute a set $H$ of vertices that have to be in any $p$-connected vertex cover of size at most $k$, as follows.
We put a vertex~$u$ into $H$ if the degree of this vertex is at least $k+1$ (so any vertex cover avoiding~$u$ is too large), or it is a neighbor to a vertex $v$ such that $\deg_G(v) \leq p - 1$ (so any vertex cover containing~$v$ is not $p$-connected).}
If $|H|>k$ then $(G, k)$ is a no-instance, so we may assume that $|H| \le k.$
Now, we partition the vertices of $G - H$ into two parts.
We put a vertex $u$ into $I$ if $N_G(u) \subseteq H$.
Otherwise, we put $u$ into $R$. Observe that $I$ is an independent set.
\bmp{If~$G[R]$ has more than~$k^2$ edges, then~$(G,k)$ is a no-instance since any vertex in $R$ has degree at most $k$, so that~$k$ vertices cover at most~$k^2$ edges of this induced subgraph. Thus in the remainder we assume~$G[R]$ has at most~$k^2$ edges. This implies~$|R| \leq 2k^2$ since each vertex~$v \in R$ has an edge to another vertex in~$R$, as by definition~$v \in R$ has a neighbor~$u$ outside~$H$ and~$u \notin I$ since~$u$ has neighbor~$v \notin H$, hence~$u\in R$.}

\bmp{We say that two vertices~$u \neq v$ are \emph{false twins} in~$G$ if~\magnus{$N_G(u)=N_G(v)$}. We will argue that when a vertex~$v \in I$ has more than~$k+1$ false twins, we can safely remove~$v$ without changing the answer: graph~$G-v$ has a $p$-connected vertex cover of size at most~$k$ if and only if~$G$ has one. For the first direction, note that any $p$-connected vertex cover~$S$ of~$G-v$ with~$|S|\leq k$ must contain all vertices of~$H$. \magnus{For a vertex $u \in H$, either $u$ has degree larger than~$k$ in~$G$, and the same holds in~$G-v$ since the vertex~$v$ we remove has~$k+1$ false twins remaining in~$G-v$; or $N_G(u)=N_{G-c}(u)$ and there is a vertex $w \in N_G(u)$ of degree less than $p$.}
Hence~$S$ contains all of~$H$, which by~$v \in I$ implies that~$S$ contains~$N_G(v)$ so that~$S$ is a vertex cover of~$G$. Since~$(G-v)[S] = G[S]$ this shows~$S$ is a $p$-connected vertex cover of~$S$.

For the converse direction, suppose~$S$ is a $p$-connected vertex cover of~$G$ of size at most~$k$, which implies~$S \supseteq H$. If~$v \notin S$, then since~$S$ is also a vertex cover in~$G-v$ and~$G[S]=(G-v)[S]$, this shows~$G-v$ has a $p$-connected vertex cover of size at most~$k$. Now suppose~$v \in S$ and let~$u_1, \ldots, u_{k+1} \in I$ be~$k+1$ false twins of~$v$ in~$G$. Since~$|S| \leq k$, there exists some~$u_i \notin S$; define~$S' := (S \setminus \{v\}) \cup \{u_i\}$. The graph~$G[S']$ is isomorphic to~$G[S]$ since~$v$ and~$u_i$ are false twins, hence~$G[S'] = (G-v)[S']$ is $p$-connected. Since all edges covered by~$v$ were already covered by~$H \subseteq S \cap S'$, the set~$S'$ is a $p$-connected vertex cover of~$G$ of size at most~$k$ that does not contain~$v$, and therefore is also a solution in~$G-v$.}

\bmp{The argumentation above shows that a vertex~$v \in I$ which has more than~$k+1$ false twins can safely be removed. After exhaustively applying this operation,}
for any $A \subseteq H$, there are at most $k + 1$ vertices in $I$ whose neighborhood equals $A$.
Hence, $|I| \leq 2^k (k+1)$. 
\bmp{It follows that after exhaustive reduction, the graph is partitioned in~$H \cup I \cup R$ with $|H| + |R| + |I| \leq k + 2k^2 + 2^k k$.}
\bmp{We now consider all subsets of $H\cup I\cup R$ with at most $k$ vertices, of which there are at most~$|H \cup I \cup R|^k \leq (k + 2k^2 + 2^k k)^k \leq 2^{\cO(k^2)}$. For each such subset~$S$, we test whether it forms a $p$-connected vertex cover in polynomial time via Menger's theorem. After trying all candidates for~$S$, we either find a $p$-connected vertex cover or conclude that the answer is no 
in time $\cO^*(2^{\cO(k^2)})$.} \bmp{Since the graph reduction can be computed in polynomial time and space, while the iteration over vertex sets of size~$k$ and the verification of a potential solution can easily be done in polynomial space, the space usage of the algorithm is polynomial in the size of the input.}
\end{proof}

\subsection{Single-Exponential Algorithm for $p$-Edge-Connected Vertex Cover}\label{sec:singleexp}

In this subsection, we provide a single-exponential algorithm for {\PECVC} using dynamic programming and the method of
\emph{representative sets}. This method was introduced to FPT algorithms by Marx~\cite{Marx09-matroid};
Fomin et al.~\cite{FominLPS16} presented additional applications and a faster method of computing such sets.
We \bmp{give} the definitions \bmp{below}.

\begin{definition}
  Let $M=(E,\mathcal{I})$ be a matroid and $X, Y \subseteq E$. 
  We say that $X$ \emph{extends} $Y$ in $M$ if $X \cap Y = \emptyset$
  and $X \cup Y \in \mathcal{I}$. Furthermore, let $\mathcal{S}$ be a
  family of subsets of $E$.  A subfamily $\hat{ \mathcal{S}} \subseteq \mathcal{S}$
  is \emph{$q$-representative} for $\mathcal{S}$ if the following holds:
  for every set $Y \subseteq E$ with $|Y| \leq q$, there is a set
  $X \in \mathcal{S}$ that extends $Y$ if and only if there is 
  a set $\hat{X} \in \hat{\mathcal{S}}$ that extends $Y$.
\end{definition}

\begin{theorem}[Fomin et al.~\cite{FominLPS16}]
  \label{thm:repset}
  Let $M=(E,{\mathcal{I}})$ be a linear matroid of rank $p+q=k$ over some field $\mathbb{F}$ 
  and let $\mathcal{S}=\{S_1, \ldots, S_t\}$ be a family of
  independent sets in $M$, each of cardinality $p$. A {$q$-representative} 
  subset $\hat{\mathcal{S}} \subseteq \mathcal{S}$ of size
  $|\hat{ \mathcal{S}}| \leq \binom{p+q}{p}$ can be computed in
  $\cO(k^{\cO(1)} \binom{p+q}{p}^{\omega-1}t)$ field operations.
  Here, $\omega < 2.37$ is the matrix multiplication exponent. 
\end{theorem}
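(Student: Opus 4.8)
The plan is to prove Theorem~\ref{thm:repset} by the exterior‑algebra (wedge product) approach of Marx and of Fomin et al. Fix a matrix representation $A$ of $M$ over $\mathbb{F}$ with columns $\{a_e : e \in E\}$ spanning $\mathbb{F}^k$, and recall the standard fact that a set $T \subseteq E$ is independent in $M$ if and only if $\bigwedge_{e \in T} a_e \neq 0$ in $\Lambda^{|T|}(\mathbb{F}^k)$. For each $S_i \in \mathcal{S}$ I would form $v_i := \bigwedge_{e \in S_i} a_e$, a vector in the $\binom{k}{p} = \binom{p+q}{p}$‑dimensional space $\Lambda^p(\mathbb{F}^k)$. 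The key point is that for any $Y$ with $|Y| = q = k - p$, writing $w_Y := \bigwedge_{e \in Y} a_e$, the wedge product defines a linear map $\phi_Y \colon \Lambda^p(\mathbb{F}^k) \to \Lambda^k(\mathbb{F}^k) \cong \mathbb{F}$ by $v \mapsto v \wedge w_Y$, and $S_i$ extends $Y$ precisely when $\phi_Y(v_i) = \pm\bigwedge_{e \in S_i \cup Y} a_e \neq 0$.

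For correctness I would take $\hat{\mathcal{S}} \subseteq \mathcal{S}$ to be any inclusion‑maximal subfamily for which $\{v_i : S_i \in \hat{\mathcal{S}}\}$ is linearly independent over $\mathbb{F}$; then $|\hat{\mathcal{S}}| \leq \dim \Lambda^p(\mathbb{F}^k) = \binom{p+q}{p}$, and by maximality this set spans the same subspace $W$ as all of $\{v_i : S_i \in \mathcal{S}\}$. To verify that $\hat{\mathcal{S}}$ is $q$‑representative, take $Y \subseteq E$ with $|Y| \leq q$ that is extended by some $S_i \in \mathcal{S}$ (the converse direction of the definition is immediate from $\hat{\mathcal{S}} \subseteq \mathcal{S}$). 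First pad $Y$: since $S_i \cup Y$ is independent of size $p + |Y| \le k = \rank(M)$, extend it to a basis $B$ of $M$ and set $Y' := B \setminus S_i$, so $|Y'| = q$, $Y \subseteq Y'$, $S_i \cap Y' = \emptyset$, and $S_i \cup Y' = B$ is independent; thus $\phi_{Y'}(v_i) \neq 0$. Since $v_i \in W$ and $\hat{\mathcal{S}}$ spans $W$, the functional $\phi_{Y'}$ cannot vanish on all of $\{v_{\hat X} : \hat X \in \hat{\mathcal{S}}\}$, so some $\hat X \in \hat{\mathcal{S}}$ satisfies $\phi_{Y'}(v_{\hat X}) \neq 0$, i.e.\ $\hat X \cup Y'$ is independent; then $\hat X$ extends $Y' \supseteq Y$ and hence extends $Y$, as required.

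It remains to implement this within the stated budget. Computing one $v_i$ means evaluating all $p \times p$ minors of the $k \times p$ submatrix of $A$ indexed by $S_i$, costing $k^{\cO(1)}$ field operations, so all $t$ vectors cost $k^{\cO(1)} t$ operations. To extract a maximal linearly independent subfamily cheaply I would not run Gaussian elimination vector by vector (that would lose a factor of about $\binom{p+q}{p}$); instead I would process the $v_i$ in $\lceil t/N \rceil$ consecutive batches of $N := \binom{p+q}{p}$ vectors, maintaining the current partial basis in row‑echelon form. Reducing a whole batch of $N$ vectors modulo a row‑echelon basis of dimension at most $N$ is a single $N \times N$ block matrix multiplication of cost $\cO(N^{\omega})$, followed by ordinary elimination on the reduced $N \times N$ batch (which supplies the new basis vectors) at cost $\cO(N^{\omega})$ as well; summing over the $\lceil t/N \rceil$ batches gives $\cO(N^{\omega-1} t)$, and folding in the $k^{\cO(1)}$ factor from the minor computations yields $\cO(k^{\cO(1)} \binom{p+q}{p}^{\omega-1} t)$.

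I expect the main obstacle to be the running‑time analysis rather than correctness: obtaining the exponent $\omega-1$ on $\binom{p+q}{p}$ (rather than $\omega$, or an extra linear factor) depends on the batching idea and on being careful that each batch's reduction against the maintained basis is genuinely charged to one fast matrix multiplication. The other delicate point is the padding step, which crucially exploits that $p+q$ is exactly the rank of $M$, so that $\Lambda^{p+q}(\mathbb{F}^k)$ is one‑dimensional and $\phi_Y$ is a well‑defined scalar functional.
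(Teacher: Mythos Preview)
The paper does not actually prove Theorem~\ref{thm:repset}; it is quoted as a black-box result from Fomin et al.~\cite{FominLPS16} and used without proof. So there is no ``paper's own proof'' to compare against.

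That said, your proposal is precisely the exterior-algebra argument of Fomin et al.: encode each $p$-set by its wedge vector in $\Lambda^p(\mathbb{F}^k)$, observe that extension by a $q$-set $Y$ is detected by the linear functional $v \mapsto v \wedge w_Y$, take a maximal linearly independent subfamily, and use batching with fast matrix multiplication to hit the $\binom{p+q}{p}^{\omega-1}$ exponent. The padding step and the correctness argument are both fine.

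One small accounting slip: computing a single $v_i$ does \emph{not} cost $k^{\cO(1)}$ field operations, since $v_i$ has $\binom{k}{p}$ coordinates (one per $p\times p$ minor); the correct cost is $\binom{k}{p}\cdot k^{\cO(1)}$ per vector and hence $\binom{p+q}{p}\cdot k^{\cO(1)}\cdot t$ in total. This is harmless for the final bound because $\omega-1 \ge 1$, so $\binom{p+q}{p} \le \binom{p+q}{p}^{\omega-1}$ and the term is absorbed into $\cO\bigl(k^{\cO(1)}\binom{p+q}{p}^{\omega-1} t\bigr)$; just fix the intermediate claim.
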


Our algorithm is based on the following result, due to Agrawal et
al.~\cite{AgrawalMPS17}.  An \emph{out-branching} of a
digraph is a spanning subgraph which is a rooted tree, where every arc is
oriented away from the root; or equivalently, a tree where the root has no
incoming arc and every other vertex has
exactly one incoming arc. 

\begin{lemma}[Agrawal et al.~\cite{AgrawalMPS17}]
  \label{lemma:agrawal}
  Let $G=(V,E)$ be an undirected graph and let $v_r \in V$.  Define a
  digraph $D_G=(V,A_E)$ by adding the arcs $(u,v), (v,u)$ to $A_E$ for
  every edge $uv \in E$. Then $G$ is $p$-edge-connected if and only if
  $D_G$ has $p$ pairwise arc-disjoint out-branchings rooted in $v_r$. 
\end{lemma}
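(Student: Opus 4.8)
The plan is to derive this lemma from Edmonds' classical arborescence-packing theorem, combined with an elementary correspondence between edge cuts of $G$ and in-cuts of $D_G$. Recall Edmonds' theorem: for a digraph $D$ with a distinguished root $r$, $D$ contains $p$ pairwise arc-disjoint out-branchings rooted at $r$ if and only if every nonempty set $S \subseteq V(D)\setminus\{r\}$ has at least $p$ arcs of $D$ entering it (equivalently, by Menger's theorem, there are $p$ arc-disjoint directed paths from $r$ to every other vertex). So it suffices to show that $G$ is $p$-edge-connected exactly when this in-degree condition holds in $D_G$.

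The key observation is the cut correspondence. Fix any $S \subseteq V$ with $v_r \notin S$. An edge $uv \in E$ with exactly one endpoint in $S$, say $u \in S$ and $v \notin S$, contributes the two arcs $(u,v),(v,u)$ to $D_G$; of these, $(v,u)$ enters $S$ while $(u,v)$ leaves $S$. An edge with both endpoints in $S$, or both outside $S$, contributes no arc across the boundary of $S$. Hence the number of arcs of $D_G$ entering $S$ equals $|(S,\bar S)|$, the size of the cut $(S,\bar S)$ in $G$.

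Combining the two facts gives both directions. If $G$ is $p$-edge-connected, then $|(S,\bar S)| \geq p$ for every proper nonempty $S \subsetneq V$; in particular every nonempty $S \subseteq V\setminus\{v_r\}$ has at least $p$ entering arcs in $D_G$, so Edmonds' theorem yields $p$ pairwise arc-disjoint out-branchings of $D_G$ rooted at $v_r$. Conversely, if $G$ is not $p$-edge-connected, take a cut $(S,\bar S)$ of $G$ with fewer than $p$ edges; replacing $S$ by $\bar S$ if needed (the two cuts have equal size), we may assume $v_r \notin S$ and $S \neq \emptyset$, so fewer than $p$ arcs of $D_G$ enter $S$. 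Every out-branching rooted at $v_r$ must use at least one arc entering $S$ in order to reach the vertices of $S$, so $p$ pairwise arc-disjoint such out-branchings would require $p$ distinct arcs entering $S$ — impossible.

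The only substantial ingredient is Edmonds' theorem itself; the cut correspondence and the ``only if'' direction are routine counting. The one point to handle with care is the role of the root in Edmonds' theorem, whose hypothesis concerns sets avoiding $r$: we use that a cut and its complement have the same size to push $v_r$ out of the relevant set. Equivalently, one can phrase the whole argument through Menger's theorem, observing that the cut correspondence makes $D_G$ have $p$ arc-disjoint $v_r$-to-$v$ paths for every $v$ precisely when $G$ is $p$-edge-connected, and then applying the arborescence-packing theorem in that form.
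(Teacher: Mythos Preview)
The paper does not supply its own proof of this lemma; it simply cites the result from Agrawal et al.~\cite{AgrawalMPS17}. Your argument via Edmonds' arborescence-packing theorem is correct and is in fact the standard route to this characterization: the cut correspondence you describe is exactly right, and the one genuinely delicate point---ensuring the set $S$ in Edmonds' hypothesis avoids the root by passing to the complement if necessary---is handled cleanly. Nothing is missing.
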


We will use representative sets to facilitate a fast
dynamic-programming algorithm.  We begin by observing how
out-branchings are realized using matroid tools.  Let $G=(V,E)$
and $D_G=(V,A_E)$ be as defined above and let $v_r \in V$.
The \emph{out-partition matroid (with root $v_r$)} for a digraph $D$
with $v_r \in V$ is the partition matroid with ground set $A_E$ where
arcs are partitioned according to their heads and where arcs $(u,v_r)$
are dependent.
It means that an arc set $F$ is independent in the out-partition matroid if and only if $v_r$ has in-degree 0 in $F$ and every other
vertex has in-degree at most one in $F$.
{The \emph{graphic matroid on ground set $A_E$} is the graphic matroid for $G$, where every arc $(u,v)$ represents its underlying edge $uv$ and where anti-parallel arcs $(u,v)$, $(v,u)$ represent distinct copies of $uv$.  Note that $\{(u,v),(v,u)\}$ is thus a dependent set.}
The following proposition then holds true from the characteristics of out-partition matroid and graphic matroid.


\begin{proposition} \label{prop:out-branching} $F$ is the arc set of
  an out-branching rooted in $v_r$ if and only if $|F|=|V(G)|-1$ and $F$
  is independent in both the out-partition matroid for $D_G$ with root $v_r$
  and the graphic matroid for $G$ on ground set $A_E$.
\end{proposition}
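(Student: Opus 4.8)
The plan is to prove both directions of the equivalence by directly unwinding the definitions of the out-partition matroid and of the graphic matroid on the ground set $A_E$. One direction is immediate; the only substantive point will be, in the converse direction, to argue that the orientation of a spanning tree is \emph{forced} to be an out-branching once the in-degree constraints at every vertex are imposed.

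For the forward direction, let $F$ be the arc set of an out-branching of $D_G$ rooted at $v_r$. By definition its underlying undirected graph is a spanning tree of $G$, so $|F|=|V(G)|-1$ and $F$ contains no undirected cycle and in particular no anti-parallel pair $\{(u,v),(v,u)\}$; hence $F$ is independent in the graphic matroid on $A_E$. Also by definition $v_r$ has in-degree $0$ and every other vertex has in-degree exactly $1$ in $F$, so $F$ contains no arc of the form $(u,v_r)$ and no two arcs with the same head, which is precisely independence in the out-partition matroid with root $v_r$. This direction is routine bookkeeping.

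For the converse, assume $|F|=|V(G)|-1$ and that $F$ is independent in both matroids. Independence in the graphic matroid says $(V(G),F)$ is a forest; a forest has no parallel edges, so $F$ has no anti-parallel pair and $(V(G),F)$ is a simple forest with $|V(G)|$ vertices and $|V(G)|-1$ edges, hence a spanning tree. Independence in the out-partition matroid gives $\deg^-_F(v_r)=0$ and $\deg^-_F(v)\le 1$ for every $v\neq v_r$; since $\sum_{v}\deg^-_F(v)=|F|=|V(G)|-1$ and there are exactly $|V(G)|-1$ vertices other than $v_r$, every such vertex has in-degree exactly $1$.

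It remains to show that a digraph on $V(G)$ whose underlying graph is a spanning tree $T$, in which $v_r$ has in-degree $0$ and all other vertices have in-degree $1$, must be an out-branching rooted at $v_r$, i.e.\ every arc is oriented away from $v_r$ along $T$. I would prove this by induction on the length of the unique $T$-path $P_v$ from $v_r$ to a vertex $v\neq v_r$: the final edge of $P_v$ must be oriented into $v$, since otherwise either $v_r$ receives an incoming arc (when $P_v$ has length one), or the penultimate vertex $w$ of $P_v$ — which by the induction hypothesis already receives its unique incoming arc from within $P_w$ — would receive a second incoming arc, contradicting $\deg^-_F(w)=1$. As every edge of $T$ lies on $P_v$ for some $v$, this forces all arcs to point away from $v_r$, and together with the spanning-tree property that is exactly the definition of an out-branching. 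This last forcing argument is the main (and essentially only) obstacle; everything else is the matroid axioms and a degree count.
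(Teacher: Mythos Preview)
Your proof is correct. The paper does not actually give a proof of this proposition; it simply states that it ``holds true from the characteristics of out-partition matroid and graphic matroid'' and moves on. Your argument supplies exactly the routine verification the paper omits: the forward direction is immediate from the definitions, and for the converse you correctly combine the forest-plus-edge-count argument to get a spanning tree with the in-degree count to force every non-root vertex to have in-degree exactly one, and then use a short induction along root-to-vertex paths to show the orientation is forced. This is the standard and essentially unique way to fill in the details, so there is nothing to compare.
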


We extend this to construct a matroid that can be used to verify the
condition of Lemma~\ref{lemma:agrawal}. 

\begin{lemma} \label{lemma:our-condition}
  Let $v_r \in V(G)$ be fixed vertex. 
  Let $M$ be the direct sum of $2p+1$ matroids $M_i$ as follows.
  Matroids $M_1$, $M_3$, \ldots, $M_{2p-1}$ are copies of the graphic matroid of $G$
  {on ground set $A_E$}. 
  Matroids $M_2$, $M_4$, \ldots, $M_{2p}$ are copies of the
  out-partition matroid for $D_G$ with root $v_r$.
  Matroid $M_{2p+1}$ is the uniform matroid over $A_E$ with rank $p(k-1)$.
  Let $F \subseteq A_E$.  The following are equivalent.
  \begin{enumerate}
  \item $F$ is the arc set of $p$ pairwise arc-disjoint out-branchings
    rooted in $v_r$ in $D_G[S]$ for some $S \in \binom{V(G)}{k}$ with $v_r \in S$.
  \item $|V(F)|=k$, $|F|=p(k-1)$, $v_r \in V(F)$, and there is an
    independent set $I$ in $M$ where every arc $a \in F$ occurs in $I$
    precisely in its copies in matroids $M_{2i-1}$, $M_{2i}$ and
    $M_{2p+1}$  for some $i \in [p]$. \label{item:ind-set-from-arc-set}
  \end{enumerate}
  Furthermore, a representation of $M$ can be constructed in
  deterministic polynomial time. 
\end{lemma}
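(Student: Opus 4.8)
The plan is to prove the equivalence of (1) and (2) by unwinding the definition of the direct sum $M$, and to handle the representability claim with the standard constructions recalled in Section~\ref{sec:matroid-prelims}. For the representation: the graphic‑matroid copies $M_1,M_3,\dots,M_{2p-1}$ are representable over any field; the uniform matroid $M_{2p+1}$ of rank $p(k-1)$ over $A_E$ is representable over any field of size exceeding $|A_E|=2|E(G)|$, say via a Vandermonde matrix; and each out‑partition matroid $M_{2i}$ is a partition matroid whose class of arcs entering $v_r$ consists of loops, so it is represented over any field by indexing coordinates with $V(G)\setminus\{v_r\}$, mapping an arc $(u,v)$ with $v\ne v_r$ to the unit vector $e_v$ and every arc entering $v_r$ to the zero vector. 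Fixing one common field $\mathbb{F}$ with $|\mathbb{F}|>2|E(G)|$, a representation of $M$ is the block‑diagonal matrix built from representations of the $2p+1$ summands, all computable in deterministic polynomial time for fixed $p$.

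For $(1)\Rightarrow(2)$ I would start from $p$ pairwise arc‑disjoint out‑branchings $T_1,\dots,T_p$ rooted at $v_r$ inside $D_G[S]$ for some $S\in\binom{V(G)}{k}$ with $v_r\in S$, and set $F_i:=A(T_i)$, $F:=\bigcup_i F_i$. As each $T_i$ is a spanning out‑branching of $D_G[S]$ we get $V(F_i)=S$ and $|F_i|=k-1$, hence $V(F)=S$, $|V(F)|=k$, $v_r\in V(F)$, and $|F|=p(k-1)$ by arc‑disjointness. Each $F_i$ is independent in the out‑partition matroid ($v_r$ has in‑degree $0$, every other vertex in‑degree $1$) and, since the undirected subgraph underlying $T_i$ is a spanning tree of $G[S]$, independent in the graphic matroid on $A_E$ (it contains no anti‑parallel pair); and $|F|=p(k-1)=\rank(M_{2p+1})$ makes $F$ independent in $M_{2p+1}$. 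Taking $I$ to be the copies of $F_i$ in $M_{2i-1}$ and $M_{2i}$ over all $i$, together with the copies of $F$ in $M_{2p+1}$, gives an independent set of $M$ with exactly the prescribed form, which is (2).

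For $(2)\Rightarrow(1)$, given the independent set $I$ of (2), I would define $F_i$ as the arcs of $F$ whose copies in $M_{2i-1}$, $M_{2i}$, $M_{2p+1}$ all lie in $I$, so that $F=F_1\uplus\dots\uplus F_p$, and set $S:=V(F)$ with $|S|=k$, $v_r\in S$. Independence of $I$ in the graphic matroid $M_{2i-1}$ forces each $F_i$ to contain no anti‑parallel pair and to have a forest as underlying undirected graph, with $|F_i|$ edges on a vertex set $V(F_i)\subseteq S$; being a forest, it has $|V(F_i)|-|F_i|\ge 1$ components (assuming $k\ge 2$, so $|F_i|\ge 1$; the case $k=1$ is trivial), whence $|F_i|\le k-1$. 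Summing over $i$ and using $|F|=p(k-1)$ gives $|F_i|=k-1$ for all $i$, and then $|V(F_i)|-(k-1)\ge 1$ together with $|V(F_i)|\le k$ forces $|V(F_i)|=k$ and a single component; i.e.\ $F_i$ is an orientation of a \emph{spanning} tree of $G[S]$. Independence of $I$ in the out‑partition matroid $M_{2i}$ gives in‑degree $0$ at $v_r$ and $\le 1$ at every other vertex, and since the in‑degrees over $S$ sum to $|F_i|=k-1=|S\setminus\{v_r\}|$, every non‑root vertex has in‑degree exactly $1$. An orientation of a tree with these in‑degrees is an out‑branching rooted at $v_r$ — trace in‑arcs toward the root, which terminates since the underlying graph is acyclic and only $v_r$ has no in‑arc — so $F$ is the union of the arc sets of $p$ out‑branchings rooted at $v_r$ in $D_G[S]$, pairwise arc‑disjoint because the $F_i$ partition $F$. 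This establishes (1).

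The genuine work is concentrated in $(2)\Rightarrow(1)$: the double‑counting step that promotes each $F_i$ from ``forest satisfying local in‑degree bounds'' to ``spanning out‑branching of $G[S]$'', using only the global size constraint $|F|=p(k-1)$ and $|V(F)|=k$. Everything else — the translation between independence in a direct sum and independence in each summand, and the representability bookkeeping — is routine. So the main obstacle is ensuring the interplay of the three matroid families is tight enough that equality in the counting is forced, which it is precisely because each of the $p$ forests lives on at most $k$ common vertices.
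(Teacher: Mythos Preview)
Your proposal is correct and follows essentially the same route as the paper: block-diagonal representation from the standard constructions, the obvious direction $(1)\Rightarrow(2)$ by placing each out-branching in its own pair of matroids, and the reverse direction by partitioning $F$ into $F_1,\dots,F_p$, bounding $|F_i|\le k-1$ via the forest/in-degree constraints on $V(F)$, and then using $|F|=p(k-1)$ to force each $F_i$ to be a spanning out-branching of $D_G[V(F)]$. The only cosmetic difference is that the paper appeals to Proposition~\ref{prop:out-branching} where you re-derive the out-branching characterization by hand; also, your parenthetical ``assuming $k\ge 2$, so $|F_i|\ge 1$'' is not quite the right justification (some $F_i$ could a priori be empty even for $k\ge 2$), but the bound $|F_i|\le k-1$ holds trivially in that case, so the argument is unaffected.
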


\begin{proof}
  Let $M$ and $M_i$, $i \in [2p+1]$ be as described. We note that
  graphic matroids, partition matroids (hence out-partition matroids),
  and uniform matroids all have deterministic representations~\cite{OxleyBook2}.
  Specifically, graphic matroids and partition matroids are
  representable over any field, and uniform matroids over any sufficiently large field.
  Hence a representation of $M$ can be constructed as a diagonal block
  matrix with $2p+1$ blocks, where each block $i$ is a representation of the matroid
  $M_i$ over some sufficiently large field $\mathbb{F}$ (e.g., $\mathbb{F}=\mathbb{F}_q$ for
  some prime $q>p(k-1)$).

  On the one hand, let $F$ meet the conditions in the first item.
  Since $F$ is spanning for $D_G[S]$ we have $|V(F)|=k$ and $v_r \in V(F)$, 
  and furthermore $|F|=p(k-1)$ since each out-branching is spanning.
  Furthermore, any arc set of an out-branching is independent
  in both the graphic matroid and the out-partition matroid by Prop.~\ref{prop:out-branching}.
  Letting $F=F_1 \cup \ldots \cup F_p$ where $F_i$ is the arc set of
  an out-branching for every $i \in [p]$, we can then construct $I$
  by letting an arc $a \in F_i$ be present in $I$ in matroids
  $M_{2i-1}$, $M_{2i}$ and $M_{2p+1}$. 
  Hence all conditions in the second item are met.

  Now assume that $F$ meets the conditions in the second item. 
  Partition $F=F_1 \cup \ldots \cup F_p$ where $F_i$, $i \in [p]$
  contains those arcs of $F$ that are represented in matroids
  $M_{2i-1}$ and $M_{2i}$. By a counting argument, $|F_i|=k-1$ for
  every $i \in [p]$. Furthermore, $F_i$ is the arc set of an
  out-forest (since its underlying undirected edge set is acyclic
  and every vertex has in-degree at most 1 in $F_i$).
  But then $F_i$ must form a spanning tree of $V(F)$ by counting,
  hence an out-branching of $D_G[V(F)]$ by Prop.~\ref{prop:out-branching}. Furthermore $v_r \in V(F)$ 
  and every arc into $v_r$ is dependent in $M_{2i}$; thus every
  out-branching $F_i$ is rooted in $v_r$.
\end{proof}

{To compute a representative family}, we need to modify $M$ so that the set $I$ being described is a basis of $M$, not just an
independent set.  This is more technical, but can be done
deterministically using the operation of deterministic truncation, due
to Lokshtanov et al.~\cite{LokshtanovMPS18}, as noted in Section~\ref{sec:matroid-prelims}.
We have the following lemma that ensures our matroid is linear, can be constructed in polynomial time, and satisfies the properties that we need.


\begin{lemma} \label{lemma:trunc-rep}
  Let $M'=M_1 \oplus \ldots \oplus M_{2p+1}$ be the matroid defined in Lemma~\ref{lemma:our-condition}.
  Let $M$ be the truncation of $M'$ to rank $r=3p(k-1)$.  Then a representation of $M$
  can be computed in deterministic polynomial time.  Furthermore, every set $I$
  as defined in Item~\ref{item:ind-set-from-arc-set} of Lemma~\ref{lemma:our-condition}
  is a basis of $M$. 
\end{lemma}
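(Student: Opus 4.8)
The plan is to split the argument into two essentially independent parts: first, that $M$ is a linear matroid with a representation computable in deterministic polynomial time; second, that any set $I$ of the form described in Item~\ref{item:ind-set-from-arc-set} of Lemma~\ref{lemma:our-condition} is a basis of $M$.

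For the representation, I would proceed as follows. By Lemma~\ref{lemma:our-condition} the summands $M_1,\dots,M_{2p+1}$ of $M' = M_1 \oplus \ldots \oplus M_{2p+1}$ are graphic, out-partition (hence partition), and uniform matroids, all representable over one common sufficiently large finite field $\mathbb{F}$, and the block-diagonal matrix built from these representations is a representation of $M'$ obtainable in deterministic polynomial time. I would then invoke the deterministic truncation procedure of Lokshtanov et al.~\cite{LokshtanovMPS18}, recalled in Section~\ref{sec:matroid-prelims}: given a linear representation of a matroid and an integer $r$, it outputs in deterministic polynomial time (over a suitable extension field) a linear representation of the truncation to rank $r$. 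Applying this with $r = 3p(k-1)$ to the representation of $M'$ gives a representation of $M$. This step is where the external machinery is used; the rest is elementary.

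For the basis property, I would fix an arbitrary set $I$ as in Item~\ref{item:ind-set-from-arc-set}, with underlying arc set $F$ (so $|F| = p(k-1)$) and its level assignment $i \colon F \to [p]$. The two observations I need are: (i) each arc of $F$ contributes to $I$ exactly its three copies in the ground sets of $M_{2i-1}$, $M_{2i}$, and $M_{2p+1}$, and these lie in pairwise distinct copies of $A_E$, so $|I| = 3|F| = 3p(k-1) = r$; and (ii) $I$ is independent in $M'$, which is part of the statement of Item~\ref{item:ind-set-from-arc-set}. From (ii) we get $\rank(M') \geq |I| = r$, so the truncation $M$ has rank exactly $r$, and $I$, being independent in $M'$ with $|I| = r$, is also independent in $M$. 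An independent set whose size equals the rank of the matroid is a basis, completing the argument. (If $F$ fails the conditions of Lemma~\ref{lemma:our-condition}, then no such $I$ exists and the claim is vacuous.)

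Since the independence of $I$ in $M'$ and the structural description of $I$ come for free from Lemma~\ref{lemma:our-condition}, I do not expect a genuine obstacle here; the only points to be careful about are that the truncation rank $r = 3p(k-1)$ is actually attained by $M'$ — which is exactly what observation (ii) guarantees — and that the deterministic-truncation result of~\cite{LokshtanovMPS18} applies verbatim to the block-diagonal representation of the direct sum $M'$.
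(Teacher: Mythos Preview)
Your proposal is correct and follows essentially the same route as the paper: build a block-diagonal representation of $M'$ over a common field, apply the deterministic truncation of Lokshtanov et al.~\cite{LokshtanovMPS18}, and then observe that any $I$ from Item~\ref{item:ind-set-from-arc-set} is independent in $M'$ of size exactly $r=3p(k-1)$, hence a basis of the truncation. Your extra remark that $\rank(M') \geq r$ (witnessed by any such $I$, with the vacuous case handled explicitly) makes the basis claim slightly more airtight than the paper's one-line version, but there is no substantive difference in approach.
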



\begin{proof}
  By standard methods~\cite{OxleyBook2}, for every $i \in [2p+1]$ 
  we can compute a matrix $A_i$ representing the matroid $M_i$,
  where furthermore all matrices $A_i$ are over a common field $\mathbb{F}$.
  Thereby, we can also construct a matrix $A$ representing their direct sum $M'$.
  A matrix $A'$ representing the $r$-truncation of $M$ over a finite field $\mathbb{F}'$
  can then be constructed in deterministic polynomial time by 
  Lokshtanov et al.~\cite{LokshtanovMPS18}.
  For the final statement, since every set $I$ as described is independent
  in $M'$ and has $|I|=r$, every such set must be a basis of $M$. 
\end{proof}

Our algorithm for {\sc \pcvc} works as follows.
First, we enumerate all minimal vertex covers of $G$ of size at most $k$.
The number of such minimal vertex covers is at most $2^k$, and they can be enumerated in $\cO^*(2^k)$ time and space (see~\cite{MolleRR08}).
Then, for every minimal vertex cover $H$ of $G$, we use representative
sets and the above characterization to check if it can be extended to a feasible $p$-edge-connected vertex cover $S^{\star} \supseteq H$ of size at most $k$.
In detail, consider a graph $G$ with a vertex cover $H$ which is not
$p$-edge-connected, where we are looking for a set $S^{\star} \supset H$
such that $G[S^{\star}]$ is $p$-edge-connected and $|S^{\star}| \leq k$.
By iteration over $k$, we may assume that $|S^{\star}|<k$ is impossible.


Let us fix $v_r \in H$. 
By \bmp{Lemma~\ref{lemma:our-condition}}, there exists such a set $S^{\star}$ if and only if there is an independent set $I$ in $M$ meeting the following conditions:
\begin{enumerate}
\item $H \subset V(I)$ and $|V(I)|=k$.
\item $|I|=3p(k-1)$.
\item Every arc which is represented in $I$ is represented in
  precisely three matroids $M_{2i-1}$, $M_{2i}$, $M_{2p+1}$ in $M$ for some $i \in [p]$.\label{cond:three:matroids}
\end{enumerate}

The first condition can be 
reformulated as $H \subset V(I)$ and $|V(I) \setminus H|=k-|H|$. 

We can construct $I$ via dynamic programming.  The dynamic program is
set up via a table keeping track of $|V(I)|$ and $|I|$, and we ensure
that every time we add some arc $a$ to a set $I$ we add it in
precisely three layers, as described.  Thanks to the use of
representative sets, each table entry in the dynamic programming only
needs to contain $2^{\cO(pk)}$ partial solutions.

We provide the details of this scheme in the proof of the main result of this section, Theorem~\ref{thm:improved-algo-pecvc} (we restate it here).


\pECVCSinglyExponentialAlgo*

\begin{proof}
  As outlined above, we may assume that we have a vertex cover $H$
  that is not $p$-edge-connected and have already tested that there is
  no $p$-edge-connected vertex cover with at most $k-1$ vertices.
Arbitrarily order the vertices of $V(G) \setminus H$: $v_1,\ldots,v_{n'}$. Construct $M$ as in Lemma~\ref{lemma:trunc-rep}. 
  We create a dynamic programming table $T[(i,j,q)]$ with entries
  indexed by $(i,j,q)$ for $i \leq k$, $j \leq n'$ and $q \leq 3p(k-1)$. 
  Every table slot $T[(i,j,q)]$ is a collection of independent sets in $M = (E(M), \cI)$.
  Any independent set $I \in T[(i, j, q)]$ satisfies $|V(I) \setminus H|=i$, the largest-index vertex of $V(G) \setminus H$
  occurring in $V(I)$ is $v_j$, and $|I|=q$. Furthermore, for every
  independent set $I$ in the table, every arc occurring in $I$ occurs 
  in precisely three layers, as described by \bmp{Condition  \ref{cond:three:matroids}}. We may then check for a
  solution by checking whether any slot $T[(k-|H|,j,3p(k-1))]$ is
  non-empty. 

  For an arc $a \in A_E$, define $F_{a,i}$ to be the set consisting of
  the copies of $a$ in $M_{2i-1}$, $M_{2i}$ and $M_{2p+1}$. 

  We initialize the slots $T[(0,0,q)]$ by a dynamic programming
  process within $D_G[H]$. Initialise $T[(0,0,0)]=\{\emptyset\}$.
  Enumerate the arcs of $D_G[H]$ as $a_1, \ldots, a_{m'}$. 
  Then, for every $q = 3i$, where  $i \in [p(k - 1)],$
  fill in the slot $T[(0,0,q)]$
  from $T[(0,0,q-3)]$ as follows:
  \begin{enumerate}
  \item For every $I \in T[(0,0,q-3)]$, every arc $a_j$, and every $i
    \in [p]$ such that $F_{a_j,i}$ extends $I$, add $I \cup F_{a_j,i}$ to
    $T[(0,0,q)]$
  \item Reduce $T[(0,0,q)]$ to a $(3p(k-1)-q)$-representative set in
    $M$ \bmp{by applying Theorem~\ref{thm:repset}, which is justified since the rank of~$M$ is $3p(k-1)$.}
  \end{enumerate}
  This is a polynomial number of steps, where every set $I \in T[(0,0,q)]$
  is used in a polynomially bounded number of new sets. Hence every
  time we apply Theorem~\ref{thm:repset} at a level $q$, we do so with
  $t \leq (k+p)^{\cO(1)} \binom{3p(k-1)}{q}$. Thus up to polynomial
  factors each step takes time $\binom{3p(k-1)}{q}^\omega=2^{\cO(pk)}$. 

  For slots $T[(i,j,q)]$, we process vertices $v_j$ one at a time.
  The process is slightly more complex since each vertex $v_j$ can be
  incident to $\cO(k)$ arcs in $A_E$, but the principle is the same.
  We process slots $T[(i,j,q)]$ in lexicographic order by $(i,j,q)$.
  {Note that we are here processing slots in a ``forward'' direction, i.e.,
    we are using the sets in the slot $T[(i,j,q)]$ to populate slots $T[(i+1,j',q')]$,
    which come after $(i,j,q)$ in lexicographic order.}
  Before we process the sets in a slot $T[(i,j,q)]$, we reduce them to
  a $(3p(k-1) - q)$-representative set in $M$. Then we proceed as
  follows. For every independent set $I$ in $T[(i,j,q)]$ and every $j'$
  with $j < j' \leq n'$, we combine $I$ and $v_{j'}$ as follows.
  {As $v_{j'} \in V(G) \setminus H$ and $H$ is a minimal vertex cover of $G$, we have that $N_G(v_{j'}) \subseteq H$.
  Therefore, there are at most $2|H|$ arcs of $A_E$ incident on $v_{j'}$ in $D_G$.}
  \begin{enumerate}
  \item Let $d \leq 2|H|$ be the number of arcs of $A_E$ incident with
    $v_{j'}$. Create a set $F$ for every one out of the
    {following options: for every arc $a$ incident with $v_{j'}$,
      either add $F_{a,b}$ to $F$ for some $b \in [p]$, or do not add any set $F_{a,b}$ to $F$. 
      Note that this makes $(p+1)^d$ different sets $F$ in total. }
  \item For every such non-empty set $F$, and every independent set
    $I$ of $T[(i, j, q)]$ such that $F$ extends $I$ in $M$, add $I \cup F$
    to $T[(i+1,j',q+|F|)]$. 
  \end{enumerate}
  To roughly bound the running time, we note that the number of slots
  in the table is polynomial, and for every set $I$ in a slot, at most
  $(p+1)^{2|H|}$ sets $I'=I \cup F$ are added to other slots of the table.
  Furthermore, after the representative set reduction, every slot
  contains at most $2^{3p(k-1)}$ sets. Thus every time we apply
  Theorem~\ref{thm:repset}, we have
  \[
    |T[(i,j,q)]| < (p+k)^{\cO(1)} 2^{3pk}(p+1)^{2k} = 2^{\cO(pk)},
  \]
  hence the total time usage, up to a polynomial factor, is $2^{\cO(pk)}$.

  It remains to prove correctness.  Let $I \in T[(i,j,q)]$ for some
  $(i,j,q)$ and let $S=V(I) \setminus H$.  We note the invariants
  $|S|=i$, $\max_a \{v_a \in S\}=j$ and $|I|=q$ hold by induction.
  Furthermore, by construction $I$ is independent in $M$. 
  With these observations, we proceed.  First assume that there is a
  set $I \in T[(k-|H|,j,3p(k-1))]$ for some $j$.
  Then $|V(I) \setminus H|=k-|H|$ and $|I|=3p(k-1)$ by the invariants. 
  Furthermore, every arc $a$ represented in $I$ occurs in precisely
  three copies by construction. Indeed, every time we grow a set $I$
  we do so by adding a collection of sets $F_{a,i}$ to it. Hence every
  arc occurs at least three times, and furthermore, since $F_{a,i}$
  always contains a copy of $a$ in $M_{2p+1}$, we will never add two
  distinct sets $F_{a,i}$, $F_{a,i'}$ to the same set $I$. Hence
  Lemma~\ref{lemma:our-condition} implies that $D_G[H \cup V(I)]$ has
  $p$ pairwise arc-disjoint out-branchings rooted in $v_r$, 
  which by Lemma~\ref{lemma:agrawal} implies that $G[H \cup V(I)]$ is
  $p$-edge-connected.

  On the other hand, assume that $G[H \cup S]$ is $p$-edge-connected
  for some $S \subseteq V(G) \setminus H$ with $|S|=k-|H|$.
  By Lemma~\ref{lemma:agrawal} there exist $p$ pairwise edge-disjoint
  out-branchings in $D_G[H \cup S]$ rooted in $v_r$, hence by
  Lemma~\ref{lemma:our-condition} there is an independent set $I$ in $M$
  formed using a set of arcs $F \subseteq A_E$ with $|F|=p(k-1)$
  and $|I|=3p(k-1)$ as described.  Let $j$ be the largest index
  such that $v_j \in V(I)$; then $I$ is a candidate for the table
  slot $T[(k-|H|,j,3p(k-1))]$.  We prove by induction
  that $T[(k-|H|,j,3p(k-1))]$ is non-empty.  Observe that $I$ is the
  disjoint union of sets $F_{a,i}$.  We partition $I$ according to
  lexicographical order of $(i,j,q)$ as follows. First, let
  $F_{a_1,i_1}$, \ldots, $F_{a_t,i_t}$ enumerate the sets $F_{a,i}$
  contained in $I$ for which $a$ is contained in $D_G[H]$.
  For $r \in [t]$, let
  \[
    I_r'=I \setminus \bigcup_{j=1}^r F_{a_j,i_j}
  \]
  be the subset of $I$ which is encountered ``after'' $F_{a_r,i_r}$
  in the natural ordering. 
  We show by induction that for each $r \in [t]$, the slot
  $T[(0,0,3r)]$ contains a set which extends $I_r'$. 
  For $r=0$ this holds trivially.
  Hence, assume the statement holds for $T[(0,0,3r)]$ for some $r<t$,
  and let $I_0 \in T[(0,0,3r)]$ be a set which extends $I_r'$.
  While processing $(0,0,3r)$, the set $F_{a_{r+1},i_{r+1}}$ is
  considered in the loop, and clearly it extends $I_0$ since
  $F_{a_{r+1},i_{r+1}} \subseteq I_r'$. Hence $T[(0,0,3r+3)]$ contains
  the set $I_1=I_0 \cup F_{a_{r+1},i_{r+1}}$ before the representative
  set computation is performed.  By assumption $I_1$ extends $I_{r+1}'$.
  Hence by the correctness of Theorem~\ref{thm:repset},
  $T[(0,0,3r+3)]$ contains some set $I_2$ that extends $I_{r+1}'$, as
  required. Hence the claim holds up to the set $T[(0,0,3t)]$.

  We can now complete the proof using the same outline for entries
  $T[(i,j_i,q_i)]$. Enumerate $S$ as $S=\{v_{j_1}, \ldots, v_{j_{k-|H|}}\}$ 
  in increasing order of indices $j_i$ and for each $i \in [k-|H|]$
  let $I_i$ be the union of sets $F_{a,i}$ of $I$ for which $a$ is
  incident with $v_{j_i}$. Let $I_{\geq i}=\bigcup_{j=i}^{k-|H|} I_j$.
  We show by induction in lexicographical order that $T[(i,j_i,q_i)]$
  for some $q_i$ contains a set which extends $I_{\geq i+1}$, for each $i$.
  As a base case, the claim holds for $T[(0,0,3t)]$ as has already
  been shown. For the inductive step, the proof is precisely as above.
  For every $i=1, \ldots, k-|H|$, let $I_{i-1}' \in T[(i-1,j_{i-1},q_{i-1})]$ 
  be a set which extends $I_{\geq i}$. Then in particular $I_i$
  extends $I_{i-1}'$ and is added to table $T[(i,j_i,q_i)]$
  in the exhaustive enumeration loop from $T[(i-1,j_{i-1},q_{i-1})]$. 
  Thus before the call to Theorem~\ref{thm:repset} there was a set in
  $T[(i,j_i,q_i)]$ which extends $I_{\geq i+1}'$, hence the same holds
  after the representative set reduction. By induction, the table slot
  $T[(k-|H|,j,3p(k-1)]$ is indeed non-empty for some $j$.
  This completes the proof of the theorem.
\end{proof}

\section{Approximate Kernels}
\label{sec:psaks}

In this section we describe the approximate kernels for the two types of connectivity. \bmp{Both cases rely on a common subroutine ${\sf Mark}(G,k,\eps)$, which we present below. It works for a fixed value of~$p$. It gets as input a graph~$G$, integer~$k$, and a value of~$\eps > 0$, and works as follows.}
\begin{enumerate}
	\item Let~$H$ be the vertices in~$G$ whose degree is larger than~$k$. Let~$I$ consist of the vertices~$v$ in~$V(G) \setminus H$ with~$N_G(v) \subseteq H$. Note that~$I$ is an independent set. Let~$R = V(G) \setminus (H \cup I)$, and note that each vertex in~$R$ has degree at least one and at most~$k$.
	\item If~$|R| > 2k^2$ or~$|H| > k$ then return \textsc{infeasible}.
	\item Otherwise, we mark a set~$L \subseteq I$. Initialize~$L = \emptyset$. For each set~$S \in \binom{H}{\leq 2\lceil p/\eps \rceil}$, we mark common neighbors of~$S$ in~$I$ as follows:
	\begin{enumerate}
		\item If~$|\bigcap _{v \in S} N_G(v) \cap I| \leq (3+\eps)k$, then add all vertices of~$\bigcap _{v \in S} N_G(v) \cap I$ to~$L$.
		\item Otherwise, let~$L_S$ consist of~$\lfloor (3+\eps)k \rfloor$ arbitrary vertices from~$\bigcap _{v \in S} N_G(v) \cap I$ and add~$L_S$ to~$L$.
	\end{enumerate}
	\item Return the graph~$G' = G[H \cup R \cup L]$ with parameter value~$k' = \lceil (1+\eps)k \rceil$.
\end{enumerate}

Note that since~$G'$ is an induced subgraph of~$G$, any $p$-vertex/edge-connected vertex cover~$X$ in~$G$ with~$X \subseteq V(G')$ is also a $p$-vertex/edge-connected vertex cover in~$G'$. Note that a run which outputs a graph~$G'$ results in~$|L| \leq \lfloor (3+\eps)k \rfloor \cdot |\binom{H}{\leq 2\lceil p/\eps \rceil}| \leq \lfloor (3+\eps)k \rfloor k^{2\lceil p/\eps \rceil}$.

\bmp{The following lemma encapsulates which information ${\sf Mark}$ preserves.}

\begin{lemma} \label{lemma:lossykernel:blowup}
\bmp{If ${\sf Mark}(G,k,\eps)$ is executed \bmp{with~$0 < \eps \leq 1$} on a graph~$G$ which has a $p$-connected (resp.~$p$-edge-connected) vertex cover~$X$ of size at most~$k$, then it outputs a graph~$G[H \cup R \cup L]$ (rather than \textsc{infeasible}) and~$G[H \cup R \cup L]$ has a $p$-connected (resp.~$p$-edge-connected) vertex cover of size at most~$(1+\eps)|X|$.}
\end{lemma}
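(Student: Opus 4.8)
The plan is to show two things about the run of ${\sf Mark}(G,k,\eps)$: first, that it does not return \textsc{infeasible}, and second, that the output graph $G' = G[H \cup R \cup L]$ contains a $p$-(edge-)connected vertex cover of size at most $(1+\eps)|X|$. The first part is routine: since $X$ is a vertex cover of size at most $k$, every vertex of degree larger than $k$ must lie in $X$, so $|H| \le k$; and every vertex of $R$ has degree at least one and so contributes an edge that must be covered by $X$, giving $|R| \le 2k^2$ by the same double-counting argument already used in the proof of Theorem~\ref{thm:pvcvc-is-fpt-simple}. Hence the algorithm reaches step~4 and outputs $G'$.

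**The core construction.**

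For the second part, I would start from the $p$-(edge-)connected vertex cover $X$ in $G$ and build a replacement vertex cover $X'$ inside $V(G')$. The natural candidate is $X' := (X \cap (H \cup R)) \cup L'$, where $L' \subseteq L$ is a carefully chosen set of marked vertices that ``simulates'' the connectivity role of the vertices $X \cap I$ that are not present in $G'$. First observe that $X \cap (H \cup R)$ together with any superset inside $H \cup R$ is already a vertex cover of $G'$, because every edge of $G'$ has an endpoint outside $I$ (as $I$ is independent) and that endpoint lies in $H \cup R \subseteq V(G')$; since $X$ covers all edges of $G$, it covers in particular all edges of $G'$, and the endpoints it uses for $G'$-edges can be taken in $H\cup R$. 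So vertex-cover-ness of $X'$ is free; the whole difficulty is connectivity.

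**Preserving connectivity via the sparse certificate.**

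Here is where Proposition~\ref{thm:connectivity-certificate} (Nishizeki--Poljak / Nagamochi--Ibaraki) enters. Apply it to the $p$-(edge-)connected graph $G[X]$ to obtain a spanning $p$-(edge-)connected subgraph $C$ with at most $p|X| \le pk$ edges. Now consider the vertices of $X \cap I$: each such vertex $v$ has all its neighbors in $H$, and in the sparse certificate $C$ it has degree at most... well, at least $p$ (since $C$ is $p$-connected) and we want to bound how many such $v$ share the same ``$C$-neighborhood pattern''. The key combinatorial point: each $v \in X \cap I$ is, in $C$, adjacent to a set $N_C(v)$ of at most — this needs care — vertices of $H \cap X$. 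Group the vertices of $X \cap I$ by which vertices of $H$ they are adjacent to in $C$; because $C$ has at most $pk$ edges, only few vertices of $X\cap I$ can have $C$-degree exceeding $2\lceil p/\eps\rceil$, namely at most $\frac{pk}{2\lceil p/\eps\rceil} \le \tfrac{\eps k}{2}$ of them. For every such ``high-degree-in-$C$'' vertex we simply keep it... but it may not be in $G'$. Instead, the real argument is: for each $v \in X \cap I$ with $|N_C(v)| \le 2\lceil p/\eps\rceil$, the set $S := N_C(v)$ is one of the sets enumerated in step~3, and the marking guarantees that $\bigcap_{u \in S} N_G(u) \cap I$ retains at least $\lfloor(3+\eps)k\rfloor \ge 3k$ marked vertices in $L$ (or all of them). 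Since $|X| \le k$, we can injectively re-route each such $v$ to a distinct marked vertex $v^\star \in L$ with $N_G(v^\star) \supseteq S = N_C(v)$, using that the pool $3k$ comfortably exceeds the total number $k$ of vertices being rerouted. Replacing each such $v$ by its image $v^\star$ inside the certificate $C$ preserves all adjacencies used by $C$ (by Lemma~\ref{lem:exp}, re-adding a degree-$\ge p$ vertex keeps $p$-connectivity; here it is cleaner to note the modified graph is isomorphic on the relevant part to $C$), so the resulting graph on $X'$ remains $p$-(edge-)connected. The handful ($\le \eps k/2$, or with the cruder bound $\le \eps k$) of vertices $v$ with $|N_C(v)| > 2\lceil p/\eps\rceil$ that we cannot reroute are simply dropped and their role absorbed: we lose at most $\eps k \le \eps|X|$... but we must be careful, dropping a vertex can break connectivity. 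The honest fix is to not drop them at all but observe $|X \cap I| \le |X| \le k$, and that the low-$C$-degree ones are all rerouted, while we bound the number of high-$C$-degree ones by $pk / (2\lceil p/\eps\rceil + 1) < \eps k/2$ and keep those: but they are not in $G'$...

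**Main obstacle.**

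I expect the genuinely delicate step to be exactly this accounting: showing that the vertices of $X \cap I$ which have too many neighbors in the sparse certificate $C$ are few enough that we can afford to replace them at the cost of only an $\eps$-factor blowup, and that the replacement can be done while keeping the subgraph $p$-(edge-)connected. The resolution I would pursue: take the certificate $C$ with $\le pk$ edges; at most $\lfloor\eps k\rfloor$ vertices of $X\cap I$ have $C$-degree $> 2\lceil p/\eps \rceil$ (else edge count exceeds $pk$); for each \emph{of these}, rather than rerouting through a common-neighborhood set we spend $\lceil (1+\eps)k\rceil - |X| \ge \eps k$... no — the clean statement is that we \emph{add} these back as extra marked vertices is impossible since they're not marked. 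So the actual argument must be: \emph{every} $v \in X\cap I$ has $|N_C(v)| \le 2\lceil p/\eps\rceil$ is false in general, so instead one shows the \emph{total} over-count is small and handles the exceptional vertices by keeping a subset of $X \setminus I$ large enough to maintain connectivity without them — this is where I'd invest the most effort, likely via a second application of Lemma~\ref{lem:exp} and the observation that contracting away a bounded-degree independent vertex from a $p$-connected graph, then re-expanding at a twin, preserves $p$-connectivity. Once the rerouting map is in place and injectivity is verified from the $\lfloor(3+\eps)k\rfloor > k$ slack, the size bound $|X'| \le |X \cap (H\cup R)| + |X \cap I| \le |X| \le k \le (1+\eps)|X|$ follows immediately — in fact with room to spare, which is why the parameter is raised to $\lceil(1+\eps)k\rceil$ only to accommodate the solution-lifting direction handled in the next lemma.
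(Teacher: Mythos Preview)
Your setup is fine: the infeasibility check and the use of the Nishizeki--Poljak / Nagamochi--Ibaraki sparse certificate $C$ on $G[X]$ are exactly the right moves, and you correctly isolate the difficulty as the vertices of $X\cap I$ whose $C$-degree exceeds $2\lceil p/\eps\rceil$. But you do not resolve that difficulty, and the attempts you sketch (drop them, keep them, contract-and-re-expand at a twin) all fail for the reason you yourself note: such a vertex need not lie in $G'$, and a one-to-one replacement by a single marked twin is impossible because no set $S$ of size larger than $2\lceil p/\eps\rceil$ is ever enumerated in step~3.

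The missing idea is a \emph{one-to-many} replacement. For a vertex $x_i\in X\cap I$ with $C$-neighborhood $S=N_C(x_i)$ of arbitrary size, partition $S$ into consecutive chunks $S_1,\dots,S_r$ each of size between $\lceil p/\eps\rceil$ and $2\lceil p/\eps\rceil$, and for each $j$ augment $S_j$ by a $p$-element subset $T_j\subseteq S_{j+1}$. Each $S_j\cup T_j$ has size at most $2\lceil p/\eps\rceil$ and so was enumerated; pick a marked vertex $u_j\in L_{S_j\cup T_j}$ for each chunk, distinct from $X$ and from all previously chosen replacements. The replacement set $X'_i=\{u_1,\dots,u_r\}$ then has two crucial properties: (i) collectively its vertices see all of $N_C(x_i)$, and (ii) consecutive $u_j,u_{j+1}$ share at least $p$ common neighbors in $H$ (namely $T_j$), so they stay in the same component after removing any $p-1$ vertices or edges. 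Property (ii) is what lets the connectivity argument go through; property (i) lets you replay any $C$-path through $x_i$ inside $G[X']$. The size accounting is then $|X'_i|\le \max(\deg_C(x_i)/\lceil p/\eps\rceil,1)$, and since $I$ is independent, $\sum_i \deg_C(x_i)\le |E(C)|\le p|X|$, giving $|X'|\le |X|+\eps|X|$. So the $(1+\eps)$ blowup is genuine and not, as you write at the end, merely an artifact of solution lifting; your claimed bound $|X'|\le|X|$ would only hold under one-to-one replacement, which is precisely what fails.
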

\begin{proof}
\bmp{We first argue that ${\sf Mark}$ does not output~\textsc{infeasible} when given a graph~$G$ with a $p$-connected ($p$-edge-connected) vertex cover of size at most~$k$, as follows. We have~$|H| \leq k$ as each vertex of~$H$ has degree more than~$k$ and belongs to each vertex cover of size at most~$k$. Similarly, we must have~$|R| \leq 2k^2$: as~$R$ consists of vertices outside~$I \cup H$, each vertex of~$R$ has a neighbor that does not belong to~$H$ and which is therefore in~$R$ itself. Hence~$G[R]$ contains no isolated vertices and has maximum degree~$k$. If~$|R| > 2k^2$, then it has more than~$k^2$ edges which cannot be covered using at most~$k$ of its vertices. Hence if~$|R| > 2k^2$ then~$G[R]$ has no vertex cover of size at most~$k$ and neither does~$G$.}

\bmp{It follows that under the stated assumption, the algorithm outputs a graph~$G' = G[H \cup R \cup L]$. \bmp{In the remainder, we refer to a $p$-vertex/edge-connected vertex cover in~$G$ (depending on the problem variant considered) as a \emph{solution}.} Let~$X$ be a solution in~$G$ of size at most~$k$. We prove that~$G[H \cup R \cup L]$ has a solution of size at most~$(1+\eps)|X|$.}

Since each vertex of~$H$ has degree more than~$k$, it follows that~$H \subseteq X$. If~$X \subseteq V(G')$ then~$X$ is also a solution in~$G'$ and the lemma follows. In the remainder, we treat the case that~$X \setminus V(G') = \{x_1, \ldots, x_\ell\}$ for some~$1 \leq \ell \leq |X| \leq k$. As~$H \cup R \cup I$ is a partition of~$V(G)$, since the output graph~$G'$ contains all of~$H$ and~$R$ it follows that for each~$x_i \in X \setminus V(G')$ we have~$x_i \in I$, which implies~$N_G(x_i) \subseteq H \subseteq X$. We will show that we can replace each vertex~$x_i$ by a small set of marked vertices to obtain an approximate solution in~$G'$. 

By Proposition~\ref{thm:connectivity-certificate}, graph $G[X]$ has a $p$-vertex/edge-connected spanning subgraph $F$ on at most $p|X|$ edges. In the following argument, the degree of vertices~$x_i$ in the subgraph~$F$ will play an important role. As~$I$ is an independent set in~$G$, it is also an independent set in its subgraph~$F$. It follows that in the sum~$\sum _{i=1}^\ell \deg_F(x_i)$ we never count the same edge twice, so that the sum is bounded by the total number of edges in~$F$, which is \magnus{at most}~$p|X|$. We record this property for further use:
	
	\begin{equation} \label{eq:kernel:summation}
	\sum _{i \in [\ell]} \deg_F(x_i) \leq p|X|.
	\end{equation}

\paragraph{Constructing replacement sets} For the replacement, we construct a sequence of sets~$X'_1, \ldots, X'_\ell$. \bmp{Each set~$X'_i$ will be used as a replacement for the corresponding vertex~$x_i$. The sets we construct will have} the following properties:

\begin{enumerate}[(a)]
	\item Each set~$X'_i$ is a subset of~$V(G') \cap I$ that is disjoint from~$X \cup \bigcup_{j < i} X'_j$.
	\item The vertices in each set~$X'_i$ can be ordered so that each successive pair of vertices of~$X'_i$ has at least~$p$ common neighbors in the set~$H$.
	\item $N_F(x_i) \subseteq \bigcup _{u \in X'_i} N_G(u)$.
	\item $|X'_i| \leq \max(\frac{\deg_F(x_i)}{\lceil p/\eps \rceil}, 1)$.
\end{enumerate}

{In the remainder of the proof, whenever we refer to successive or consecutive vertices of~$X'_i$, we mean with respect to the ordering whose existence is guaranteed by the second condition.} We construct the sets~$X'_i$ in order of increasing~$i$. Consider a vertex~$x_i \in X \setminus V(G') \subseteq I$. Let~$S = N_F(x_i)$ and note that~$|S| = \deg_F(x_i)$. Since~$F$ is $p$-vertex/edge-connected, we have~$|S| \geq p$. We define a partition of~$S$ as follows. If~$|S| \leq \lceil p/\eps\rceil$ then we use the singleton partition of~$S = S_1$. If~$|S| > \lceil p/\eps \rceil$ then we partition~$S$ into sets~$S_1, \ldots, S_r$ of size exactly~$\lceil p/\eps \rceil$, except for the last set which has size at least~$\lceil p/\eps \rceil$ and less than~$2 \lceil p/\eps \rceil$. Such a partition always exists. Note that~$r \leq \max(\frac{\deg_F(x_i)}{\lceil p/\eps \rceil},1)$, even without rounding, where the maximum is needed to deal with the case~$r=1$. For each~$i \in [r-1]$, let~$T_i$ be an arbitrary subset of~$S_{i+1}$ of size exactly~$p$, which exists since~$|S_{i+1}| \geq \lceil p/\eps\rceil \geq p$ since~$\eps \leq 1$. Let~$T_r = \emptyset$.

Note that~$|S_j \cup T_j| \leq 2\lceil p/\eps \rceil$ for each~$j \in [r]$: for~$j < r$ we have~$|S_j| \leq \lceil p/\eps\rceil$ and~$|T_j| \leq p \leq \lceil p/\eps\rceil$ (we use~$\eps \leq 1$ here), while the case~$j=r$ holds since~$T_r = \emptyset$. Each set~$S_j \cup T_j$ consists of neighbors of~$x_i$ in~$F$ and therefore in~$G$, which shows that~$x_i \in \bigcap_{v \in S_j \cup T_j} N_G(v) \cap I$. Hence~$x_i$ was eligible to be marked for the set~$S_j \cup T_j$, but it was not. Hence we marked a set~$L_{S_j \cup T_j} \subseteq V(G') \cap I$ of~$\lfloor (3+\eps)k \rfloor$ vertices. To show that there exist sufficiently many marked vertices which are not contained in~$X$ or in a set~$X'_{j'}$ for~$j' < j$, we bound the latter as follows:
\begin{align*}
\left |\bigcup _{j' < j} X'_{j'} \right | &\leq \sum _{1 \leq j' < j} \max \left(\frac{\deg_F(x_{j'})}{\lceil p/\eps \rceil}, 1 \right) & \mbox{By the fourth condition} \\
&\leq \sum _{j' \in [\ell]} \left (\frac{\deg_F(x_{j'})}{\lceil p/\eps \rceil} + 1 \right) & \\
&\leq \frac{\sum _{j' \in [\ell]} \deg_F(x_{j'})}{\lceil p/\eps \rceil} + \ell \leq \frac{p|X|}{\lceil p/\eps \rceil} + \ell & \mbox{By \eqref{eq:kernel:summation}}\\
&\leq k + \eps k. & \mbox{Since~$\ell, |X| \leq k$} \\
\end{align*}
So the set~$X$ contains at most~$k$ vertices of~$L_{S_j \cup T_j}$, while~$\bigcup _{j' < j} X'_{j'}$ contains at most~$k + \eps k$ vertices of~$L_{S_j \cup T_j}$. It follows that there are at least~$k$ vertices of~$L_{S_j \cup T_j}$ which belong neither to~$X$ nor to sets~$X'_i$ we already constructed. As this holds for each of the~$r$ sets into which we partitioned~$S$, while~$r \leq \deg_F(x_i) \leq \deg_G(x_i) \leq |X| \leq k$, there exist \emph{distinct} vertices~$u_1, \ldots, u_r \in V(G') \setminus (X \cup \bigcup_{j' \leq i} X'_{j'})$ such that for each~$j \in [r]$ we have~\magnus{$u_j \in L_{S_j \cup T_j}$}, which implies that~$N_G(u_j) \supseteq S_j \cup T_j$. We set~$X'_i = \{u_1, \ldots, u_r\}$. To see that this satisfies all four conditions mentioned above, observe that we get the first by construction. The second follows from the fact that for~$j \in [r-1]$, both~$u_j$ and~$u_{j+1}$ are adjacent to~$T_j \subseteq H$. Since the sets~$S_j$ partition~$S = N_F(x_i)$, we satisfy the third condition. The fourth follows from the given bound on~$r$. This completes the construction of the sets~$X'_i$.

Using these sets we complete the proof. Let~$X' = (X \setminus \{x_1, \ldots, x_\ell\}) \cup \bigcup _{i \in [\ell]} X'_i$. By \eqref{eq:kernel:summation} and the fourth condition, we can infer that~$|X'| \leq (|X| - \ell) + (\frac{p|X|}{\lceil p/\eps \rceil} + \ell) \leq (1+\eps)|X|$. Hence~$X'$ is a vertex set in~$G'$ of the appropriate size. It remains to analyze its connectivity. We split the proof into two cases here, based on whether we are considering vertex or edge connectivity. Note that~$G[X'] = G'[X']$ since~$G'$ is an induced subgraph of~$G$ with~$X' \subseteq V(G')$.

\subparagraph{Edge connectivity.} In the case of edge connectivity, the subgraph~$F$ of~$G[X]$ we chose above is $p$-edge-connected. We will argue that~$G[X']$ is $p$-edge-connected. Assume for a contradiction that~$G[X']$ has a cut~$(A', B')$ of less than~$p$ edges. For each set~$X'_i$ inserted into~$X'$, the members of~$X'_i$ can be ordered so that successive vertices have at least~$p$ common neighbors in~$H$. As~$H$ belongs to~$X$ and~$V(G')$ and therefore to~$X'$, successive vertices of~$X'_i$ have at least~$p$ common neighbors in~$G[X']$, and therefore belong to the same side of any cut of less than~$p$ edges. Hence for each set~$X'_i$ inserted into~$X'$, we have~$X'_i \subseteq A'$ or~$X'_i \subseteq B'$. This allows us to transform~$(A',B')$ into a cut~$(A,B)$ of the subgraph~$F$ on vertex set~$X$ in the natural way, by replacing each set~$X'_i$ by the corresponding vertex~$x_i$. The key observation is now that this transformation does not increase the number of edges in the cut: for each edge in the cut~$(A,B)$, either it is an edge between two vertices of~$X' \cap X$ (and therefore also an edge of the cut~$(A',B')$), or it is an edge incident on some vertex~$x_i \in I$ whose other endpoint~$v$ therefore belongs to~$H$. But then the vertex set~$X'_i$ is in the same side of the cut in~$(A',B')$ and contains a vertex adjacent to~$v$, as~$N_F(x_i) \subseteq \bigcup _{u \in X'_i} N_G(u)$. Hence the size of cut~$(A,B)$ of~$F$ is not larger than the cut~$(A',B')$ of~$G[X']$, which contradicts that~$F$ is $p$-edge-connected. Hence~$G[X']$ is~$p$-edge-connected.

\subparagraph{Vertex connectivity} In the case of vertex connectivity, the subgraph~$F$ of~$G[X]$ is $p$-connected. We argue that~$G[X']$ is also $p$-connected. Since~$|X'| \geq |X| \geq p + 1$, it suffices to verify that~$G[X']$ cannot be disconnected by removing less than~$p$ vertices. Consider a vertex set~$Z'$ of size less than~$p$; we will argue that~$G[X'] - Z'$ is connected. Let~$\mathcal{I}$ consist of those indices~$i \in [\ell]$ such that~$X'_i \cap Z' \neq \emptyset$, and let~$Z = (Z' \setminus \bigcup_{i \in \mathcal{I}} X'_i) \cup \{x_i \mid i \in \mathcal{I}\}$ be obtained by replacing each set~$X'_i$ intersecting~$Z'$ by the single vertex~$x_i$. \bmp{Since the sets~$X'_i$ are pairwise disjoint by the first condition,}~$|Z| \leq |Z'| < p$ and therefore~$F-Z$ is connected. Let~$X'' = X' \setminus \bigcup _{i \in \mathcal{I}} X'_i$. We shall first prove that~$G[X''] - Z'$ is connected, and later show how this implies connectivity of~$G[X'] - Z'$ itself. 

Assume for a contradiction that~$G[X''] - Z'$ is not connected. Consider a pair of vertices~$u,v$ that belong to different connected components of~$G[X''] - Z'$. Each vertex of each set~$X'_i$ has at least~$p$ neighbors in~$H$: if~$|X'_i| > 1$ this follows from the second condition on the subsets, while for~$|X'_i| = 1$ the third condition implies the single vertex in~$X'_i$ has at least~$\deg_F(x_i)$ neighbors in~$H$, while~$\deg_F(x_i) \geq p$ due to $p$-connectivity of~$F$. Hence any connected component of~$G[X''] - Z'$ that contains a vertex of a set~$X'_i$ for~$i \in [\ell]$, also contains a vertex of~$H$. It follows that there are two vertices~$u,v$ of~$G[X'']-Z'$ belonging to different connected components and~$u,v \notin \bigcup _{i \in [\ell]} X'_i$, so that~$u,v \in X \cap X'$. Since the process of turning~$Z'$ into~$Z$ only affected vertices outside~$X \cap X'$, vertices~$u$ and~$v$ exist in~$F - Z$ and are connected by a path~$P$ in~$F - Z$ since~$F$ is $p$-connected. We transform~$P$ into a path connecting~$u$ and~$v$ in~$G[X''] - Z'$, as follows. For each occurrence of a vertex~$x_i \in X \setminus X'$ on path~$P$, we know~$x_i \notin Z$ so~$X'_i \cap Z' = \emptyset$. Each pair of successive vertices from~$X'_i$ has~$p$ common neighbors in~$H \subseteq X''$, of which at most~$p-1$ belong to~$Z'$, so each pair of successive vertices from~$X'_i$ is connected in~$G[X''] - Z'$; {hence all vertices of~$X'_i$ belong to the same connected component of~$G[X''] - Z'$.} By the third condition on the subsets~$X'_i$, some vertex of~$X'_i$ is adjacent to the predecessor of~$x_i$ on~$P$, and some vertex of~$X'_i$ is adjacent to the successor of~$x_i$ on~$P$. Hence each occurrence of a vertex~$x_i \in X \setminus X'$ on path~$P$ can be replaced by a path through~$G[X''] - Z'$. This transforms~$P$ into a path~$P'$ connecting~$u$ and~$v$ in~$G[X''] - Z'$; a contradiction. Hence~$G[X''] - Z'$ is connected.

From the fact that~$G[X''] - Z'$ is connected, we derive that~$G[X'] - Z'$ is connected as follows. Each vertex of~$X' \setminus X''$ belongs to some set~$X'_i$ for~$i \in \mathcal{I}$. As observed above, each vertex of~$X'_i$ has at least~$p$ neighbors in~$H \subseteq X''$, of which at most~$p-1$ belong to~$Z'$; hence each vertex of~$X' \setminus X''$ is adjacent to a vertex of~$G[X''] - Z$. Therefore~$G[X'] - Z'$ can be obtained from~$G[X'']-Z'$ by inserting non-isolated vertices, which leaves the graph connected. This completes the proof that~$G[X']$ is $p$-connected.

As the above two cases show that~$G[X'] = G'[X']$ is $p$-vertex/edge-connected, while we already derived~$|X'| \leq (1+\eps)|X|$, this proves that~$X'$ is a $p$-vertex/edge-connected vertex cover of~$G$ of the appropriate size. As~$X' \subseteq V(G')$ and~$G'$ is an induced subgraph of~$G$, it is also a valid solution in~$G'$, which completes the proof.
\end{proof}

Using the previous lemma we now prove the existence of approximate kernels for the two considered problems.

\pKernelMainTheorem*

\begin{proof}
\bmp{Fix~$p \geq 2$. Before presenting the main argument, we show that to prove the theorem it suffices to prove it for~$0 < \epsilon \leq 1$. Let~$s(k,\epsilon) = k + 2k^2 + \lfloor (3+\eps)k \rfloor k^{2\lceil p/ \min(\eps,1) \rceil}$ denote the guarantee on the number of vertices in a reduced instance claimed by the theorem for a certain value of~$\epsilon$ and parameter value~$k$. For any~$k \in \mathbb{N}$ and~$\epsilon > 1$ we have~$s(k,1) \leq s(k,\epsilon)$; we rely here on the~$\min(\epsilon,1)$ term in the exponent, which means that the exponent stops becoming smaller when~$\epsilon$ becomes larger than~$1$. Suppose that the theorem holds for~$\epsilon=1$, which means that for each type of connectivity considered there is a polynomial-time reduction algorithm~$\cR_1$ reducing any instance~$(G,k)$ to an instance~$(G',k')$ on at most~$s(k,1)$ vertices, and a polynomial-time solution lifting algorithm~$\mathcal{L}_1$ that can lift $\alpha$-approximate solutions for~$(G',k')$ to $(\alpha \cdot (1+\eps)) = (\alpha \cdot (1+1))$-approximate solutions in~$(G,k)$. The algorithms~$\cR_1$ and~$\mathcal{L}_1$ also form a valid~$(1+\eps)$-approximate kernel for any~$\eps > 1$: as just argued, the output of~$\cR_1$ has at most~$s(k,1)\leq s(k,\epsilon)$ vertices, while the solution lifting algorithm produces a solution whose approximation factor is~$(\alpha \cdot (1+1)) \leq (\alpha \cdot (1+\eps))$. Hence to prove the theorem it suffices to prove it for~$0 < \eps \leq 1$.

Consider~$0 < \eps \leq 1$.} The approximate kernelization algorithm has two parts. The first part is a reduction algorithm, and the second part is a solution lifting algorithm. Let $(G, k)$ be an input instance of {\PECVC} or {\PVCVC}.

\subparagraph{Reduction algorithm} First we invoke Lemma~\ref{lemma:pvcvc-existence} to check whether~$G$ has a $p$-vertex/edge-connected vertex cover. If not, then the reduction algorithm outputs the instance~$(2K_2,1)$, that is, a matching of two edges with a parameter value of one. 

If~$|V(G)| \leq k^{2\lceil p/\eps \rceil}$, then the instance is already small in terms of the parameter. \bmp{To ensure the running time of the reduction algorithm is bounded by a polynomial whose degree does not depend on~$\epsilon$, in this case} we simply output~$(G,k)$ unchanged. If~$|V(G)|$ is larger, we run ${\sf Mark}(G,k,\eps)$. If it outputs \textsc{infeasible}, we output~$(K_{2p}, 1)$, that is, a clique of size~$2p$ with a parameter value of one. If ${\sf Mark}$ outputs an instance~$(G' = G[H \cup R \cup L], k' = (1+\eps)k)$, we use~$(G',k')$ as the output of the reduction algorithm. 

As the for-loop of ${\sf Mark}$ only happens when~$|H| \leq k$, the running time of the algorithm can be bounded as~$n^{\cO(1)} \cdot k^{2\lceil p/\eps \rceil}$ as it spends~$n^{\cO(1)}$ time for each subset of~$H$ of size at most~$2\lceil p/\eps \rceil$ while~$|H| \leq k$. By our assumption on~$|V(G)|$, we have~$k^{2\lceil p/\eps \rceil} \leq |V(G)|$ so that the running time is~$n^{\cO(1)}$ for some absolute constant not depending \bmp{on~$p$,~$k$ or~$\eps$}. Hence the approximate kernelization scheme is time-efficient. If the output is not equal to the result of ${\sf Mark}$, its size is trivially bounded as required. The output of ${\sf Mark}$ is~$G[H \cup R \cup L]$, where~$|H| \leq k$ and~$|R| \leq 2k^2$ follow from the definition of the algorithm, while~$|L| \leq \lfloor(3+\eps) k \rfloor k^{2\lceil p/\eps \rceil}$ as observed below its presentation. Hence the number of vertices in the output graph is as claimed. Since~$p$ is a fixed constant, this is suitable for a polynomial-sized approximate kernelization scheme. 

\subparagraph{Solution lifting algorithm} Given a solution~$S' \subseteq V(G')$ for the instance~$(G',k')$, the solution lifting algorithm proceeds as follows. {If~$G$ does not have a $p$-edge/vertex-connected vertex cover, it outputs~$\emptyset$ as the solution. If~$S'$ is not a valid solution in~$G'$, or~$S'$ does not contain all vertices of~$H$, then we output the trivial $p$-edge/vertex-connected vertex cover of~$G$ found via Lemma~\ref{lemma:pvcvc-existence} as the solution to~$(G,k)$. Otherwise, we output~$S'$ as the solution~$S$ for~$(G,k)$; we argue below that it is a valid solution.}

It remains to argue that the output~$S$ of the solution lifting algorithm is of sufficient quality. Formally, we need to establish that:

\begin{equation} \label{eq:kernel:quality}
\frac{{pCVC(G,k,S)}}{\OPT(G,k)} \leq (1+\eps)\frac{{pCVC(G',k',S')}}{\OPT(G',k')}.
\end{equation}

\bmp{Depending on the type of connectivity considered, ${pCVC}$ corresponds to either ${pECVC}$ or ${pVCVC}$, the functions defined below Definition~\ref{defn:para-opt} that map solutions of the considered parameterized optimization problems to their cost value.} If~$G$ does not have any $p$-vertex/edge-connected vertex cover, then the value of each solution is~$+\infty$ so each solution is optimal, which implies that \eqref{eq:kernel:quality} holds since the left-hand side becomes~$1$ and the right-hand side is never smaller. Similarly, if~$G$ has a $p$-vertex/edge-connected vertex cover, but not of size at most~$k$, then~$\OPT(G,k) = k+1$ and by definition of the function ${pCVC}$, each solution has cost at most~$k+1$ and is therefore optimal. Hence it remains to consider the case that~$G$ has a $p$-vertex/edge-connected vertex cover of size at most~$k$. 


\bmp{By Lemma~\ref{lemma:lossykernel:blowup}, the fact that~$G$ has a $p$-vertex/edge-connected vertex cover of size at most~$k$ implies that  the reduction algorithm outputs a nontrivial graph~$G'$} which has a  $p$-vertex/edge-connected vertex cover of size at most~$(1+\eps)|X|$. This implies that~$\OPT(G,k) = |X|$ and~$\OPT(G',k') \leq (1+\eps)|X|$, so that~$\OPT(G',k') \leq (1+\eps)\OPT(G,k)$. To analyze the result of the \bmp{solution} lifting algorithm, we consider two cases depending on the structure of the solution~$S'$ given to the algorithm. 

\begin{itemize}
	\item Suppose~$S'$ is a $p$-vertex/edge-connected vertex cover of~$G'$ of size at most~$k$. We argue that~$S'$ contains all vertices of~$H$: each vertex of~$H$ has degree more than~$k$ in~$G$, and for each vertex~$v \in I$ that is not marked by the algorithm and therefore no longer occurs in~$G'$, we marked~$\lfloor (3+\eps)k \rfloor > k$ vertices for each neighbor of~$v$ in~$H$. Hence each vertex of~$H$ also has degree more than~$k$ in~$G'$, which means it is contained in each vertex cover of size at most~$k$. Hence~$H \subseteq S'$. Each vertex of~\magnus{$V(G) \setminus V(G')$} belongs to the independent set~$I$ and has all its neighbors in~$H$. Hence~$S' \supseteq H$ covers all edges incident on vertices of~\magnus{$V(G) \setminus V(G')$}. As~$G[S'] = G'[S']$, this implies that~$S'$ is a $p$-vertex/edge-connected vertex cover of~$G$, which is a valid output for the solution lifting algorithm. This satisfies Equation~\ref{eq:kernel:quality} since $\OPT(G',k') \leq (1+\eps)\OPT(G,k)$.
	\item Now suppose~$S'$ is a $p$-vertex/edge-connected vertex cover of~$G'$ of size more than~$k$, which implies~${pCVC(G',k',S')} \geq k+1$. Since the solution lifting algorithm outputs a valid $p$-vertex/edge-connected vertex cover of~$G$ whenever there is one, we have~${pCVC(G,k,S)} \leq \min(|S|, k+1) \leq k+1$. Now we derive:
		\begin{equation*}
\frac{{pCVC(G,k,S)}}{\OPT(G,k)} \leq \frac{k+1}{\OPT(G,k)} \leq (1+\eps)\frac{k+1}{\OPT(G',k')} \leq (1+\eps)\frac{{pCVC(G',k',S')}}{\OPT(G',k')},
	\end{equation*}
	where the middle inequality follows from $\OPT(G',k') \leq (1+\eps)\OPT(G,k)$. 
\end{itemize}

This concludes the proof.
\end{proof}

\section{Constant Factor Approximation Algorithm for {\PECVC}}
\label{sec:approx-p-edge-cvc}

In this section, we describe a $2(p+1)$-approximation algorithm for {\PECVC}.
We begin by \bmp{defining} the notion of a \emph{Gomory-Hu tree}.

\begin{definition}[Gomory-Hu Tree]
\label{defn:gomory-hu-tree}
Let $G = (V, E)$ be a graph, and  let $c(u,v)\ge 0$ be the  {\em capacity} of edge $uv\in E,$ letting $c(u,v)=0$ if $uv \notin E.$
Denote the minimum capacity of an $s$-$t$ cut by $\lambda_{st}$ for each $s, t \in V(G)$.
Let $T = (V_T,E_T)$ be a tree with $V_T = V(G)$, and let us denote the set of edges in the $s$-$t$ path in $T$ by $P_{st}$ for each $s,t \in  V_T$.
Then $T$ is said to be a {\em Gomory-Hu tree} of $G$ if $\lambda_{st} = \min\limits_{e \in P_{st}} c(S_e, T_e)$ for all $s, t \in V(G)$,
where
\begin{itemize}
    \item $S_e$ and $T_e$ are sets of vertices of the two connected components of $T-e$ such that $s\in S_e$ and $t\in T_e,$ and
    \item $c(S_e, T_e)=\sum_{u \in S_e} \sum_{v \in T_e} c(u,v)$ is the capacity of the cut in $G$.    
\end{itemize}
The {\em capacity} of an edge $uv$ of $T$ is equal to $\lambda_{uv}.$ 
\end{definition}

\begin{theorem}\label{thm:GH}\cite{GomoryH}
Every weighted graph $(G,c)$ has a Gomory-Hu tree which can be constructed in polynomial time.
\end{theorem}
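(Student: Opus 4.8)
The plan is to reconstruct the classical proof of Gomory and Hu~\cite{GomoryH}, which builds the tree through a sequence of $|V(G)|-1$ minimum-cut (equivalently, maximum-flow) computations. I would maintain a tree $T$ whose nodes form a partition of $V(G)$ into \emph{super-nodes}, together with a nonnegative weight on each edge of $T$; initially $T$ is a single super-node equal to $V(G)$. The invariant to preserve is: (i) every edge $e$ of $T$, with the two components of $T-e$ spanning vertex sets $S_e$ and $T_e$, corresponds to a minimum cut of $(G,c)$ separating any vertex in one of its incident super-nodes from any vertex in the other, with capacity equal to the weight of $e$; and (ii) consequently, for any two vertices $s,t$ in distinct super-nodes, $\lambda_{st}=\min_{e\in P_{st}} c(S_e,T_e)$. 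When every super-node is a singleton, $T$ is a Gomory-Hu tree of $G$ by definition, so it remains to describe one iteration and to show it keeps the invariant and strictly increases the number of super-nodes.

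For one iteration, pick a super-node $X$ with $|X|\ge 2$ and two vertices $s,t\in X$. Form an auxiliary graph $G_X$ by contracting, for each component $C$ of $T-X$, all vertices lying in the super-nodes of $C$ into a single vertex $v_C$ (keeping parallel edges and summing capacities). Compute a minimum $s$-$t$ cut $(A,B)$ of $G_X$ with $s\in A$, $t\in B$; lift it to a cut $(A^\star,B^\star)$ of $G$ by placing each component $C$ entirely on the side containing $v_C$ (this is well-defined precisely because $C$ was contracted to a single vertex). Now split $X$ into $X\cap A$ and $X\cap B$, join the two resulting super-nodes by a new edge of weight $c_{G_X}(A,B)$, and re-attach each former tree-neighbour $Y$ of $X$ to whichever of the two pieces lies on $Y$'s side of $(A^\star,B^\star)$. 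The number of super-nodes goes up by one, so after exactly $|V(G)|-1$ iterations all super-nodes are singletons; each iteration does one max-flow computation plus polynomial bookkeeping, giving overall polynomial time and, simultaneously, a constructive existence proof.

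The crux — and the step I expect to be the main obstacle — is verifying that the split preserves the invariant, which rests on an \emph{uncrossing} lemma: if $(A,B)$ is a minimum $s$-$t$ cut of $(G,c)$ and $u,v$ both lie in $A$, then there is a minimum $u$-$v$ cut of $G$ keeping $B$ entirely on one side. This follows from submodularity of the cut function $d(S)=c(S,V(G)\setminus S)$: given a candidate minimum $u$-$v$ cut $S$, compare $d(S)+d(A)$ against $d(S\cap A)+d(S\cup A)$ (and, in the symmetric configuration, against $d(S\setminus A)+d(S\cup A)$, using that $d$ is also ``submodular on the complement''), and deduce that $S\cap A$ or $S\setminus A$ is still a minimum $u$-$v$ cut that does not cross $(A,B)$. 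Applying this lemma repeatedly to all the cuts recorded on the edges of $T$ that lie on one side of $X$ shows two things at once: a minimum $s$-$t$ cut \emph{computed inside the contracted graph} $G_X$ lifts to a minimum $s$-$t$ cut of the original $G$ of the same capacity, so $c_{G_X}(A,B)=\lambda_{st}$; and for each re-attached neighbour $Y$, the cut previously recorded on the $X$–$Y$ edge can be chosen not to cross $(A^\star,B^\star)$, so it still witnesses a minimum cut between a representative of $Y$ and a representative of the piece of $X$ it is attached to. With these two facts the invariant goes through by induction on the number of iterations, and part (ii) of the invariant is exactly the defining condition of a Gomory-Hu tree, completing the proof.
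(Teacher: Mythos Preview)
The paper does not prove this theorem at all: it is stated with a citation to~\cite{GomoryH} and used as a black box, so there is no ``paper's own proof'' to compare against. Your proposal is a faithful outline of the classical Gomory--Hu construction and is essentially correct; the one place that could be tightened is the second case of the uncrossing lemma (when $t\in S$), where rather than appealing to ``submodularity on the complement'' it is cleaner to apply ordinary submodularity to $V\setminus S$ and $A$, but the underlying idea is right.
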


For an unweighted graph $G = (V, E),$ we can introduce weights by setting \bmp{$c(u,v)=1$} for every $uv\in E.$ 
Let $T$ be a Gomory-Hu tree of $G$.
Then, by Definition~\ref{defn:gomory-hu-tree}, for every pair of vertices $u, v \in V(G)$, the size of a minimum edge cut between $u$ and $v$ in $G$ is 
the minimum capacity of an edge cut between $u$ and $v$ in $T$.
For $i \in [p]$, consider the set $E_i$ of all the edges of total capacity at most $i-1$ in $T$.
Deleting $E_i$ disconnects $T$ into several subtrees.
We call the vertex set of each such subtree an {\em $i$-segment} in $G$.
Thus, a subset of vertices $S \subseteq V(G)$ is an $i$-segment in $G$ if and only if for every $u,v \in S$, there are at least $i$ edge-disjoint paths between $u$ and $v$ in $G$ and $S$ is a maximal such subset.
It is obvious from the construction that the $i$-segments of $G$ form a partition of the vertex set of $G$,
  and can be computed in polynomial time.

Now let $G = (V, E)$ be an undirected graph, and $X \subseteq V(G)$.
For $i \in [p]$, let an \emph{$i$-block of $X$ in $G$} be a maximal subset $X' \subseteq X$
such that for every $u, v \in X'$, there are at least $i$ edge-disjoint paths between $u$ and $v$ in $G$.
We can use the Gomory-Hu tree to compute the $i$-blocks of $X$ in $G$, as follows.

\begin{lemma}
\label{lemma:i-block-gomory-hu-tree-connection}
Let $G = (V, E)$ be an undirected graph, and $X \subseteq V(G)$.
The $i$-blocks of $X$ in $G$ are precisely the sets $X \cap S$ over all $i$-segments $S$ of $G$. 
\end{lemma}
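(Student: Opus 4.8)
The plan is to show double inclusion between the family of $i$-blocks of $X$ in $G$ and the family $\{X \cap S : S \text{ an } i\text{-segment of } G\}$. The key fact I would use throughout is the characterization established just above the lemma: a set $S \subseteq V(G)$ is an $i$-segment if and only if every pair $u,v \in S$ has at least $i$ edge-disjoint paths between them in $G$ \emph{and} $S$ is a maximal such set; and dually, that the $i$-segments partition $V(G)$. I would also note the trivial but essential monotonicity: "having $\geq i$ edge-disjoint paths in $G$" is an equivalence relation on $V(G)$ whose classes are exactly the $i$-segments (maximality forces this), so two vertices lie in a common $i$-segment iff they have $\geq i$ edge-disjoint paths between them in $G$.

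First I would prove that every set of the form $X \cap S$, with $S$ an $i$-segment, is an $i$-block of $X$ (assuming it is nonempty; empty sets can be discarded on both sides). Membership: any $u,v \in X \cap S$ lie in the same $i$-segment, hence have $\geq i$ edge-disjoint paths in $G$, so $X \cap S$ is a candidate for an $i$-block. Maximality: suppose $X' \supseteq X \cap S$ with $X' \subseteq X$ and all pairs in $X'$ having $\geq i$ edge-disjoint paths in $G$. Pick any $w \in X'$; then for $u \in X \cap S$, $w$ and $u$ have $\geq i$ edge-disjoint paths, so $w$ lies in the same $i$-segment as $u$, namely $S$; thus $w \in X \cap S$ and $X' = X \cap S$.

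Conversely I would show every $i$-block $B$ of $X$ equals $X \cap S$ for some $i$-segment $S$. Since the $i$-segments partition $V(G)$, each vertex of $B$ lies in exactly one $i$-segment. If $B$ is empty the statement is degenerate; otherwise pick $u \in B$ and let $S$ be the $i$-segment containing it. For any other $v \in B$, the pair $u,v$ has $\geq i$ edge-disjoint paths in $G$ (definition of $i$-block), so $v$ lies in $S$ as well; hence $B \subseteq X \cap S$. But $X \cap S$ was just shown to be an $i$-block, and an $i$-block is by definition a \emph{maximal} such set, so $B \subseteq X \cap S$ together with both being $i$-blocks forces $B = X \cap S$.

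The argument is essentially bookkeeping once the equivalence-relation viewpoint is in place; the only mild subtlety, and the step I would be most careful about, is handling empty intersections $X \cap S$ so that the stated equality of \emph{families} is literally correct — I would simply remark that $i$-blocks are nonempty by convention (or restrict attention to nonempty members on both sides), after which the bijection between nonempty $i$-blocks and the $i$-segments meeting $X$ is immediate. No use of the Gomory-Hu tree structure itself is needed in the proof of this lemma beyond the already-established segment characterization; the tree only matters for the "computed in polynomial time" remark, which follows since the $i$-segments are computable in polynomial time and intersecting each with $X$ is trivial.
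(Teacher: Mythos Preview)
Your proposal is correct and takes essentially the same approach as the paper: both reduce to the key equivalence that two vertices lie in a common $i$-segment if and only if they have at least $i$ edge-disjoint paths between them in $G$, from which the identification of $i$-blocks of $X$ with the nonempty sets $X \cap S$ is immediate. The paper's proof is terser---it re-derives this equivalence directly from the Gomory-Hu tree definition and then simply says ``the statement follows''---whereas you cite the already-stated segment characterization and spell out the double inclusion and the empty-intersection caveat explicitly; but the underlying argument is the same.
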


\begin{proof}
  Let $T$ be the Gomory-Hu tree of $G$, and let $u, v \in X$ be distinct
  vertices. By the definition of a Gomory-Hu tree, $\lambda_{uv}(G) \geq i$ if and
  only if there is no edge $e$ on the path from $u$ to $v$ in $T$ with
  capacity $c(e)$ less than $i$.  Since this is also equivalent to $u$ and $v$ being
  in the same $i$-segment of $G$, the statement follows.
\end{proof}

%
%

Based on Lemma~\ref{lemma:i-block-gomory-hu-tree-connection}, we can compute the collection of $i$-blocks in polynomial time using a Gomory-Hu tree of $G$.


\begin{lemma}
\label{lemma:i-blocks-laminar-family}
Let $G$ be a graph, $X \subseteq V(G)$, and $p$ be a fixed integer.
Then the collection of all $i$-blocks of $X$ for all \bmp{$i \in [p]$} forms a laminar family.
\end{lemma}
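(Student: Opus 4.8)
The plan is to exploit the characterization from Lemma~\ref{lemma:i-block-gomory-hu-tree-connection}: for each $i \in [p]$, the $i$-blocks of $X$ are exactly the nonempty sets of the form $X \cap S$ where $S$ ranges over the $i$-segments of $G$. Recall that the $i$-segments of $G$ are the vertex sets of the connected components of $T - E_i$, where $T$ is a fixed Gomory-Hu tree of $G$ and $E_i$ is the set of tree edges of capacity at most $i-1$. Since $E_1 \subseteq E_2 \subseteq \cdots \subseteq E_p$ (an edge of capacity at most $i-1$ certainly has capacity at most $i'-1$ for $i' \geq i$), deleting a larger set of edges from the tree only refines the partition into components. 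Hence the partition of $V(G)$ into $i'$-segments refines the partition into $i$-segments whenever $i' \geq i$: every $i'$-segment is contained in a unique $i$-segment.

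With this in hand, the argument is short. Take two blocks $B$ and $B'$ from the collection, say $B$ is an $i$-block and $B'$ is an $i'$-block with (without loss of generality) $i \leq i'$. Write $B = X \cap S$ and $B' = X \cap S'$ for an $i$-segment $S$ and an $i'$-segment $S'$. By the refinement property, either $S' \subseteq S$ or $S' \cap S = \emptyset$ (since $S'$ sits inside exactly one $i$-segment, which is either $S$ or a different one disjoint from $S$). Intersecting with $X$ gives $B' \subseteq B$ in the first case and $B \cap B' = \emptyset$ in the second. Either way the laminarity condition from the definition in Section~\ref{sec:prelim} is satisfied, so the collection of all $i$-blocks over all $i \in [p]$ is a laminar family.

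I would also note, for completeness, that within a single value of $i$ the blocks $X \cap S$ over distinct $i$-segments $S$ are pairwise disjoint (they partition $X$), which is already covered by the disjointness case above; so no separate treatment is needed. The only slightly delicate point — and the one I would be most careful to state explicitly — is the monotonicity $E_i \subseteq E_{i'}$ for $i \leq i'$ and the resulting refinement of the segment partitions; everything else is immediate from Lemma~\ref{lemma:i-block-gomory-hu-tree-connection}. This is really the crux, but it is not hard: it is a one-line consequence of the definition of $E_i$ as the set of tree edges of capacity at most $i-1$.
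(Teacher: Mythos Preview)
Your proposal is correct and follows essentially the same approach as the paper: both arguments invoke Lemma~\ref{lemma:i-block-gomory-hu-tree-connection} to identify $i$-blocks with intersections of $X$ with $i$-segments, then use the monotonicity $E_i \subseteq E_j$ for $i<j$ to conclude that the $j$-segments refine the $i$-segments, from which laminarity of the blocks is immediate. Your write-up is a bit more explicit in unpacking the pairwise laminarity check, but the underlying argument is identical.
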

\begin{proof}
  Note first that for every $i$, the $i$-segments of $G$ form a partition of
  the vertex set, hence similarly, the $i$-blocks of $X$ form a partition of $X$
  for every $i \in [p]$. Furthermore, let $i, j \in [p]$ with $1 \leq i < j \leq p$. 
  It is obvious from the definition that the $j$-segments of $G$ form a refinement
  of the $i$-segments of $G$, since they   are formed from the Gomory-Hu tree
  by deleting an additional set of edges. 
  Hence the $j$-blocks of $X$ also form a refinement of the $i$-blocks of $X$ in $G$
  by Lemma~\ref{lemma:i-block-gomory-hu-tree-connection}, and the statement follows.
\end{proof}

\bmp{The approximation algorithm exploits the Gomory-Hu tree, which can be used to derive the $p$-blocks which capture the edge-connectivity of the graph.}
The proof of Lemma \ref{lemma:i-blocks-laminar-family} leads to Algorithm \ref{algo-compute-laminar-tree} which sets all $i$-blocks of $X$, $i\in [p]$, as nodes of a (laminar) tree. 
Algorithm \ref{algo-compute-laminar-tree} works as follows.
For all $i \in [p]$, it computes all the $i$-blocks of $X$ in $G$.
\bmp{Then, it sets $X$ as the root of the tree, and makes all $1$-blocks children of $X$.
After that, for every $1$-block $X$ and $2$-block $Y$, it sets $Y$ as a child of $X$ if $Y \subseteq X$.
It repeats this process for $2$-blocks, $3$-blocks, up to and including $p$-blocks (in this order).}

\begin{algorithm}[ht]
\label{algo-compute-laminar-tree}
	\caption{{${\sf LaminarTree}(G = (V, E), X, p)$}}
	\SetKwInOut{Input}{input}\SetKwInOut{Output}{output}
	\Input{$G = (V, E), X \subseteq V(G), p$}
	\Output{A laminar tree of $X$ in $G$}
	\emph{Initialize a tree $T := \emptyset$}\;
	\For{$i = 1,\ldots,p$}
	{
		\emph{Compute the set of all $i$-blocks of $X$ in $G$}\;
		\emph{$\cA_i \leftarrow$ the set of all $i$-blocks of $X$ in $G$}\;
	}
	\emph{Set $X$ as the root of $T$}\;
	\emph{Make all $1$-blocks the children of $X$}\;
	\For{$i=2,\ldots,p$}
	{
		\For{every $Y \in \cA_i$}
		{
			\emph{Make $Y$ a child of $W \in \cA_{i-1}$ in $T$ such that $Y \subseteq W$}\;
		}
	}
	\emph{Output $T$ as the laminar tree of $X$ in $G$}\;
\end{algorithm}


\begin{lemma}
	\label{lemma:improving-cut-size}
	Let $G$ be a graph, $u\in V(G)$ and let $A, B \subseteq V(G - u)$ such that $A \cap B = \emptyset$.
	If the size of a minimum $(A, B)$-cut in $G - u$ is $i$ and
	$\min\{|N(u) \cap A|, |N(u) \cap B|\} = j$ then the size of a minimum $(A, B)$-cut in $G$ is at least $i + j$.
\end{lemma}

\begin{proof}
	Consider a minimum $(A, B)$-cut $(S, T)$ in $G$; $A \subseteq S$ and $B \subseteq T$.
	Assume without loss of generality that $u \in S$.
	Suppose that the size of $(S, T)$ is at most $i + j - 1$.
	Consider $(S\setminus \{u\},T),$ which is a cut in $G-u.$
	Since $|N(u) \cap B| \geq j$, we have that the size of the cut $(S\setminus \{u\},T)$ is at most $i - 1$,
	but this contradicts the fact that the size of a minimum  $(A, B)$-cut in $G - u$ is $i$.
\end{proof}

Our approximation algorithm, Algorithm~\ref{approx-algo-p-edge-cvc}, proceeds as follows.
\begin{itemize}
	\item Let $L$ \bmp{denote} the set of vertices with degree less than $p$. \bmp{By Lemma~\ref{lemma:pvcvc-existence}, $G$ has a $p$-edge-connected vertex cover if and only if~$V(G) \setminus L$ is such a vertex cover. Hence if it is not, we output}
    that `$G$ has no feasible solution'.
	\item Compute a maximal matching $M$ of $G - N_G[L]$ and initialize $X = N_G(L) \cup V(M)$ (the vertices matched by $M$) and initialize $Y = X$ as the partial solution.
	\item Invoke Algorithm~\ref{algo-compute-laminar-tree} to construct the laminar tree $T$ of $X$ in $G[X]$.
	\item As long as $T$ has at least two leaves, \bmp{the fact that~$G$ has a $p$-edge-connected vertex cover implies that} there is \bmp{a vertex} $u \in V(G) \setminus (Y \cup L)$ whose neighborhood intersects two distinct $p$-blocks in $T$. We add $u$ into the \bmp{solution~$Y$} and recompute the laminar tree $T$ of $X$ in $G[Y]$.
	\item Output $Y$ as the solution.
\end{itemize}

\begin{algorithm}[ht]
\label{approx-algo-p-edge-cvc}
	\SetKwInOut{Input}{input}\SetKwInOut{Output}{output}
	\caption{Approximation algorithm for {\pcvc}}
	\Input{$G = (V, E)$}
	\Output{An approximate $p$-edge-connected vertex cover of $G$}
	\emph{$L \leftarrow \{v \in V(G) \mid \deg_G(v) < p\}$}\;
	\If{\bmp{$V(G) \setminus L$} is not a $p$-edge-connected vertex cover of $G$}
	{
		\emph{Output ``No feasible solution''}\;
	}
	
	\emph{Compute a maximal matching $M$ of $G - N_G[L]$}\;
	\emph{$X \leftarrow N_G(L) \cup V(M)$}\;
	\emph{$T \leftarrow {\sf LaminarTree}(G[X], X, p)$}\;
	\emph{$Y \leftarrow X$}\;
	\While{$T$ has at least two leaves}
	{
            \emph{Let $u \in V(G) \setminus (Y \cup L)$ s.t. $N_G(u)$ intersects two distinct $p$-blocks in $T$}\;
			\emph{$Y \leftarrow Y \cup \{u\}$}\;
			\emph{$T \leftarrow {\sf LaminarTree}(G[Y], X, p)$}\;
			}
	\emph{Output $Y$ as a solution}\;
\end{algorithm}

Our `recompute laminar tree of $X$ in $G[Y]$' ensures that at the end, every pair of vertices in $X$ \bmp{can be connected by $p$ edge-disjoint paths}.
Finally, based on the other characteristics, we are ensured that $G[Y]$ actually becomes $p$-edge-connected (proof as part of Theorem~\ref{approx-algo-p-edge-cvc} proof).

We are ready to prove the main result of this section, Theorem~\ref{thm:approx-algo-pecvc}, \bmp{which we restate for completeness.}

\ApproxAlgoPEdgeCVC*

\begin{proof}
We will show that Algorithm~\ref{approx-algo-p-edge-cvc} is a $2(p+1)$-approximation algorithm. 

Observe that the first output ``No feasible solution'' is correct due to Lemma \ref{lemma:pvcvc-existence}. 
Note $X=N_G(L) \cup V(M)$ is a vertex cover of $G.$ Suppose that  $T$ has more than one leaf and there is no vertex  $u\in V(G) \setminus (Y \cup L)$ such that 
$N_G(u)$ intersects two distinct $p$-blocks in $T.$ Then even adding all vertices of  $V(G) \setminus (Y \cup L)$ to $Y$ will not make $Y$ \bmp{into a}  $p$-edge-connected vertex cover of $G$ since there will be the same 
number of $p$-blocks of $X$ in $G[Y]$ before and after the addition because the vertices of $V(G) \setminus (Y \cup L)$ are not vertices of $M$ and thus form an independent set. However, this is impossible
as \bmp{$V(G) \setminus L$} is a $p$-edge-connected vertex cover of $G$. Thus, as long as $T$ has more than one leaf there is a vertex  $u\in V(G) \setminus (Y \cup L)$ such that $N_G(u)$ intersects two distinct $p$-blocks in $T$.

When $T$ has just one leaf, $G$ has only one $p$-block of $X$ in $G[Y]$. This means that for every pair $x,y$ of vertices in $X$ there are $p$ edge-disjoint paths in $G[Y]$ between $x$ and $y$. Let $u,v\in Y\setminus X$. Since $|N_G(u)|\ge p$ and $|N_G(v)|\ge p$ by Menger's theorem, there are $p$ edge-disjoint paths in $G[Y]$ between $N_G(u)$ and $N_G(v)$ with distinct end-vertices and hence $p$ edge-disjoint paths in $G[Y]$ between $u$ and $v.$ Similarly, we can see that there are $p$ edge-disjoint paths in $G[Y]$ between $u$ and any $x\in X.$ Therefore, $Y$ is a $p$-edge-connected vertex cover of $G$.

Let us analyze how $T$ changes after $u\in V(G) \setminus (Y \cup L)$ is added to $Y.$ First we consider $T$  before $u$ is added to $Y$.
By the description of Algorithm~\ref{approx-algo-p-edge-cvc}, $u$ has neighbors in two distinct $p$-blocks $X_1, X_2$ of $X$ in $G[Y]$.
Let $X_0$ be the least common ancestor of $X_1$ and $X_2$ in $T$ and let $X_0$ be an $i$-block of $X$ in $G[Y]$. 
 Observe that $X_0$ has two children $X_a$ and $X_b$ such that $X_1\subseteq X_a$ and $X_2\subseteq X_b.$ Note that $X_a$ and $X_b$ are $(i+1)$-blocks of $X$ in $G[Y]$
 and the minimum size of a $(X_a,X_b)$-cut is $i$ (otherwise, $X_0$ is not  an $i$-block of $X$ in $G[Y]$). Now consider what happens just after $u$ is added to $Y$.
By Lemma~\ref{lemma:improving-cut-size}, the size of any $(X_a, X_b)$-cut increases by at least one. Thus, the minimum size of a $(X_a,X_b)$-cut becomes $i+1$ and so $X_a$ and $X_b$ become part of a 
new $(i+1)$-block of $X$ in $G[Y].$ Thus, the number of the nodes of $T$ decreases. 

Let us now bound the approximation factor of Algorithm~\ref{approx-algo-p-edge-cvc}. The number of leaves in $T$ becomes one only when there is just one node on each level of $T$, i.e., $T$ has $p+1$ vertices.
Initially, $T$ may have up to $p|X|+1$ nodes. Thus, at most $p(|X|-1)$ nodes will be added to $X$ before a solution $Y$ is obtained. Hence, $|Y|\le |X|+p(|X|-1)\le (p+1)|X|.$ Since $M$ is a maximal matching of $G - N_G(L)$, at least one endpoint of each edge of $M$ has to be in any vertex cover of $G$. Thus, $\OPT(G) \geq |N_G(L)| + |M|$, where $\OPT(G)$ is the minimum number of vertices in a $p$-edge-connected vertex cover of $G.$ Since $|X|= 2|M|+ |N_G(L)|,$ $|X|\le 2\OPT(G).$ Therefore, $|Y|\le (p+1)|X|\le 2(p+1)\OPT(G).$
\end{proof}

\section{Hardness proofs}\label{sec:nopolykernel}

We will first show that $p$-{\CVC} admits no polynomial kernel unless {\nka} using the fact that  {\CVC} admits no polynomial kernel unless {\nka} \cite{DomLS14}.

\begin{theorem}
\label{thm:pcvc-lowerbound}
\bmp{For every fixed~$p\geq 1$,} {\PVCVC} is NP-hard and does not admit a polynomial kernel \bmp{parameterized by the solution size~$k$} unless \rm{\nka}.
\end{theorem}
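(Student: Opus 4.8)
\emph{The plan is} to exhibit a polynomial parameter transformation (Definition~\ref{defn:PPT}) from \CVC{} to \PVCVC{} and then invoke Proposition~\ref{prop:basic-kernel-lower-bound}, using that \CVC{} is NP-complete and, unless~\nka, admits no polynomial kernel~\cite{DomLS14}. Note first that \PVCVC{} lies in NP: a vertex set of size at most~$k$ is a polynomial-size certificate, and since~$p$ is a fixed constant one can check in polynomial time that it is a vertex cover and that it induces a $p$-connected subgraph (e.g.\ by testing all vertex subsets of size less than~$p$, or via max-flow). It therefore remains to build the transformation. We describe it for~$p\ge 2$; for~$p=1$ the problem coincides with \CVC{} except on trivial instances (where an optimal connected vertex cover consists of a single vertex), so the claim is inherited directly.

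Given an instance~$(G,k)$ of \CVC, we first resolve in polynomial time the cases where~$G$ is disconnected (a no-instance when~$G$ has an edge), where~$G$ has no edge, where~$k\le 1$, or where~$k\ge |V(G)|-1$ (each easily seen to be trivial, e.g.\ a connected graph on at least two vertices always has a non-cut vertex), outputting a fixed yes- or no-instance of \PVCVC{} accordingly. Otherwise we may assume that~$G$ is connected, has an edge, is not a star, and~$2\le k\le |V(G)|-2$; in particular every connected vertex cover of~$G$ has at least two vertices. We now construct~$G'$ from~$G$ by adding a set~$W=\{w_1,\dots,w_{p-1}\}$ of~$p-1$ fresh vertices, turning~$W$ into a clique, and joining every~$w_i$ to every vertex of~$V(G)$; the new parameter is~$k'=k+p-1$. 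This is computable in polynomial time and~$k'$ is linear in~$k$, as required for a polynomial parameter transformation.

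For correctness, consider first a connected vertex cover~$C$ of~$G$ with~$|C|\le k$. Then~$S'=C\cup W$ is a vertex cover of~$G'$ (the edges of~$G$ are covered by~$C$, all edges incident to~$W$ by~$W$), and~$|S'|=|C|+p-1\le k'$. Pick an edge~$c_1c_2$ of the connected graph~$G[C]$ and list~$C=c_1,c_2,c_3,\dots$ so that each later vertex has a neighbour among the earlier ones; then~$G'[W\cup\{c_1,c_2\}]$ is a clique on~$p+1$ vertices, hence $p$-connected, and each subsequent~$c_i$ is adjacent to all~$p-1$ vertices of~$W$ and to at least one earlier~$C$-vertex, i.e.\ to at least~$p$ already-present vertices, so~$G'[S']$ is $p$-connected by repeated application of Lemma~\ref{lem:exp}. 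Conversely, let~$S'$ be a $p$-connected vertex cover of~$G'$ with~$|S'|\le k'$. If some~$w_i\notin S'$, then all neighbours of~$w_i$ — the~$p-2$ vertices of~$W\setminus\{w_i\}$ together with all of~$V(G)$ — must lie in~$S'$, whence~$|S'|\ge (p-2)+|V(G)| > k+p-1 = k'$ using~$k\le |V(G)|-2$, a contradiction; thus~$W\subseteq S'$. Now~$S''=S'\setminus W\subseteq V(G)$ satisfies~$|S''|\le k$, is a vertex cover of~$G$, and~$G[S'']=G'[S']-W$ is connected because~$G'[S']$ is $p$-connected while~$|W|=p-1<p$ (and~$G'$ adds no edges within~$V(G)$). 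So~$S''$ is a connected vertex cover of~$G$ of size at most~$k$, which completes the equivalence. The transformation is in particular a polynomial-time many-one reduction from the NP-hard problem \CVC, so \PVCVC{} is NP-hard, and the kernelization lower bound follows from Proposition~\ref{prop:basic-kernel-lower-bound}.

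The one subtle point — and the reason for adding~$p-1$ booster vertices rather than a single universal one — is that the recovered set~$S''=S'\cap V(G)$ must be connected \emph{in~$G$}, not merely in~$G'$: deleting all of~$W$ (fewer than~$p$ vertices) from the $p$-connected graph~$G'[S']$ is guaranteed to leave a connected graph, and since~$G'$ creates no edges inside~$V(G)$ this graph is exactly~$G[S'']$. The remaining work, namely the enumeration of the degenerate instances handled up front, is routine.
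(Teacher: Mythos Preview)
Your reduction is essentially the paper's own (add $p-1$ universal vertices and raise the budget by $p-1$), and your correctness argument is far more detailed than the paper's one-line assertion of equivalence. One small slip: after preprocessing you assert that $G$ ``is not a star,'' but none of the listed trivial cases (disconnected, edgeless, $k\le 1$, $k\ge |V(G)|-1$) actually rules out stars --- a star $K_{1,n}$ with $n\ge 4$ and $k=2$ survives all four tests. This matters only because you then pick an edge $c_1c_2$ of $G[C]$, which fails when $|C|=1$; the fix is trivial (pad $C$ with any neighbour, valid since $k\ge 2$, or add stars explicitly to the preprocessing). Your choice to make $W$ a clique, which the paper's construction omits, is in fact necessary for the forward direction: without those edges, when $|C|<p-1$ one can delete all of $C$ together with some of the $w_i$ (still fewer than $p$ deletions) and leave two nonadjacent booster vertices, so $G'[C\cup W]$ would not be $p$-connected.
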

\begin{proof}
Let $(G,k)$ be an instance of {\CVC}.
We construct an instance of $p$-{\CVC} as follows.
We add $p-1$ new vertices $v_1,\ldots,v_{p-1}$ to $(G, k)$ and edges $\{uv_i | i \in [p-1], u \in V(G)\}$ obtaining a new graph $G'$.
Observe that $G'$ can be computed in polynomial time.
Furthermore, $(G,k)$ is a yes-instance of {\CVC} if and only if $(G',k+p-1)$ is a yes-instance of $p$-{\CVC}.

{This reduction runs in polynomial time and $p$ is a constant. Hence, it satisfies the conditions of Definition~\ref{defn:PPT}. Thus, this it is a polynomial parameter transformation from {\CVC} to {\PVCVC}. As {\CVC} does not admit a polynomial kernel unless {\nka}, due to Proposition~\ref{prop:basic-kernel-lower-bound}, neither does {\PVCVC} unless {\nka}. This completes the proof of the kernel lower bound. Since {\CVC} is NP-hard, this also proves NP-hardness of the problem.}
\end{proof}

The reduction in the next theorem is from {\sc Red Blue Dominating Set}. In the problem, given a bipartite graph $G$ with partite sets $R$ and $B$, we are to decide whether there is $R'\subseteq R$ such that $|R'| \leq k$ and $N_G(R')=B$.
Dom et al. \cite{DomLS14} proved that {\sc Red Blue Dominating Set} parameterized by $k + |B|$ does not admit a polynomial kernel unless {\nka}.

\begin{theorem}
\label{thm:pecvc-lowerbound}
\bmp{For every fixed~$p\geq 1$,} {\PECVC} is NP-hard and does not admit a polynomial kernel \bmp{parameterized by the solution size~$k$} unless \rm{\nka}.
\end{theorem}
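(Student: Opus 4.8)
The plan is to give a polynomial parameter transformation from {\sc Red Blue Dominating Set} (RBDS) parameterized by $k+|B|$ to {\PECVC} parameterized by~$k$, and then to invoke Proposition~\ref{prop:basic-kernel-lower-bound} together with the NP-hardness of RBDS and the lower bound of Dom et al.~\cite{DomLS14}. As preliminaries I would assume $p \geq 2$ (for $p=1$, {\PECVC} is essentially {\CVC}, for which the statement is already known~\cite{DomLS14}), and that the RBDS instance $(G,R,B,k)$ is nontrivial, i.e.\ $B \neq \emptyset$ and every vertex of~$B$ has a neighbour in~$R$; otherwise the answer can be read off directly.

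Given $(G,R,B,k)$, I would build $G'$ as follows: keep $R$, $B$, and all edges of~$G$; attach to every $b \in B$ a private pendant vertex $p_b$; introduce $p$ \emph{hub} vertices $h_1,\ldots,h_p$ forming a clique, attach to every $h_i$ a private pendant vertex $q_i$, and make every hub adjacent to every vertex of~$R$; finally — this is the crucial gadget — make each of $h_1,\ldots,h_{p-1}$ adjacent to every vertex of~$B$, so that $h_p$ is the only hub not joined to~$B$. Set $k' := |B| + p + k$, which is polynomial in $k+|B|$ since $p$ is a fixed constant, and note that $G'$ is computable in polynomial time and that {\PECVC} is in NP.

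For correctness I would argue both directions. Since $p \geq 2$, no pendant vertex can lie in a $p$-edge-connected induced subgraph (its minimum degree is at least $p \geq 2$), so any $p$-edge-connected vertex cover~$S$ of~$G'$ avoids all $p_b$ and all $q_i$; covering the pendant edges then forces $B \cup \{h_1,\ldots,h_p\} \subseteq S$, hence $S = B \cup \{h_1,\ldots,h_p\} \cup R'$ with $R' := S \cap R$ and $|R'| \leq k' - |B| - p = k$. As $G'[S]$ is $p$-edge-connected, every $b \in B$ has degree at least~$p$ in $G'[S]$, and its neighbours there are precisely $N_G(b) \cap R'$ together with $h_1,\ldots,h_{p-1}$, so $|N_G(b) \cap R'| \geq 1$; thus $R'$ dominates~$B$. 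Conversely, given a red-blue dominating set $R'$ with $|R'| \leq k$, I would show that $S := B \cup \{h_1,\ldots,h_p\} \cup R'$ is a $p$-edge-connected vertex cover of size at most~$k'$: it is a vertex cover because~$B$ covers all edges of~$G$ and all $b$-pendant edges, while the hubs cover all hub-incident edges; and it is $p$-edge-connected because, picking any $r_0 \in R' \neq \emptyset$, the set $\{h_1,\ldots,h_p,r_0\}$ induces a clique on $p+1$ vertices (hence a $p$-edge-connected graph), after which one adds the remaining vertices of~$R'$ (each adjacent to all~$p$ hubs) and then the vertices of~$B$ (each adjacent to $h_1,\ldots,h_{p-1}$ and to at least one vertex of~$R'$, hence to at least~$p$ already-present vertices) one at a time, applying Lemma~\ref{lem:exp} at each step.

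With the transformation in hand, NP-hardness of {\PECVC} follows since it subsumes a polynomial-time reduction from NP-hard RBDS, and the kernelization lower bound follows from Proposition~\ref{prop:basic-kernel-lower-bound} using that RBDS (as a classical problem) is NP-complete and, parameterized by $k+|B|$, admits no polynomial kernel unless \nka~\cite{DomLS14}, while $k'$ is polynomial in $k+|B|$. The step needing the most care is the $p$-edge-connectivity analysis of $G'[S]$: one must pad each blue vertex with \emph{exactly} $p-1$ guaranteed neighbours (the hubs $h_1,\ldots,h_{p-1}$), so that the single remaining required edge encodes the domination constraint, and one must deal with the fact that the hub clique~$K_p$ by itself is only $(p-1)$-edge-connected — which is why a red vertex is absorbed first to obtain a clique on $p+1$ vertices before Lemma~\ref{lem:exp} becomes applicable.
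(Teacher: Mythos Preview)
Your proposal is correct and takes a genuinely different, more economical route than the paper. The paper's reduction replaces each blue vertex $b^j$ by a clique $K_p$ on vertices $\{b^j_1,\ldots,b^j_p\}$ (so $|\hat B|=p|B|$), attaches pendants to every vertex of $\hat B\cup A$, sets $k'=k+p(|B|+1)$, and then verifies $p$-edge-connectivity of the intended solution via a six-case Menger-style analysis over pairs of vertices. Your construction keeps $B$ intact but wires $p-1$ of the $p$ hubs directly to every $b\in B$, so that each blue vertex already has $p{-}1$ guaranteed neighbours in any forced solution and needs exactly one red neighbour to reach degree~$p$. This lets you establish $p$-edge-connectivity by a short inductive application of Lemma~\ref{lem:exp} (start from the $(p{+}1)$-clique on the hubs together with one red vertex, then absorb the remaining red vertices and finally the blue vertices), avoiding the case analysis entirely and giving a smaller parameter $k'=|B|+p+k$. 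Both reductions yield the same lower bound; yours is shorter and cleaner, while the paper's avoids the slight asymmetry between $h_p$ and the other hubs at the cost of the blow-up of~$B$ and the longer connectivity verification.
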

\begin{proof}
\bmp{Since $p$-edge- and $p$-vertex-connectivity are equivalent for~$p=1$, the lower bound for~$p=1$ follows from Theorem~\ref{thm:pcvc-lowerbound}. In the remainder we consider an arbitrary~$p \geq 2$.}
Let $(G = (R \uplus B, E), k + |B|)$ be an instance of {\sc Red Blue Dominating Set} and let $B=\{b^1,b^2,\dots ,b^t\}.$
Without loss of generality, we may assume that for every $v \in R$, there exists $u \in B$ such that $uv \in E(G)$ (otherwise we can just delete $v$).
Similarly, we may assume that for every $v \in B$, there exists $u \in R$ such that $uv \in E(G)$ (otherwise, there is no feasible solution).
We will also assume that $t \ge p$ and $k \ge p$, {as the instance can otherwise be solved in polynomial time.} 
Construct  a new graph $H$ from $G$ as follows.
\begin{itemize}
	\item  Add a complete graph $K_p$ with vertex set $A=\{a_1,\ldots,a_{p}\}$ such that $V(G)\cap A=\emptyset$, and edges $a_i r$ for every $i\in [p]$ and $r\in R$. 
	\item Replace every vertex $b^j$ of $B$ by a complete graph $K_p$ with vertices $\{b^j_1,\dots ,b^j_p\}$ such that if $r b^j\in E(G)$ then $r b^j_i\in E(H)$ for every $i\in [p].$ 
	Thus, $B$ is replaced by $\hat{B}$ of size $pt.$
	\item Attach a pendant vertex to every vertex in $\hat{B} \cup A$.
	\item Set $k' = k + p(t+1)$.
	Note that since $p$ is a fixed constant, we have that $k'$ is $\cO(k + t)$. 
	\end{itemize}

To complete the proof, it suffices to prove that $(G, k+t)$ is a yes-instance of {\sc Red Blue Dominating Set} if and only if $(H, k')$ is a yes-instance of  {\PECVC}.

$(\Leftarrow)$ Let $S$ be a $p$-edge-connected vertex cover of $H$ such that $|S| \leq k'$.
Observe that $\hat{B} \cup A \subseteq S$ since all of the vertices in $\hat{B} \cup A$ have pendant neighbors.
Since every vertex in $\hat{B}$ has just $p-1$ neighbors in $\hat{B}$ and $G[S]$ is $p$-edge-connected, $S$ must contain at least one neighbor in $R$ of every 
vertex of $\hat{B}.$ Since $|\hat{B} \cup A| = p(t+1)$, $k'=|\hat{B} \cup A| +k$ and 
so there must be a subset $R'$ of $R$ of size at most $k$ such that 
$N_G(R')=B$. 
Hence, $(G, k)$ is  a yes-instance of {\sc Red Blue Dominating Set}.

$(\Rightarrow)$ 
Let $R^* \subseteq R$ with $|R^*| = k$ such that $N_G(R^*) = B$. Let $R^*=\{r_1,\dots ,r_k\}$, recall that $k\ge p.$
We claim that $A \cup R^* \cup \hat{B}$ is a $p$-edge-connected vertex cover of $H$. Clearly, it is a vertex cover, so
it remains to prove that $H[A \cup R^* \cup \hat{B}]$ is $p$-edge-connected. Let $u,v$ be vertices of $A \cup R^* \cup \hat{B}$.
It suffices to prove that there are $p$ edge-disjoint paths between $u$ and $v$ for every choice of $u$ and $v$. Subject to symmetry,
it suffices to consider six cases: 
\begin{align*}
u, v \in A;  u, v \in R^*; u, v \in \hat{B}; \\
u \in A, v \in R^*; u \in A, v \in \hat{B}; u \in R^*, v \in \hat{B}.
\end{align*}
Below we will consider these cases one by one.
\begin{description}
\item[$u, v \in A$.]  Without loss of generality, let $u=a_1$ and $v=a_2.$ Then $a_1a_2$, $a_1a_qa_2$, $3\le q\le p$ and $a_1r_1a_2$ are $p$ edge-disjoint paths between $u$ and $v.$
\item[$u, v \in R^*.$] Without loss of generality, let $u=r_1$ and $v=r_2$. 
Then $r_1a_ir_2$, $i\in [p]$ are $p$ edge-disjoint paths between $u$ and $v.$
\item[$u, v \in \hat{B}.$] We will first consider the subcase when $u,v$ are from the same clique in $\hat{B}$.  Without loss of generality, let $u=b^1_1$, $v=b^1_2$  and $r_1 b^1\in E(G)$. 
Then $b^1_1 b^1_2$, $b^1_1 b^1_q b^1_2, 3\le q\le p$ and $b^1_1 r_1 b^1_2$ are $p$ edge-disjoint paths between $u$ and $v.$ 

Now consider the subcase when $u,v$ are from  different cliques in $\hat{B}$.
Without loss of generality, let $u=b^1_1$, $v=b^2_1.$ 
For every $i\in [p],$ let $P_i=r_j$ if $b^1$ and $b^2$ are adjacent to a common vertex $r_j$ in $G$ and $P_i=r_{i'} a_i r_{i''}$ otherwise, where $r_{i'}b^1,r_{i''}b^2\in E(G).$ 
Then $b^1_1 P_1 b^2_1$, $b^1_1 b^1_i P_i b^2_i b^2_1$, $2\le i\le p$ are $p$ edge-disjoint paths between $u$ and $v.$

\item[$u\in A, v \in R^*$.] There are $p$ edge-disjoint paths between $u$ and $v$ since $A\cup \{v\}$ forms a clique with $p+1$ vertices.

\item[$u\in A, v \in \hat{B}$.] Without loss of generality, let $u=a_1$, $v=b^1_1$ and $r_1 b^1 \in E(G).$ Then $a_1 r_1 b^1_1$, $a_1a_ir_1b^1_ib^1_1$ ($2\le i\le p$) are $p$ edge-disjoint paths between $u$ and $v.$

\item[$u\in R^*, v \in \hat{B}$.] Without loss of generality, let $u=r_1$ and $v=b^1_1$. We will first consider the subcase when $r_1b^1\in E(G).$ Then $Q_1=r_1b^1_1$, $Q_i=r_1 b^1_i b^1_1$ ($2\le i\le p$) are $p$ edge-disjoint paths between $u$ and $v.$
Now consider the subcase when $r_j b^1\in E(G)$ for some $j>1.$ Then $r_1 a_i r_jQ'_i$ ($ i\in [p]$) are $p$ edge-disjoint paths between $u$ and $v,$ where $Q'_i=Q_i-r_1$ and $Q_i$ is defined in the previous subcase.
\end{description}

{This reduction runs in polynomial time. As $p$ is a constant, this reduction provides a polynomial parameter transformation. 
As {\sc Red Blue Dominating Set} parameterized by $k + |B|$ does not admit a polynomial kernel \bmp{unless {\nka}, by Proposition~\ref{prop:basic-kernel-lower-bound} {\PECVC} does not admit a polynomial kernel unless {\nka}}. \bmp{As {\sc Red Blue Dominating Set} is well-known to be NP-hard (cf.~\cite[\S 4.1]{DomLS14}), the NP-hardness result follows as well.}}
\end{proof}

\section{Conclusions}
\label{sec:conc}
 We \bmp{presented} time efficient polynomial sized approximate kernelization schemes (PSAKS) for both {\PECVC} and {\PVCVC}.
 We also \bmp{gave} a $\cO^*(2^{\cO(pk)})$ time algorithm for {\PECVC}.
 The approach we use in this FPT algorithm does not work for $p$-vertex-connectivity. 
 Hence, an interesting open problem would be to \bmp{determine whether there exists} a singly exponential FPT algorithm for {\PVCVC}.
 
 We also \bmp{obtained} a polynomial-time  $2(p+1)$-factor approximation algorithm for {\PECVC}.
 \bmp{Again, the main idea used in this algorithm does not work for {\PVCVC}, which appears to be the harder variant of the two.}
 \bmp{We mention without proof that it is possible to use the decomposition into biconnected components to derive a constant-factor approximation algorithm for {\BCVC} ($p=2$)} using an approach similar to {\PECVC}. However, it is currently unknown how to get such a result for {\PVCVC} for any arbitrary fixed $p \geq 3$.
 \bmp{Finally, another interesting open problem would be to obtain a size efficient PSAKS for {\PECVC} and for {\PVCVC}, but this is also open for {\CVC}.}


\end{document}